\documentclass{article}
\usepackage[utf8]{inputenc}
\usepackage[english]{babel}
\usepackage[pdfusetitle, colorlinks=true, linktoc=page, allcolors=blue]{hyperref}
\usepackage{inconsolata}
\usepackage{graphicx}
\usepackage{subcaption}
\usepackage{listings}
\usepackage{xcolor}
\usepackage{indentfirst}
\usepackage[numbers]{natbib}
\usepackage{amsmath,amssymb, amsthm}
\usepackage{multirow}
\usepackage{threeparttable}
\usepackage{makecell}
\usepackage{caption}
\usepackage{enumitem}
\usepackage{array}
\usepackage[left=2cm, right=2cm,
top=2cm,bottom=2cm,bindingoffset=0cm]{geometry}
\newtheorem{theorem}{Theorem}

\newtheorem{lemma}{Lemma}

\newtheorem*{corollary*}{Corollary}
\theoremstyle{definition}
\newtheorem{definition}{Definition}
\newtheorem{clproof}{Proof of Controllability Condition}

\renewcommand{\Re}{{\rm Re}}
\renewcommand{\Im}{{\rm Im}} 
\newcommand{\D}{{\rm D}} 
\newcommand{\ad}{{\rm \operatorname{ad}}} 

\begin{document}
\title{A complete classification of operator controllability for closed three-level quantum systems driven by a single control}
\author{
{Sergey Kuznetsov}$^{1}$ and {Alexander Pechen}$^{1}$\\[0.5em]
$^{1}$ Department of Mathematical Methods for Quantum Technologies \\
Steklov Mathematical Institute of Russian Academy of Sciences \\
Moscow, 119333 Russia
}

\date{}

\maketitle

% ABSTRACT
\begin{abstract}
In this work, we study operator controllability of finite-dimensional closed quantum systems driven by a single coherent control. Two general Lie-algebraic results on controllability of such systems are proved. The first gives sufficient conditions under which controllability of the system is preserved when controlled energy-level shifts are added. The second gives a sufficient basis-element condition under which the dynamical Lie algebra contains $\mathfrak{su}(N)$ and the system is therefore operator controllable. These results, together with known controllability conditions, are applied to obtain a complete classification of operator controllability of all three-level quantum systems in terms of their free and interaction Hamiltonian matrix elements. This classification covers all combinations of energy-level and level-spacing degeneracies, two or three allowed transitions, as well as the presence or absence of controlled energy-level shifts, and gives necessary and sufficient conditions for operator controllability in each case.
\end{abstract} 

% SECTION 1
\section{Introduction}
Quantum control concerns the manipulation of quantum systems to achieve prescribed dynamical objectives. It has broad applications in quantum technologies, including quantum computing and information processing, atomic and molecular physics and chemistry, spectroscopy and magnetic resonance, quantum sensing, and metrology. The field has attracted considerable attention because it addresses fundamental limits on the control of quantum dynamics and is central to the continuing development of quantum technologies~\cite{CTLSCC_bib:RiceZhao2000, CTLSCC_bib:Tannor2007, CTLSCC_bib:BrifChakrabartiRabitz2010, CTLSCC_bib:ShapiroBrumer2012, CTLSCC_bib:AlessandroBook, CTLSCC_bib:KochrRewiev2022, CTLSCC_bib:Kuprov2023}.

Controllability is a central question in quantum control for both closed and open quantum systems. It asks whether the available controls can steer a quantum system from a given initial state to a desired target state or, more generally, between prescribed sets of states. It is fundamental both theoretically and practically, since controllability must be established before one attempts to synthesize controls. If a desired target is unreachable, then the control objective, the available controls, or the system design must be modified to overcome the limitation. Controllability is therefore a key prerequisite for the analysis and design of controlled quantum dynamics~\cite{CTLSCC_bib:AlessandroBook, CTLSCC_bib:DirrHelmke2008, CTLSCC_bib:DongPetersen2010, Altafini_Ticozzi_2012}.

Controllability has been investigated in various settings for closed and open systems. Particular attention has been devoted to the investigation of operator, or complete, controllability of finite-dimensional closed quantum systems. Here, one considers an $ N $--level system without dissipation evolving under coherent control. The system is called operator controllable when the available interaction can generate any unitary evolution up to a physically irrelevant global phase. This setting is especially useful because the problem reduces to controllability on the compact Lie group $ \mathrm{U}(N) $. The Lie Algebra Rank Condition (LARC)~\cite{CTLSCC_bib:SussmannJurdjevic1972} can then be applied and, because the Lie group $ \mathrm{U}(N) $ is compact, provides a necessary and sufficient condition for controllability. The geometric and systems-theoretic foundations of this approach were developed in early work on quantum-mechanical controllability \cite{CTLSCC_bib:Huang1983, CTLSCC_bib:Ramakrishna1995}, while the relations between different notions of controllability for multilevel quantum systems were clarified in \cite{CTLSCC_bib:AlbertiniAlessandro2003}. Constructive methods for obtaining controls once Lie-algebraic controllability has been established were considered, for example, in \cite{CTLSCC_bib:DAlessandro2009}. General expositions of modeling, controllability, and control synthesis for closed and open quantum systems can be found in \cite{CTLSCC_bib:AlessandroBook, Altafini_Ticozzi_2012}.

The LARC is a complete criterion, but it is often too implicit for structural analysis. A major goal is therefore to obtain more explicit controllability conditions in terms of the Hamiltonian matrix elements. Such criteria were derived for several classes of systems: in particular, sufficient conditions for finite-dimensional quantum systems driven by a single control field with dipole-type interaction were obtained \cite{CTLSCC_bib:SchirmerSolomonFu2001, Fu_Schirmer_Solomon_2001}; criteria for three-level systems with two allowed transitions between levels (i.e., when the interaction Hamiltonian has two nonzero independent transition matrix elements) and without controlled energy-level shifts (i.e., when the interaction Hamiltonian has only off-diagonal matrix elements) were described~\cite{CTLSCC_bib:KuznetsovPechen2023}; some four-level systems and corresponding controllability conditions were also considered \cite{CTLSCC_bib:MyachkovaPechen2023, CTLSCC_bib:KuznetsovPechen2022}. General conditions formulated in terms of energy-level structure and allowed transitions were established \cite{CTLSCC_bib:TuriniciRabitz2001, CTLSCC_bib:Altafini2002, CTLSCC_bib:TuriniciRabitz2003}. The Lie-algebraic symmetry-based approach to the classification of uncontrollable systems was developed~\cite{CTLSCC_bib:PolackSuchowskiTannor2009}. The role of degeneracies was studied, in particular, for systems with a twofold energy-level degeneracy \cite{Zhang_Fu_2010}. Symmetry principles, dynamical Lie algebras (i.e., Lie algebras generated by the free and interaction Hamiltonians of the controlled system), and operator controllability were studied~\cite{Zeier_Schulte-Herbruggen_2011}. The identification of dynamical Lie algebras for finite-dimensional closed quantum systems, including cases where the generated algebra is a proper subalgebra of $\mathfrak{su}(N)$, was addressed \cite{Schirmer_Pullen_Solomon_2002}. Closely related general controllability questions for invariant systems on Lie groups were considered~\cite{Sachkov_2000}. Controllability of open quantum systems has also been studied; for an analysis of an open two-level system, see~\cite{CTLSCC_bib:LokutsievskiyPechen1} and the references therein.

Although general controllability criteria do exist and controllability of some classes of three-level quantum systems has been investigated, a complete classification of operator controllability for three-level systems in terms of the matrix elements of their free and interaction Hamiltonians appears to be missing. In this work, we fill this gap by providing this classification. For this, we prove two new general Lie-algebraic controllability results. The first gives sufficient conditions for a controllable system to preserve its controllability after adding controlled energy-level shifts --- in general, a nontrivial feature, since adding control over the energy levels to a controllable system in some cases can break its controllability. The second result gives a convenient sufficient basis-element condition under which the dynamical Lie algebra contains $\mathfrak{su}(N)$ and the system is therefore operator controllable. Combining these results with known controllability criteria, we provide a complete classification of operator controllability for three-level systems, covering all combinations of energy-level degeneracies, level-spacing degeneracies, two or three allowed transitions, as well as the presence or absence of controlled energy-level shifts, and providing necessary and sufficient conditions for operator controllability for each case.

The article is organized as follows. Section~\ref{CTLSCC_sec:controllabilitynotions} introduces the notion of operator controllability and recalls the known Lie-algebraic controllability conditions for finite-dimensional closed quantum systems that are required for our analysis. In Section~\ref{CTLSCC_sec:mygeneralresults}, we establish two new results for finite-level quantum systems: a condition ensuring the preservation of controllability when controlled energy-level shifts are added to a controllable system (Subsection~\ref{CTLSCC_subsec:shiftsstatement}) and a sufficient basis-element condition under which the dynamical Lie algebra contains $\mathfrak{su}(N)$  (Subsection~\ref{CTLSCC_subsec:algebrastatement}). Section~\ref{CTLSCC_sec:threelevelsystems} provides a complete operator controllability classification of closed three-level systems driven by a single coherent control. In particular, Subsection~\ref{CTLSCC_subsec:systemtypes} describes general classes of controlled closed three-level quantum systems. Subsections~\ref{CTLSCC_subsec:twotransitions} and~\ref{CTLSCC_subsec:threetransitions} present controllability conditions for systems with two and three allowed transitions, respectively. Subsection~\ref{CTLSCC_subsec:highlight} illustrates the role of all conditions in the theorem on the preservation of controllability under the addition of controlled energy-level shifts. Subsection~\ref{CTLSCC_subsec:finalclassification} summarizes the resulting controllability classification of three-level systems. Finally, Section~\ref{CTLSCC_sec:conclusion} concludes the article.

% SECTION 2
\section{Controllability notion and known Lie-algebraic criteria}\label{CTLSCC_sec:controllabilitynotions}

In this section, we formulate the operator controllability problem for finite-dimensional closed quantum systems driven by a single coherent control and recall the Lie algebra rank condition for the controllability of such systems. Then we introduce the Hamiltonian matrix-element notation, the skew-Hermitian basis and its commutation relations, and the graph-based structural properties used throughout the paper. Finally, we summarize the known Lie-algebraic controllability conditions which are used for the subsequent classification.

The dynamics of an $ N $-dimensional closed quantum system driven by a single coherent control is governed by the Schr\"odinger equation for the unitary \text{evolution operator} $ U_{f}(t) \in \mathbf{U}(N) $
\begin{equation}\label{CTLSCC_eq:dynamicssingle}
        i \frac{dU_{f}(t)}{dt}  = \left(H + f(t) \, V\right) U_{f}(t), \quad U_f(0)=\mathbb{I},
\end{equation}
where $ H = H^{\dagger} $ and $ V = V^{\dagger} $ are the \textit{free} (or \textit{drift}) and  the \textit{interaction} (or \textit{control}) parts of the system's Hamiltonian, respectively, and $ f(t) \in \mathcal{U}([0,T], \mathbb{R}) $ is a \textit{control function}. We assume that the free Hamiltonian $ H $ is not proportional to the identity matrix $ \mathbb{I} $, because such a free Hamiltonian would produce trivial quantum control dynamics. The eigenvalues of $H$, denoted by $ E_1, \, \dots, \, E_N $, are (possibly degenerate) energy levels of the system. Off-diagonal matrix elements of the interaction Hamiltonian $ V $ in the eigenbasis of $ H $, denoted by $ v_{ij} $, where $ i $ and $ j $ label the eigenvectors of $ H $, are transition matrix elements. Diagonal matrix elements of the interaction Hamiltonian $ V $ in the eigenbasis of $H$, $v_{kk}=:v_{k} $, are controlled energy-level shifts. For the class of admissible controls $ \mathcal{U} $, we consider any function class containing piecewise-constant controls. The results reviewed in this section are of a general nature and do not depend on a particular choice of such $ \mathcal{U} $. Throughout this article, we identify a controlled quantum system with the pair $ (H, V) $.

There are various notions of controllability for finite-dimensional closed quantum systems \cite{CTLSCC_bib:AlbertiniAlessandro2003}. In this work, we focus on the notion of (complete) operator controllability which is important for practical applications --- a system is controllable if, using admissible controls, one can synthesize any unitary evolution (up to a global phase) in a finite time.
\begin{definition}\label{CTLSCC_def:completecontrollability}
		A closed quantum system $ (H, \, V) $ is called \textit{operator} (or \textit{completely}) \textit{controllable} if, for any target unitary operator $ U \in \mathbf{U}(N) $, there exists a final time $ T > 0 $ and a control $ f(t) \in \mathcal{U}([0,T], \mathbb{R}) $ such that the solution $ U_{f} (T) $ of the equation \eqref{CTLSCC_eq:dynamicssingle} satisfies
		\begin{equation}
			U = e^{i \varphi} U_{f} (T),
		\end{equation}
		with some (physically irrelevant) global phase $ \varphi \in \mathbb R $.
\end{definition}
The terms controllability and controllable system are used in this sense throughout the paper.

For this controllability notion, a well-known general geometric controllability criterion is formulated in terms of the quantum system's dynamical Lie algebra.
\begin{definition}
    The \textit{dynamical Lie algebra} $ \mathcal{L}(iH, iV) $ of the quantum system $ (H, V) $ is the Lie algebra generated by~$ iH $ and $ iV $ \cite{CTLSCC_bib:Ramakrishna1995}.
\end{definition}
\begin{theorem}[LARC]\label{CTLSCC_th:maincriterion}
    Let the free and interaction Hamiltonians be traceless, $ \operatorname{Tr} H = \operatorname{Tr} V = 0 $. Then the system $ (H, \, V) $ is controllable iff 
    \begin{equation}
        \mathcal{L}(iH, \, iV) = \mathfrak{su}(N).
    \end{equation}
    Otherwise, it is controllable iff
    \begin{equation}
        \mathcal{L}(iH, \, iV) = \mathfrak{u}(N).
    \end{equation}
\end{theorem}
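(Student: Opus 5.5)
The plan is to reduce the statement to the classical Lie algebra rank condition for right-invariant systems on compact connected Lie groups, due to Jurdjevic and Sussmann \cite{CTLSCC_bib:SussmannJurdjevic1972}, and then to handle the global phase carefully. Write $A=-iH$ and $B=-iV$, so that \eqref{CTLSCC_eq:dynamicssingle} becomes $\dot U = (A+fB)U$ on $\mathbf{U}(N)$. With piecewise-constant controls, the reachable set $\mathcal R$ from $\mathbb{I}$ is the semigroup generated by $\{e^{t(A+cB)}: t\ge 0,\ c\in\mathbb R\}$. First I will show that its closure equals the connected Lie subgroup $G=\exp\mathcal{L}(iH,iV)$; note that $\mathcal{L}(iH,iV)=\mathcal{L}(A,B)$. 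If $\mathcal{L}$ is not closed, I take instead the closure $\bar G$. Since all admissible classes contain the piecewise-constant controls, and any admissible trajectory stays in $G$, it suffices to work with piecewise-constant controls.

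The key step is compactness. For any $X$ in a compact Lie algebra, the one-parameter group $e^{tX}$ is almost periodic. Hence $e^{-sX}$ lies in the closure of $\{e^{tX}:t>0\}$, so the closure of the semigroup is a group. The Lie algebra of this group contains $A+cB$ for every $c$, hence contains $\mathcal{L}$. Using the Krener/Sussmann argument that the reachable set has nonempty interior in $G$ (orbit theorem), together with the fact that the closure is a group, I conclude that $\mathcal R = G$ exactly. Concretely: $\mathcal R$ contains an open subset of $G$, $\mathcal R^{-1}$ is approximated from within $\mathcal R$, and a standard open-dense semigroup argument gives $\mathcal R=G$ when $G$ is compact. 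I expect this to be the main obstacle, especially when $G$ is not closed. In that case I would simply note that in both cases of the final conclusion $\mathcal L$ is $\mathfrak{su}(N)$ or $\mathfrak{u}(N)$, so the relevant groups $\mathbf{SU}(N)$ and $\mathbf{U}(N)$ are compact. Sufficiency then only needs the compact case, while necessity only needs $\mathcal R\subset G$.

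Finally I will translate to the phase-modulo definition. Operator controllability means $\mathcal R\cdot \mathbf{U}(1)=\mathbf{U}(N)$, i.e.\ $\mathcal{L}+i\mathbb R\mathbb{I}=\mathfrak{u}(N)$.
\begin{itemize}
\item \textbf{Traceless case.} Here $\mathcal L\subset\mathfrak{su}(N)$, so the condition is equivalent to $\mathcal{L}=\mathfrak{su}(N)$.
\item \textbf{Non-traceless case, sufficiency.} If $\mathcal{L}=\mathfrak{u}(N)$, controllability is immediate.
\item \textbf{Non-traceless case, necessity.} Conversely, $\mathcal L+i\mathbb R\mathbb I=\mathfrak u(N)$ forces $[\mathcal L,\mathcal L]\supset[\mathfrak u(N),\mathfrak u(N)]=\mathfrak{su}(N)$. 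Since some generator has nonzero trace, $\mathcal L$ also contains an element outside $\mathfrak{su}(N)$, hence $\mathcal L=\mathfrak u(N)$.
\end{itemize}
For necessity in general, if $\mathcal{L}$ is smaller, then $\mathcal R\,\mathbf U(1)\subset \bar G\,\mathbf U(1)$. This is a proper compact subgroup, or one of lower dimension, so it cannot be all of $\mathbf U(N)$.
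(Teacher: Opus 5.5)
The paper does not prove this statement. It quotes the LARC from Jurdjevic--Sussmann and only remarks that, since $\mathfrak{u}(N)=\mathfrak{su}(N)\oplus i\mathbb{R}\mathbb{I}$, controllability amounts to $\mathcal{L}(iH,iV)\supset\mathfrak{su}(N)$.

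Your proposal is correct, and it takes a genuinely different route: you reconstruct the classical argument behind the citation. Compactness makes the closure of the reachable semigroup a closed subgroup whose Lie algebra contains $\mathcal{L}$. Krener's theorem gives $\mathcal{R}$ nonempty interior in $G$. A dense subsemigroup with interior is then the whole group. You also do the global-phase bookkeeping explicitly. The paper leaves this implicit, even though the cited theorem concerns reachability on the group, not modulo $\mathbf{U}(1)$. Your identity $[\mathcal{L},\mathcal{L}]=[\mathcal{L}+i\mathbb{R}\mathbb{I},\mathcal{L}+i\mathbb{R}\mathbb{I}]=\mathfrak{su}(N)$ is a clean justification of the ``otherwise'' clause, which the paper simply asserts. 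Phase-modulo controllability plus one generator of nonzero trace forces $\mathcal{L}=\mathfrak{u}(N)$ exactly, not merely $\mathcal{L}+i\mathbb{R}\mathbb{I}=\mathfrak{u}(N)$. The citation is economical; your version is self-contained modulo standard Lie theory and Krener's theorem, and it shows where compactness and the phase quotient enter.

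Three places to tighten:
\begin{enumerate}
\item In the final necessity paragraph, work with $G\cdot\mathbf{U}(1)$, not $\bar G\cdot\mathbf{U}(1)$. The former is an immersed connected subgroup with Lie algebra $\mathcal{L}+i\mathbb{R}\mathbb{I}$. If that algebra has dimension below $N^2$, the subgroup has measure zero in $\mathbf{U}(N)$ by Sard, so it cannot be all of $\mathbf{U}(N)$. Passing to the closure can strictly enlarge the Lie algebra (think of an irrational winding in a maximal torus). So calling $\bar G\cdot\mathbf{U}(1)$ ``proper'' would need an extra fact, namely that $\mathcal{L}$ is an ideal of $\mathrm{Lie}\,\bar G$ with abelian quotient. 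This is true but unnecessary here.
\item The claim that every admissible trajectory stays in $G$, for controls that are not piecewise constant, deserves one line. For example, solve $\dot g=(A+f(t)B)g$ on the abstract Lie group $G$, push forward by the inclusion immersion, and use uniqueness. Approximating by piecewise-constant controls would only place $U_f(T)$ in $\bar G$, which leads back to the first point.
\item ``If $\mathcal{L}$ is not closed'' should read ``if $G$ is not closed''.
\end{enumerate}
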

Since the Lie algebra $ \mathfrak{u}(N) $ has the decomposition $ \mathfrak{u}(N) = \mathfrak{su}(N) \oplus i \mathbb{R} \mathbb{I} $, the theorem states that the system is controllable iff
\begin{equation}
    \mathcal{L}(iH, \, iV) \supset \mathfrak{su}(N).
\end{equation}
In other words, since the global phase of the evolution operator is physically irrelevant, operator controllability is completely determined by whether the dynamical Lie algebra $ \mathcal{L}(iH, \, iV) $ contains $ \mathfrak{su}(N) $ or not.

Theorem~\ref{CTLSCC_th:maincriterion} can be reformulated in the following form, which is convenient for numerical controllability verification.
\begin{corollary*}
    The system $ (H, V) $ is controllable iff $ \dim\mathcal{L}(iH, \, iV) \geqslant N^{2} - 1 $.  
\end{corollary*}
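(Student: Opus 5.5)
The corollary follows from Theorem~\ref{CTLSCC_th:maincriterion} together with the decomposition $\mathfrak{u}(N) = \mathfrak{su}(N) \oplus i\mathbb{R}\mathbb{I}$. My plan is to show that $\dim \mathcal{L}(iH, iV) \geqslant N^2-1$ holds iff $\mathcal{L}(iH,iV) \supset \mathfrak{su}(N)$. By the reformulation stated right after the LARC, the latter is equivalent to controllability.

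\emph{Step 1: the key structural fact.} Since $iH, iV \in \mathfrak{u}(N)$, we have $\mathcal{L} := \mathcal{L}(iH,iV) \subset \mathfrak{u}(N)$. Every commutator lies in $\mathfrak{su}(N)$, because $\operatorname{Tr}[A,B]=0$. Hence the derived algebra $[\mathcal{L},\mathcal{L}]$ lies in $\mathfrak{su}(N)$. More precisely,
\begin{equation*}
\mathcal{L} = \operatorname{span}_{\mathbb{R}}\{iH, iV\} + [\mathcal{L},\mathcal{L}].
\end{equation*}
This holds because $\mathcal{L}$ is spanned by the generators and by iterated brackets of them. Let $\pi$ denote the projection onto $i\mathbb{R}\mathbb{I}$ along $\mathfrak{su}(N)$. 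Then $\pi(\mathcal{L})$ is spanned by $\pi(iH)$ and $\pi(iV)$, and both of these lie in the one-dimensional space $i\mathbb{R}\mathbb{I}$. So $\dim \pi(\mathcal{L}) \leqslant 1$.

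\emph{Step 2: the two directions.} If $\mathcal{L} \supset \mathfrak{su}(N)$, then trivially $\dim\mathcal{L} \geqslant N^2-1$. Conversely, suppose $\dim \mathcal{L} \geqslant N^2-1$, and consider $\mathcal{L}_0 := \mathcal{L}\cap\mathfrak{su}(N) = \ker(\pi|_{\mathcal{L}})$. By Step 1, $\dim \mathcal{L}_0 \geqslant \dim\mathcal{L} - 1 \geqslant N^2-2$. We must rule out $\dim \mathcal{L}_0 = N^2-2$. I handle this by cases.
\begin{itemize}
\item If $\pi(\mathcal{L}) = 0$, then $\mathcal{L}=\mathcal{L}_0\subset\mathfrak{su}(N)$ has dimension at least $N^2-1$, so $\mathcal{L}=\mathfrak{su}(N)$.
\item If $\dim\pi(\mathcal{L})=1$ and $\dim\mathcal{L}\geqslant N^2-1$, suppose $\dim\mathcal{L}_0 = N^2-2$. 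Since $\mathcal{L}_0$ is an ideal of $\mathcal{L}$ containing $[\mathcal{L},\mathcal{L}]$, it is a subalgebra of $\mathfrak{su}(N)$ of codimension one.
\end{itemize}

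\emph{Step 3: the main obstacle.} The crux is to show that $\mathfrak{su}(N)$ ($N\geqslant 2$) has no subalgebra of codimension one. I would argue via the Killing form $K$, which is negative definite since $\mathfrak{su}(N)$ is compact semisimple. Let $\mathfrak{h}$ be a codimension-one subalgebra and $X\neq 0$ a $K$-orthogonal complement to it. For $h\in\mathfrak{h}$, $\operatorname{ad}_h$ is $K$-skew and preserves $\mathfrak{h}$, hence it also preserves $\mathfrak{h}^{\perp} = \mathbb{R}X$. Then $[h,X]=cX$ with $c$ real. Skewness gives $K([h,X],X)=0$, and $K(X,X)\neq 0$ then forces $c=0$. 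Thus $X$ commutes with $\mathfrak{h}$, and trivially with itself, so $X$ lies in the center of $\mathfrak{su}(N)$. That center is trivial, a contradiction.

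Hence $\dim\mathcal{L}_0 = N^2-1$, i.e.\ $\mathcal{L}\supset\mathfrak{su}(N)$, which completes the equivalence. Combined with Theorem~\ref{CTLSCC_th:maincriterion}, this proves the corollary.
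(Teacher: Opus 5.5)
Your proof is correct. The paper gives no argument for this corollary at all. It presents it as an immediate reformulation of Theorem~\ref{CTLSCC_th:maincriterion} via $\mathfrak{u}(N)=\mathfrak{su}(N)\oplus i\mathbb{R}\mathbb{I}$, i.e.\ implicitly as a plain dimension count.

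That count suffices only in the traceless case. There $\mathcal{L}\subset\mathfrak{su}(N)$, so $\dim\mathcal{L}\geqslant N^2-1$ forces $\mathcal{L}=\mathfrak{su}(N)$. When $\operatorname{Tr}H$ or $\operatorname{Tr}V$ is nonzero, the LARC demands $\mathcal{L}=\mathfrak{u}(N)$. The corollary therefore tacitly asserts that $\dim\mathcal{L}=N^2-1$ cannot occur, which is exactly the statement that $\mathfrak{su}(N)$ has no subalgebra of codimension one. That is what your Step~3 proves. So your route does not really depart from the paper's; it supplies the step the paper's one-line reformulation leaves implicit.

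The Killing-form argument is sound, and negative definiteness is essential to it. For instance, $\mathfrak{sl}(2,\mathbb{R})$, whose Killing form is nondegenerate but indefinite, does have a codimension-one subalgebra (the upper-triangular traceless matrices).

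Two simplifications are possible. First, the case split and the remark that $\mathcal{L}_0$ is an ideal containing $[\mathcal{L},\mathcal{L}]$ are unnecessary. In every case $\mathcal{L}_0=\mathcal{L}\cap\mathfrak{su}(N)$ is a subalgebra of $\mathfrak{su}(N)$ of dimension at least $\dim\mathcal{L}-1$, and this alone feeds into Step~3. Second, you can replace the Killing form by the positive-definite, $\operatorname{ad}$-invariant form $\langle X,Y\rangle=-\operatorname{Tr}(XY)$ on skew-Hermitian matrices. This keeps the argument entirely elementary, with no appeal to compact semisimple theory beyond the triviality of the center of $\mathfrak{su}(N)$.
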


The LARC provides a straightforward way to verify whether a given quantum system is controllable by performing a computational test to determine the number of linearly independent elements in the dynamical Lie algebra. Details on the algorithm can be found in \cite{CTLSCC_bib:AlessandroBook, CTLSCC_bib:SchirmerSolomonFu2001}. However, this approach has several significant drawbacks:

\begin{enumerate}
    \item The numerical test does not allow for the formulation of general analytical conditions for controllability in terms of the energy levels and structure of the transitions of the system under control. Such conditions would be highly useful for various applications (e.g., for studies of quantum control landscapes, such as that performed in \cite{CTLSCC_bib:VolkovMyachkovaPechen2025}).
    \item The numerical test does not allow one to determine general patterns of controllability. In particular, it does not show how the structural features of the free and interaction Hamiltonians affect the presence or absence of controllability.
    \item The numerical test does not allow one to detect “weakly” controllable systems, i.e., controllable systems that are structurally close to uncontrollable systems, since determining controllability by testing linear independence is an inherently ill-conditioned numerical problem.
\end{enumerate}

Therefore, an important task is to derive controllability conditions in a form that does not have these limitations. One possible approach is to formulate controllability conditions directly in terms of the matrix elements of the system Hamiltonian. 

Let 
\begin{equation}
    \mathcal{B} = \left\{B_{ij}^{\Re}, B_{ij}^{\Im} \right\}_{i < j} \cup \left\{B_{k}^{\D} \right\}_{k}
\end{equation}
be the set of matrices whose elements are defined as
\begin{equation}\label{CTLSCC_eq:basisB}
    \begin{split}
        &B_{ij}^{\Re} = \left(\mu_{ji} - \mu_{ij}\right), \\
        &B_{ij}^{\Im} = i \left( \mu_{ij} + \mu_{ji} \right), \\
        &B_{k}^{\D} = i \mu_{kk},
    \end{split}
\end{equation}
where $ \mu_{ij} $ is an $ N \times N $ matrix with $ \mu_{ij}^{(\alpha \beta)} = \delta_{i \alpha} \delta_{j \beta} $. From this definition, it directly follows that $ B_{ij}^{\Re} = -B_{ji}^{\Re} $, $ B_{ij}^{\Im} = B_{ji}^{\Im }$, $ B_{kk}^{\Re} = 0 $, and $ B_{kk}^{\Im} = 2 B_{k}^{\D} $. These matrices satisfy the following commutation rules 
\begin{align}
    &\left[B_{i_{1} j_{1}}^{\Re}, \, B_{i_{2} j_{2}}^{\Re} \right] = \delta_{j_{1} j_{2}} B_{i_{1} i_{2}}^{\Re} - \delta_{j_{1} i_{2}} B_{i_{1} j_{2}}^{\Re} - \delta_{i_{1} j_{2}} B_{j_{1} i_{2}}^{\Re} + \delta_{i_{1} i_{2}} B_{j_{1} j_{2}}^{\Re}, \label{CTLSCC_eq:ReRecomm}  \\ 
    &\left[B_{i_{1}j_{1}}^{\Re}, \, B_{i_{2}j_{2}}^{\Im} \right] =  -\delta_{j_{1} j_{2}} B_{i_{1} i_{2}}^{\Im} - \delta_{j_{1} i_{2}} B_{i_{1} j_{2}}^{\Im} + \delta_{i_{1} j_{2}} B_{j_{1} i_{2}}^{\Im} + \delta_{i_{1} i_{2}} B_{j_{1} j_{2}}^{\Im}, \label{CTLSCC_eq:ReImcomm}  \\ 
    &\left[B_{i_{1}j_{1}}^{\Im}, \, B_{i_{2}j_{2}}^{\Im} \right] = \delta_{j_{1} j_{2}} B_{i_{1} i_{2}}^{\Re} + \delta_{j_{1} i_{2}} B_{i_{1} j_{2}}^{\Re} + \delta_{i_{1} j_{2}} B_{j_{1} i_{2}}^{\Re} + \delta_{i_{1} i_{2}} B_{j_{1} j_{2}}^{\Re}, \label{CTLSCC_eq:ImImcomm}  \\
    &\left[B_{ij}^{\Re}, \, B_{k}^{\D} \right] = (\delta_{ik}-\delta_{jk})B_{ij}^{\Im}, \label{CTLSCC_eq:ReDcomm}\\
    &\left[B_{ij}^{\Im}, \, B_{k}^{\D} \right] = (\delta_{jk}-\delta_{ik})B_{ij}^{\Re}, \label{CTLSCC_eq:ImDcomm} \\
    &\left[B_{k_{1}}^{\D}, \, B_{k_{2}}^{\D} \right] = 0. \label{CTLSCC_eq:DDcomm} 
\end{align}
To shorten the notation, we denote an unspecified element of the pair $ \{ B_{ij}^{\Re}, B_{ij}^{\Im} \} $ as $ B_{ij} $. The inclusion $ B_{ij} \in \mathcal{L} $ for an arbitrary set $ \mathcal{L} $ is understood as $ \exists B_{ij}: B_{ij} \in \{ B_{ij}^{\Re}, B_{ij}^{\Im} \} $ such that $ B_{ij} \in \mathcal{L} $.

The set of matrices $ \mathcal{B} $ forms a basis of the Lie algebra $ \mathfrak{u}(N) $ and can be used to decompose the skew-Hermitian matrices $ iH $ and $ iV $, corresponding to the free and interaction Hamiltonians, respectively, in the chosen eigenbasis of $ H $. These decompositions take the form
\begin{align}
    &i H = i \begin{pmatrix} 
        E_{1} & 0 & \ldots & 0 \\
        0 & E_{2} & \ldots & 0 \\
        \vdots & \vdots & \ddots & \vdots \\
        0 & 0 & \ldots & E_{N} 
    \end{pmatrix} = \sum\limits_{k} E_{k} B_{k}^{\D}, \label{CTLSCC_eq:hdecomposition} \\[0.2cm]
    &i V = i \begin{pmatrix}
        v_{1} & v_{12} & \ldots & v_{1N} \\
        v_{12}^{*} & v_{2} & \ldots & v_{2N} \\
        \vdots & \vdots & \ddots & \vdots \\
        v_{1N}^{*} & v_{2N}^{*}& \ldots & v_{N}
    \end{pmatrix} = \sum\limits_{i < j} \left(v_{ij}^{\Re} B_{ij}^{\Im}  + v_{ij}^{\Im} B_{ij}^{\Re}\right) + \sum\limits_{k} v_{k} B_{k}^{\D}, \label{CTLSCC_eq:vdecomposition} 
\end{align}
where $ v_{ij}^{\Re} := \Re(v_{ij})$ and $ v_{ij}^{\Im} := \Im(v_{ij}) $. A common convention for the energy-level eigenvalues of the free Hamiltonian $H$ is to arrange them in nondecreasing order $ E_{1} \leqslant E_{2} \leqslant \ldots \leqslant E_{N} $. However, in this work, we do not use this convention and present all the formulations and results without assuming this order; such an approach simplifies the construction of the complete operator controllability classification. For convenience, when $ j > i $, $ v_{ji} $ denotes the complex conjugate of the upper-triangular element $ v_{ij} $
\begin{equation}
    v_{ji} = v_{ij}^{*}.
\end{equation}

With these notations, the operator controllability problem for a closed $ N $-level system reduces to deriving controllability conditions in terms of the energy levels $ E_{k} $, the transition matrix elements $ v_{ij} $, and the controlled energy-level shifts $ v_{k} $. These quantities have the standard physical meaning:
\begin{enumerate}
    \item The energy levels $ E_{k} $ are the diagonal matrix elements of the free Hamiltonian $ H $. However, the physical characteristics of the system are not the values of the energy levels but their differences $ \Delta E_{ij}~=~E_{i} - E_{j} $, whose absolute values divided by the Planck constant are \textit{Bohr} or \textit{transition frequencies}. These transition frequencies, rather than the eigenvalues themselves, affect the controllability properties. This is manifested in the fact that, due to the basic properties of the Lie bracket operation, a uniform shift of all energy levels (corresponding to an additive transformation of the Hamiltonian $ H' = H + c \mathbb{I} $ by the identity matrix $ \mathbb{I} $) as well as any uniform scaling of the spectrum (i.e., multiplication by a nonzero scalar factor $ H\to H'=cH $) either does not change the dynamical Lie algebra or modifies it by adding the identity component $ \mathbb I $, which does not change the controllability property of the system.
    \item Off-diagonal (transition) matrix elements $ v_{ij} $ determine the strength of the  \textit{direct population transfer} between energy levels $ i $ and $ j $ under a coherent control.
    \item Diagonal matrix elements $ v_{k} $ describe \textit{shifts of the original energy levels} induced by a coherent control (we refer to them as controlled energy-level shifts). They can also affect controllability, since they modify the initial energy-level structure determined by the energy levels $ E_{k} $. 
\end{enumerate}

Taking into account Theorem~\ref{CTLSCC_th:maincriterion} and the physical interpretation of the free Hamiltonian $ H $, we consider the nonzero-trace and traceless cases within a unified framework, treating $ \mathcal{L}(i H, i V) $ as its extension $ \mathcal{L}(i H, i V) + \operatorname{span}(i \mathbb{I}) $, since these cases are equivalent from the standpoint of controllability. In other words, we assume by convention that the element $ i\mathbb{I} $ belongs to the dynamical Lie algebra and that controllability corresponds to the case when this algebra is the Lie algebra $ \mathfrak{u}(N) $. This provides a more concise exposition of the proofs of our results.

The structure of the energy-level distribution, control-induced population transfer, and controlled energy-level shifts can be conveniently described using the structural graph $ \mathcal{G}_{V} $
\begin{equation}
    V(\mathcal{G}_{V}) = \left\{1, \, 2, \, \ldots, \, N\right\}, \quad E(\mathcal{G}_{V}) = \left\{i \leftrightarrow j: \, v_{ij} \neq 0 \right\} \, \cup \, \left\{ k \leftrightarrow k: \, v_{k} \neq 0 \right\},
\end{equation}
where $ V(\cdot) $ is the set of vertices of the graph and $ E(\cdot) $ is the set of edges (the notation $ i \leftrightarrow j $ refers to an edge between the vertices $ i $ and $ j $).   

We will use the energy levels $ E_{k} $, the transition matrix elements $ v_{ij} $, the controlled energy-level shifts $ v_{k} $, and the structural graph $ \mathcal{G}_{V} $ to define the structural properties of the system that are relevant to controllability.
\begin{definition}\label{CTLSCC_def:structuralfeatures1}
    Consider a system $ (H, V) $ with the free Hamiltonian $ H $, the interaction Hamiltonian $ V $, and their matrix representations \eqref{CTLSCC_eq:hdecomposition}, \eqref{CTLSCC_eq:vdecomposition}. The system is said to have an \textit{energy-level degeneracy} if
    \begin{equation}
        \exists i \neq j: \ E_{i} = E_{j}.
    \end{equation}
    Following the algebraic terminology~\cite{CTLSCC_bib:Altafini2002}, a system with no energy-level degeneracies is called \textit{regular}. The system is said to have a \textit{level-spacing degeneracy} if it has equal Bohr frequencies
    \begin{equation}
        \exists i_{1}, j_{1}, i_{2}, j_{2}: \quad |E_{i_{1}} - E_{j_{1}}| = |E_{i_{2}} - E_{j_{2}}|,
    \end{equation}
    where $ i_{1} \neq j_{1}, \ i_{2} \neq j_{2}, \, (i_{1}, \, j_{1}) \neq  (i_{2}, \, j_{2}), \, (j_{2}, \, i_{2}) $. A system with no level-spacing degeneracies is called \textit{strongly regular}.
\end{definition}
If the system is strongly regular, it is regular. Indeed, if $ E_{i} = E_{j} $ for $ i \neq j $, then $ | E_{k} - E_{j} | = | E_{k} - E_{i} | $ for any $ k \neq i, \, j $.

Theorems~\ref{CTLSCC_th:graphtheorem} and~\ref{CTLSCC_th:graphtheoremstrengthened}, established in~\cite{CTLSCC_bib:Altafini2002}, describe several properties of the structural graph $ \mathcal{G}_{V} $ that are important for controllability.
\begin{theorem}\label{CTLSCC_th:graphtheorem}
    If the system $ (H, V) $ is controllable, then its structural graph $ \mathcal{G}_{V} $ is connected. If the structural graph $ \mathcal{G}_{V} $ is connected and the system is strongly regular, then the system is controllable.  
\end{theorem}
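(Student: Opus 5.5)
The theorem has two parts, and I would prove each one separately.

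\emph{Necessity of connectedness.} Suppose $\mathcal{G}_V$ is disconnected. Then the vertex set splits into two nonempty sets $S$ and $S^c$ with $v_{ij}=0$ whenever $i\in S$ and $j\in S^c$. Let $P$ be the orthogonal projector onto $\operatorname{span}\{e_k : k\in S\}$.

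Since $H$ is diagonal, it commutes with $P$. Since $V$ has no matrix elements between $S$ and $S^c$, it also commutes with $P$. The set of skew-Hermitian matrices commuting with $P$ is a Lie subalgebra, namely $\mathfrak{u}(|S|)\oplus\mathfrak{u}(N-|S|)$. This subalgebra contains $iH$, $iV$ and $i\mathbb{I}$, so it contains $\mathcal{L}(iH,iV)$.

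This block algebra is a proper subalgebra of $\mathfrak{u}(N)$: for example, $B_{ij}^{\Re}$ with $i\in S$, $j\notin S$ does not belong to it. Hence $\mathcal{L}(iH,iV)\not\supset\mathfrak{su}(N)$, and by Theorem~\ref{CTLSCC_th:maincriterion} the system is not controllable.

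\emph{Sufficiency under strong regularity.} Here I would use the adjoint action of $iH=\sum_k E_kB_k^{\D}$. Write $\omega_{ij}=E_i-E_j$. The commutation rules \eqref{CTLSCC_eq:ReDcomm}--\eqref{CTLSCC_eq:ImDcomm} give
\[
\operatorname{ad}_{iH}B_{ij}^{\Re}=-\omega_{ij}B_{ij}^{\Im},\qquad \operatorname{ad}_{iH}B_{ij}^{\Im}=\omega_{ij}B_{ij}^{\Re},
\]
possibly up to sign conventions. In addition, $\operatorname{ad}_{iH}$ annihilates every $B_k^{\D}$.

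Let $X=iV-\sum_k v_kB_k^{\D}$ be the off-diagonal part of $iV$. Then $\operatorname{ad}_{iH}^{2m}X=\sum_{i<j}(-\omega_{ij}^2)^m X_{ij}$, where $X_{ij}$ is the $(i,j)$ component of $X$ in the plane spanned by $B_{ij}^{\Re}$ and $B_{ij}^{\Im}$.

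Strong regularity says that the numbers $\omega_{ij}^2$, $i<j$, are pairwise distinct and nonzero. The Vandermonde matrix in these numbers is therefore invertible, so finite linear combinations of the elements $\operatorname{ad}_{iH}^{2m}X$ isolate each $X_{ij}$. Applying $\operatorname{ad}_{iH}$ once more and dividing by $\omega_{ij}\neq0$ gives the rotated element in the same plane. Thus for every edge $i\leftrightarrow j$ of $\mathcal{G}_V$ with $i\neq j$, both $B_{ij}^{\Re}$ and $B_{ij}^{\Im}$ lie in $\mathcal{L}$.

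It remains to pass from edges to all pairs. By \eqref{CTLSCC_eq:ReRecomm}, $[B_{ij}^{\Re},B_{jk}^{\Re}]=\pm B_{ik}^{\Re}$ for distinct $i,j,k$, and \eqref{CTLSCC_eq:ReImcomm} yields $B_{ik}^{\Im}$ in the same way. Following paths in the connected graph, induction on path length shows that $B_{ij}^{\Re},B_{ij}^{\Im}\in\mathcal{L}$ for every pair $i\neq j$.

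Loops $k\leftrightarrow k$ play no role in this step. Connectivity must come from edges with $i\neq j$, since loops alone never connect distinct vertices.

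Finally, $[B_{ij}^{\Re},B_{ij}^{\Im}]$ is proportional to $B_i^{\D}-B_j^{\D}$. These differences span the traceless diagonal matrices. Together with the off-diagonal elements already obtained, they span $\mathfrak{su}(N)$, and controllability follows from Theorem~\ref{CTLSCC_th:maincriterion}.

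The main obstacle is the isolation step. One has to check the sign conventions in $\operatorname{ad}_{iH}$ carefully so that the even powers really act by the scalars $-\omega_{ij}^2$ on each plane. The point where strong regularity enters is that the $\omega_{ij}^2$ are distinct: this is exactly what makes the Vandermonde argument work, and equal $|\omega|$ would make it fail.
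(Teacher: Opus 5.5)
The paper does not prove this theorem; it quotes it from \cite{CTLSCC_bib:Altafini2002}. Your argument is correct and self-contained, and it reassembles tools the paper uses elsewhere.

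Your necessity half is the standard argument and is fine: $iH$, $iV$ and $i\mathbb{I}$ all lie in the block algebra $\mathfrak{u}(|S|)\oplus\mathfrak{u}(N-|S|)$, which misses $B_{ij}^{\Re}$ for $i\in S$, $j\notin S$.

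Your sufficiency half is the paper's own machinery specialized to this case:
\begin{itemize}
\item Under strong regularity, every frequency class $\mathcal{E}_p$ in the proof of Theorem~\ref{CTLSCC_th:shifts} is a single transition. That Vandermonde step is therefore exactly your isolation of each $X_{ij}$.
\item One more application of $\ad_{iH}$ makes every edge double. This is the $2\times2$ determinant $|v_{ij}|^2$ computation in the proofs of Controllability Conditions~\ref{CTLSCC_th:twonotregular1} and~\ref{CTLSCC_th:threenotstrong}.
\item Your path induction, together with $[B_{ij}^{\Re},B_{ij}^{\Im}]=2(B_j^{\D}-B_i^{\D})$, is Lemmas~\ref{CTLSCC_th:lemma1}--\ref{CTLSCC_th:lemma2}, i.e.\ Theorem~\ref{CTLSCC_th:algebrastatement}.
\end{itemize}
Your signs are exactly right: $\ad_{iH}B_{ij}^{\Re}=-\omega_{ij}B_{ij}^{\Im}$ and $\ad_{iH}B_{ij}^{\Im}=\omega_{ij}B_{ij}^{\Re}$.

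One detail you should state explicitly is the range of powers. Use $\ad_{iH}^{2m}$ with $m=1,\dots,P$, where $P$ is the number of edges $i\leftrightarrow j$ with $i\neq j$, not $m=0,\dots,P-1$. The reason is that $X$ itself need not lie in $\mathcal{L}$ when some $v_k\neq0$. By contrast, $\ad_{iH}^{2m}X=\ad_{iH}^{2m}(iV)\in\mathcal{L}$ for every $m\geq1$. The determinant is then $\prod(-\omega_{ij}^2)$ times a Vandermonde determinant. That factor is where $\omega_{ij}\neq0$ is genuinely used; it corresponds to the paper's factor $\prod_p\Delta E_p^4$.

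What your route buys over the citation is that it also proves Theorem~\ref{CTLSCC_th:graphtheoremstrengthened} verbatim. Only the frequencies $\omega_{ij}$ with $v_{ij}\neq0$ ever enter the Vandermonde matrix and the division by $\omega_{ij}$. So $V$--strong regularity already suffices.
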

If the system is nonregular, $ H $ has at least one degenerate eigenvalue whose eigenspace has dimension at least two. Consequently, the choice of the eigenbasis of $ H $ is not unique, so the connectivity of the graph $ \mathcal{G}_{V} $ may depend on this choice. A stronger formulation of the necessary condition in Theorem~\ref{CTLSCC_th:graphtheorem} therefore requires the graph $ \mathcal{G}_{V} $ to be connected for every choice of eigenbasis of $ H $~\cite{CTLSCC_bib:PolackSuchowskiTannor2009}. In constructing the classification of three-level systems below, we require the graph  $ \mathcal{G}_{V} $ to be connected only in the eigenbasis under consideration, since any remaining dependence on the choice of eigenbasis is accounted for by the corresponding controllability conditions.

The sufficient controllability condition in Theorem~\ref{CTLSCC_th:graphtheorem} can be strengthened by restricting the absence of level-spacing degeneracies only to those pairs of levels $ (i, j) $ for which the corresponding edge $  i \leftrightarrow j $ belongs to $ E(\mathcal{G}_{V}) $. To do this, we first introduce the strengthened versions of the notions in  Definition \ref{CTLSCC_def:structuralfeatures1}. 

\begin{definition}\label{CTLSCC_def:structuralfeatures2}
    Consider a system $ (H, V) $ with the free Hamiltonian $ H $, the interaction Hamiltonian $ V $, and their matrix representations \eqref{CTLSCC_eq:hdecomposition}, \eqref{CTLSCC_eq:vdecomposition}. The system is said to have an \textit{energy-level degenerate transition} if
    \begin{equation}
        \exists i \neq j: \ E_{i} = E_{j}, \ v_{ij} \neq 0.
    \end{equation}
    A system without transitions between degenerate levels is called \textit{$ V $--regular}. The system is said to have \textit{level-spacing degenerate transitions} if
    \begin{equation}
        \exists i_{1}, j_{1}, i_{2}, j_{2}: \quad |E_{i_{1}} - E_{j_{1}}| = |E_{i_{2}} - E_{j_{2}}|, \quad v_{i_{1} j_{1}}, \, v_{i_{2} j_{2}} \neq 0, 
    \end{equation}
    where $ i_{1} \neq j_{1}, \ i_{2} \neq j_{2}, \, (i_{1}, \, j_{1}) \neq  (i_{2}, \, j_{2}), \, (j_{2}, \, i_{2}) $. A $ V $--regular system with no level-spacing degenerate transitions is called \textit{$ V $--strongly regular}.
\end{definition}

The qualifier "$ V $--" in terms of Definition~\ref{CTLSCC_def:structuralfeatures2} indicates that regularity and strong regularity are verified only for transitions allowed by $ V $. Unlike ordinary regularity, which follows from the absence of level-spacing degeneracies, $ V $-regularity does not follow from the absence of level-spacing degenerate transitions. Indeed, the latter condition rules out only pairs of allowed transitions with equal Bohr frequencies and does not exclude a single allowed transition between degenerate energy levels. Therefore, $ V $-regularity is imposed explicitly in the definition of $ V $-strong regularity.

The $ V $--strong regularity property can replace strong regularity in the sufficient condition of Theorem \ref{CTLSCC_th:graphtheorem}.
\begin{theorem}\label{CTLSCC_th:graphtheoremstrengthened}
    If the structural graph $ \mathcal{G}_{V} $ is connected and the system is $V$--strongly regular, then the system is controllable.
\end{theorem}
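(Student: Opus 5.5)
We work with the convention that $i\mathbb{I}\in\mathcal{L}:=\mathcal{L}(iH,iV)$ and show that $\mathcal{L}=\mathfrak{u}(N)$. The plan is to first isolate, inside $\mathcal{L}$, the individual components of $iV$ carried by each allowed transition. Then we use connectivity of $\mathcal{G}_V$ to generate every $B_{ij}^{\Re}$, $B_{ij}^{\Im}$ and every diagonal element.

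\textbf{Step 1: separating the transition components.} Write $iV=D_V+\sum_{(i,j)\in E'}W_{ij}$. Here $D_V=\sum_k v_kB_k^{\D}$, $E'$ is the set of allowed off-diagonal transitions, and $W_{ij}=v_{ij}^{\Re}B_{ij}^{\Im}+v_{ij}^{\Im}B_{ij}^{\Re}$. By \eqref{CTLSCC_eq:ReDcomm}--\eqref{CTLSCC_eq:ImDcomm}, $\ad_{iH}$ annihilates $D_V$ and acts on the real two-dimensional plane $P_{ij}=\operatorname{span}\{B_{ij}^{\Re},B_{ij}^{\Im}\}$ as a rotation generator with eigenvalues $\pm i\,\Delta E_{ij}$. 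Hence $\ad_{iH}^2|_{P_{ij}}=-\Delta E_{ij}^2\,\mathrm{Id}$. Consider the elements
\[
\ad_{iH}^{2m}(iV)=(-1)^m\sum_{(i,j)\in E'}\Delta E_{ij}^{2m}W_{ij}\in\mathcal{L}, \qquad m\ge 1.
\]
By $V$-strong regularity, the numbers $\Delta E_{ij}^2$, $(i,j)\in E'$, are pairwise distinct and nonzero. A Vandermonde argument therefore lets us take linear combinations of $\ad_{iH}^{2m}(iV)$, $m=1,\dots,|E'|$, that isolate each $W_{ij}$ separately. Since $\Delta E_{ij}\neq0$, applying $\ad_{iH}$ once more gives $\ad_{iH}W_{ij}=\Delta E_{ij}\,W_{ij}^{\perp}$, where $W_{ij}^{\perp}$ is the orthogonal partner of $W_{ij}$ in $P_{ij}$. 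Thus $P_{ij}\subset\mathcal{L}$, so $B_{ij}^{\Re},B_{ij}^{\Im}\in\mathcal{L}$ for every allowed transition.

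\textbf{Step 2: propagation along the graph.} By \eqref{CTLSCC_eq:ReRecomm}, $[B_{ij}^{\Re},B_{jk}^{\Re}]=\pm B_{ik}^{\Re}$ for distinct $i,j,k$, and \eqref{CTLSCC_eq:ReImcomm} likewise gives $B_{ik}^{\Im}$. For every allowed pair we also have $[B_{ij}^{\Re},B_{ij}^{\Im}]=2(B_i^{\D}-B_j^{\D})$ up to sign, by \eqref{CTLSCC_eq:ImImcomm} with $B_{kk}^{\Im}=2B_k^{\D}$. Connectivity of $\mathcal{G}_V$ matters here, and self-loops play no role: the off-diagonal edges alone form a connected graph on $N$ vertices. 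If they did not, $iH$ and $iV$ would be block diagonal with respect to a nontrivial splitting, and the self-loops could not connect the blocks; the paper's definition of the graph should be read with this in mind. Induction on path length then gives $B_{ik}\in\mathcal{L}$ for all $i<k$, and with it the full pair $B_{ik}^{\Re},B_{ik}^{\Im}$. The differences $B_i^{\D}-B_k^{\D}$ span the traceless diagonal matrices, and together with $i\mathbb{I}$ they give every $B_k^{\D}$. Hence $\mathcal{B}\subset\mathcal{L}$, so $\mathcal{L}=\mathfrak{u}(N)$, and Theorem~\ref{CTLSCC_th:maincriterion} yields controllability.

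\textbf{Main obstacle.} The delicate point is Step 1. Transitions with $\Delta E_{ij}=-\Delta E_{kl}$ have equal squares and would not be separated. The definition rules this out precisely because it uses absolute values $|E_{i_1}-E_{j_1}|=|E_{i_2}-E_{j_2}|$. $V$-regularity is needed to ensure $\Delta E_{ij}\neq0$, so that $\ad_{iH}$ is injective on $P_{ij}$. The diagonal part $D_V$ is harmless because it is killed by the first application of $\ad_{iH}$.
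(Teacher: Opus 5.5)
Your proof is correct. The paper gives no proof of this statement to compare against: it quotes it from \cite{CTLSCC_bib:Altafini2002}. Your argument combines two devices the paper uses elsewhere.

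Step~1 is the first half of the proof of Theorem~\ref{CTLSCC_th:shifts}. There the Vandermonde separation is done with $\ad_{iH}^{4m}$ on the frequency classes $\tilde B_p$. Under $V$--strong regularity each class is a single allowed transition, so $\tilde B_p=W_{ij}$, and your $(-1)^m\ad_{iH}^{2m}$ version is equivalent.

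Step~2 reproduces the propagation of Lemma~\ref{CTLSCC_th:lemma1} and the diagonal step of Lemma~\ref{CTLSCC_th:lemma2}. Once Step~1 is done, you could simply invoke Theorem~\ref{CTLSCC_th:algebrastatement}: $\mathcal{G}_{\mathcal{L}}$ then contains every off-diagonal edge of $\mathcal{G}_V$ as a double edge, so it is connected and has a double edge.

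Two cosmetic points:
\begin{itemize}
\item The identity $[B_{ij}^{\Re},B_{ij}^{\Im}]=2(B_j^{\D}-B_i^{\D})$ follows from \eqref{CTLSCC_eq:ReImcomm} with $i_1=i_2=i$, $j_1=j_2=j$. It does not follow from \eqref{CTLSCC_eq:ImImcomm}, which gives zero for a repeated pair.
\item The block-diagonal digression about self-loops is unnecessary, since loops never affect the connectivity of a graph.
\end{itemize}
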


% SECTION 3
\section{Some Lie-algebraic conditions for controllability of general closed quantum systems}\label{CTLSCC_sec:mygeneralresults} 
In this section, we prove two new general Lie-algebraic controllability theorems used later for the three-level classification. The first theorem gives sufficient conditions under which adding controlled energy-level shifts to a controllable system preserves controllability. The second theorem provides a condition, formulated in terms of a subset of the basis elements $ \mathcal{B} $, under which the dynamical Lie algebra $ \mathcal{L}(iH, \, iV) $ contains $\mathfrak{su}(N)$ and the system is therefore operator controllable.

For the analysis, we use a standard linear-algebra lemma to show that a given subset of $ \mathcal{B} $ is contained in $ \mathcal{L}(iH,iV) $.

\begin{lemma}\label{CTLSCC_th:lemmainvertiblebasis}
Let $ L_{1}, \, L_{2}, \, \ldots, L_{M} \in \mathcal{L}(iH,iV) $ be elements of some dynamical Lie algebra which are expressed as linear combinations of some matrices $ B_{1}, \, B_{2}, \, \ldots, \, B_{M} $ as
\begin{equation}
    L_{m} = \sum\limits_{m'=1}^M s_{mm'} B_{m'},
\end{equation}
where $ s_{mm'} \in \mathbb{R} $. If the matrix $ S = (s_{mm'})_{m, m' = 1}^{M} $ is invertible,
\begin{equation}\label{CTLSCC_eq:LSdettechniqueending}
    \det S \neq 0,
\end{equation}
then $ B_{1}, \, B_{2}, \, \ldots, \, B_{M} \in \mathcal{L}(iH,iV) $.
\end{lemma}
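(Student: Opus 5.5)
The plan is to use only that $\mathcal{L}(iH,iV)$ is a real vector space, i.e.\ closed under real linear combinations; no bracket structure is needed. Write the hypothesis in vector form as
\begin{equation*}
    (L_{1}, \ldots, L_{M})^{T} = S \, (B_{1}, \ldots, B_{M})^{T},
\end{equation*}
where the products are understood entrywise with matrix-valued entries. Since $\det S \neq 0$ and $S$ has real entries, the inverse $S^{-1} = (t_{mm'})$ exists and is again a real matrix, because $S^{-1} = \operatorname{adj}(S)/\det S$ and the adjugate has real entries.

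Multiplying the relation on the left by $S^{-1}$ should give
\begin{equation*}
    B_{m} = \sum_{m'=1}^{M} t_{mm'} L_{m'}, \qquad m = 1, \ldots, M.
\end{equation*}
To check this, substitute the expansion of $L_{m'}$ and use $\sum_{m'} t_{mm'} s_{m'm''} = \delta_{mm''}$. No linear independence of the $B_{m}$ is needed here, since this is just an identity of formal combinations.

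Each $B_{m}$ is then a real linear combination of elements of $\mathcal{L}(iH,iV)$, so it belongs to $\mathcal{L}(iH,iV)$, as claimed. The only point deserving care is that the coefficients $t_{mm'}$ are real, since the Lie algebra is a real vector space of skew-Hermitian matrices and complex coefficients would not be allowed. This holds because $S$ is real, so I expect no real obstacle.
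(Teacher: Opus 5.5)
Your proof is correct and is the standard linear-algebra argument the paper has in mind; the paper states this lemma without proof and calls it standard. You invert the real matrix $S$, write each $B_{m}$ as a real linear combination of the $L_{m'}$, and use that $\mathcal{L}(iH,iV)$ is a real vector space, correctly noting that $S^{-1}$ is real so only real coefficients appear.
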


This lemma is used in the proof of Theorem~\ref{CTLSCC_th:shifts} and in the proofs of Controllability Conditions~\ref{CTLSCC_th:twonotregular1},~\ref{CTLSCC_th:threenotstrong}, and~\ref{CTLSCC_th:threenotregular} in Theorems~\ref{CTLSCC_th:two_transitions_classification},~\ref{CTLSCC_th:three_transitions_classification}.

\subsection{Preservation of controllability under added controlled energy-level shifts} \label{CTLSCC_subsec:shiftsstatement}

The first theorem provides a sufficient condition for controllability after the addition of controlled energy-level shifts.
\begin{theorem}\label{CTLSCC_th:shifts}
    Consider a $ V $--regular system $ (H, V) $ and a system $ (H, V') $  with the same free Hamiltonian $ H $. Let the interaction Hamiltonians $ V $ and $ V' $ of these systems be such that  
    \begin{equation}\label{CTLSCC_eq:technoname8}
            \begin{split}
                & iV = \sum\limits_{i < j} \left( v_{ij}^{\Im} B_{ij}^{\Re} + v_{ij}^{\Re} B_{ij}^{\Im} \right) + \sum\limits_{k \in K} v_{k} B_{k}^{\D}, \\
                & iV' = iV + \sum\limits_{k \in K'} v'_{k} B_{k}^{\D}, \quad K \cap K' = \varnothing.
            \end{split}
        \end{equation}
        If the system $ (H, V) $ is controllable and $ \displaystyle \sum\limits_{k \in K} v_{k} B_{k}^{\D} \in \mathcal{L}(iH, \, iV') $, then the system $ (H, V') $ is also controllable.
\end{theorem}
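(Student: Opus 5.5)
The plan is to show that $iV \in \mathcal{L}' := \mathcal{L}(iH, iV')$. Then $\mathcal{L}(iH, iV) \subset \mathcal{L}'$, and controllability of $(H,V)$ gives $\mathcal{L}' \supset \mathcal{L}(iH,iV) \supset \mathfrak{su}(N)$. By Theorem~\ref{CTLSCC_th:maincriterion} (with the convention $i\mathbb{I}\in\mathcal{L}'$), $(H,V')$ is then controllable.

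Write $iV = W + D_K$ and $iV' = W + D_K + D_{K'}$, where
\begin{equation*}
W = \sum_{i<j}\left(v_{ij}^{\Im}B_{ij}^{\Re}+v_{ij}^{\Re}B_{ij}^{\Im}\right), \quad D_K=\sum_{k\in K}v_kB_k^{\D}, \quad D_{K'}=\sum_{k\in K'}v'_kB_k^{\D}.
\end{equation*}
By hypothesis $D_K\in\mathcal{L}'$. So it suffices to show $W\in\mathcal{L}'$, since then $iV=W+D_K\in\mathcal{L}'$. The difficulty is that $iV'-D_K=W+D_{K'}$ mixes $W$ with the unknown diagonal part $D_{K'}$, and we must separate them.

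To separate them, use $\operatorname{ad}_{iH}$. By \eqref{CTLSCC_eq:hdecomposition}, \eqref{CTLSCC_eq:DDcomm}, \eqref{CTLSCC_eq:ReDcomm} and \eqref{CTLSCC_eq:ImDcomm}, $\operatorname{ad}_{iH}$ kills all diagonal elements $B_k^{\D}$. On each plane $\operatorname{span}(B_{ij}^{\Re},B_{ij}^{\Im})$ it acts as a rotation generator with frequency $\omega_{ij}=E_i-E_j$, namely $[iH,B_{ij}^{\Re}]=-(E_i-E_j)B_{ij}^{\Im}$ and $[iH,B_{ij}^{\Im}]=(E_i-E_j)B_{ij}^{\Re}$, up to the signs fixed by those relations. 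Hence $\operatorname{ad}_{iH}(iV')=\operatorname{ad}_{iH}(W)\in\mathcal{L}'$.

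Let $F$ be the direct sum of the planes $\operatorname{span}(B_{ij}^{\Re},B_{ij}^{\Im})$ over the allowed transitions $v_{ij}\neq 0$. Then $W\in F$ and $\operatorname{ad}_{iH}$ preserves $F$. This is where $V$-regularity enters, and it is the key step: all $\omega_{ij}$ with $v_{ij}\ne0$ are nonzero, so $\operatorname{ad}_{iH}|_F$ is invertible. Its annihilating polynomial
\begin{equation*}
p(x)=\prod_{\omega^2}\left(x^2+\omega^2\right),
\end{equation*}
taken over the distinct values $\omega^2=\omega_{ij}^2$, satisfies $p(0)\neq0$.

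Now write $p(x)=x\,r(x)+p(0)$. From $p(\operatorname{ad}_{iH})W=0$ we get
\begin{equation*}
W=-\frac{1}{p(0)}\,r(\operatorname{ad}_{iH})\bigl(\operatorname{ad}_{iH}W\bigr).
\end{equation*}
The right-hand side is a real linear combination of iterated brackets $\operatorname{ad}_{iH}^n(iV')$ with $n\ge1$, all of which lie in $\mathcal{L}'$. Therefore $W\in\mathcal{L}'$. Equivalently, one can apply Lemma~\ref{CTLSCC_th:lemmainvertiblebasis} to the elements $\operatorname{ad}_{iH}^{n}(iV')$ expressed in the rotated components of $W$. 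The relevant matrix is block-Vandermonde in the distinct $\omega^2$, with nonzero entries precisely because $V$-regularity makes every $\omega_{ij}\ne0$.

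The only delicate point is this inversion. Without $V$-regularity, a transition between degenerate levels would have $\omega_{ij}=0$ and be invisible to $\operatorname{ad}_{iH}$. Its component could then not be extracted from $W+D_{K'}$, and the argument would fail. Everything else is bookkeeping with the commutation rules.
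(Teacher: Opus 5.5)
Your proof is correct and follows the paper's route. Both arguments show that the off-diagonal part $W$ lies in $\mathcal{L}(iH,iV')$ by applying powers of $\ad_{iH}$ to $iV'$, which annihilate every diagonal term, and by inverting on the allowed-transition planes, which works precisely because $V$-regularity makes every transition frequency nonzero; adding $D_K\in\mathcal{L}(iH,iV')$ then gives $iV$, and hence $\mathfrak{su}(N)$. The difference is only packaging: the paper groups transitions into frequency classes and inverts a Vandermonde matrix in $\Delta E_p^4$ built from $\ad_{iH}^{4m}(iV')$, recovering each class component $\tilde B_p$ separately, whereas your identity $p(x)=x\,r(x)+p(0)$ recovers $W$ in one step.
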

\begin{proof}
    Consider a partition of the set of all allowed transitions (i.e., those determined by nonzero transition matrix elements $ v_{ij} \neq 0 $) of the system into $ P $ classes $ \mathcal{E}_{p} $ (where $p=1,\dots, P$) based on their transition frequencies $ |\Delta E_{p}| $
    \begin{equation}\label{CTLSCC_eq:technoname6}
\mathcal{E}_{p} = \left\{(i, \, j): \ i < j, \, v_{ij} \neq 0, \, |E_{i} - E_{j}| =: |\Delta E_{p}| \right\}, \end{equation}
where the corresponding frequencies are pairwise distinct
\begin{equation}
    |\Delta E_{p}| \neq |\Delta E_{p'}| \ \text{ for } \ p \neq p'.
\end{equation}
In terms of the structural graph $ \mathcal{G}_V $, this amounts to partitioning its edges into classes according to their weights, which are determined by the values $ |\Delta E_{p}| $.
    
Then, the matrix $ iV' $ can be written as
    \begin{equation}
        i V' = \sum\limits_{p} \tilde{B}_{p} + \sum\limits_{k \in K} v_{k} B_{k}^{\D} + \sum\limits_{k \in K'} v'_{k} B_{k}^{\D},
    \end{equation}
    where
    \begin{equation}
        \tilde{B}_{p} :=  \sum\limits_{(i, \, j) \in \mathcal{E}_{p}} \left( v_{i j}^{\Im} B_{i j}^{\Re} + v_{i j}^{\Re} B_{i j}^{\Im} \right).
    \end{equation}
    According to the commutation relations \eqref{CTLSCC_eq:ReDcomm} and \eqref{CTLSCC_eq:ImDcomm},
    \begin{equation}
        \ad_{i H}^{2}(B_{ij}) = -(E_{i} - E_{j})^{2} B_{ij} 
    \end{equation}
    and consequently, 
    \begin{equation}
        \ad_{i H}^{4m}(\tilde{B}_{p}) = \Delta E_{p}^{4m}  \tilde{B}_{p}.
    \end{equation}
    This allows us to consider column vectors of matrices $ B = \begin{pmatrix}
        \tilde{B}_{1} & \tilde{B}_{2} & \ldots & \tilde{B}_{P} 
    \end{pmatrix}^{T} $ and $ L = \begin{pmatrix}
        L_{1} & L_{2} & \ldots & L_{P} 
    \end{pmatrix}^{T} $, where
    \begin{equation}
        L_{m} = \ad_{iH}^{4m} \left( i V' \right) = \sum\limits_{p} \ad_{iH}^{4m} \big(\tilde{B}_{p} \big) + \sum\limits_{k \in K} v_{k} \, \underbrace{\ad_{iH}^{4m} \big(B_{k}^{\D}\big)}_{0} +  \sum\limits_{k \in K'} v'_{k} \, \underbrace{\ad_{iH}^{4m} \big(B_{k}^{\D}\big)}_{0} = \sum\limits_{p} \Delta E_{p}^{4m}  \tilde{B}_{p}
    \end{equation}
    for $ m = 1, 2, \, \ldots, \, P $. Therefore, the matrix $ S $ in Lemma~\ref{CTLSCC_th:lemmainvertiblebasis} has the form 
    \begin{equation}\label{CTLSCC_eq:technoname5}
        S = \begin{pmatrix}
             \Delta E_{1}^{4} & \Delta E_{2}^{4} & \ldots & \Delta E_{P}^{4} \\
             \Delta E_{1}^{8} & \Delta E_{2}^{8} & \ldots & \Delta E_{P}^{8} \\
             \vdots & \vdots & \ddots & \vdots \\
             \Delta E_{1}^{4P} & \Delta E_{2}^{4P} & \ldots  & \Delta E_{P}^{4P}
        \end{pmatrix}.
    \end{equation}
    The Vandermonde determinant formula gives for the determinant of the matrix \eqref{CTLSCC_eq:technoname5} the expression
    \begin{equation}
        \det S = \prod_{p} \Delta E_{p}^{4} \ \det \begin{pmatrix}
             1 & 1 & \ldots & 1 \\
             \Delta E_{1}^{4} & \Delta E_{2}^{4} & \ldots & \Delta E_{P}^{4} \\
             \vdots & \vdots & \ddots & \vdots \\
             \Delta E_{1}^{4(P-1)} & \Delta E_{2}^{4(P-1)} & \ldots  & \Delta E_{P}^{4(P-1)}
        \end{pmatrix} = \prod_{p} \Delta E_{p}^{4} \ \prod_{p < p'} \left( \Delta E_{p'}^{4} -  \Delta E_{p}^{4} \right).
    \end{equation}
    $ V$--regularity implies that $ \displaystyle \prod_{p} \Delta E_{p}^{4} \neq 0 $, while $ \displaystyle \prod_{p < p'} \left( \Delta E_{p'}^{4} -  \Delta E_{p}^{4} \right) \neq 0 $ because the classes $ \mathcal{E}_{p} $ in the partition defined by the formula \eqref{CTLSCC_eq:technoname6} are associated with pairwise distinct absolute transition frequencies. Thus, $ \det S \neq 0 $, and we obtain $ \tilde{B}_{p} \in \mathcal{L}(iH, iV') $ for all $ p $. Since, by assumption, $ \displaystyle \sum\limits_{k \in K} v_{k} B_{k}^{\D} \in \mathcal{L}(iH, iV') $, it follows from the decompositions \eqref{CTLSCC_eq:technoname8} that $ iV \in \mathcal{L}(iH, iV') $ as a linear combination of $ \tilde{B}_{p} $ and $ \displaystyle \sum\limits_{k \in K} v_{k} B_{k}^{\D} $.
    
    This allows us to conclude that
    \begin{equation}
        \mathcal{L}(iH, \, iV') \supset \mathcal{L}(iH, \, iV) \supset \mathfrak{su}(N).
    \end{equation}
    Therefore, the system $ (H, V') $ is also controllable.
\end{proof}

We use this theorem in the form of the following statement.
\begin{corollary*}
    Adding controlled energy-level shifts to the interaction Hamiltonian $ V $ of a $ V $--regular controllable system that initially has no such shifts preserves controllability.
\end{corollary*}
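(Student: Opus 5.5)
The corollary follows from Theorem~\ref{CTLSCC_th:shifts} by specializing to $K=\varnothing$. Let $(H,V)$ be $V$--regular and controllable, with $V$ having no controlled energy-level shifts. Then in the decomposition \eqref{CTLSCC_eq:technoname8} we have $v_k=0$ for all $k$, so we may take $K=\varnothing$. Any interaction Hamiltonian $V'$ obtained by adding controlled shifts has the form $iV' = iV + \sum_{k\in K'} v'_k B_k^{\D}$ for some index set $K'$. The disjointness condition $K\cap K'=\varnothing$ then holds trivially.

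It remains to check the one nontrivial hypothesis of Theorem~\ref{CTLSCC_th:shifts}, namely $\sum_{k\in K} v_k B_k^{\D}\in\mathcal{L}(iH,iV')$. With $K=\varnothing$ this sum is empty, hence equal to the zero matrix. The zero matrix belongs to every Lie algebra, in particular to $\mathcal{L}(iH,iV')$. The hypotheses that $H$ is common to both systems and that $(H,V)$ is $V$--regular and controllable hold by assumption.

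Theorem~\ref{CTLSCC_th:shifts} therefore applies and gives controllability of $(H,V')$. For intuition, the mechanism is visible from that proof. The Vandermonde argument over the frequency classes $\mathcal{E}_p$ shows that each $\tilde B_p$ lies in $\mathcal{L}(iH,iV')$, and this step needs only $V$--regularity. Hence $iV=\sum_p\tilde B_p\in\mathcal{L}(iH,iV')$, which gives $\mathcal{L}(iH,iV')\supset\mathcal{L}(iH,iV)\supset\mathfrak{su}(N)$.

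There is essentially no obstacle. The only point to be careful about is that \emph{$V$--regularity} refers to the original $V$. Adding diagonal terms does not change the off-diagonal elements $v_{ij}$, so the set of allowed transitions, and hence the nonvanishing of every $\Delta E_p$, is the same for $V$ and $V'$.
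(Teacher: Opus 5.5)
Your proof is correct and follows the paper's route: the corollary is the special case $K=\varnothing$ of Theorem~\ref{CTLSCC_th:shifts}. In that case the inclusion hypothesis $\sum_{k\in K} v_k B_k^{\D}\in\mathcal{L}(iH,iV')$ is automatic, because the empty sum is the zero matrix. Your further remark, that adding diagonal terms leaves the off-diagonal elements, $V$--regularity and the frequency classes $\mathcal{E}_p$ unchanged, is a correct but inessential sanity check.
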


The conditions for three-level systems derived in Section~\ref{CTLSCC_sec:threelevelsystems} illustrate that the addition of controlled energy-level shifts may make a controllable system uncontrollable. Moreover, both the condition that the linear combination of the elements $ B_{k}^{\D}$, describing the original controlled energy-level shifts, belongs to the new dynamical Lie algebra $ \mathcal{L}(iH, \, iV') $, and the $ V$--regularity condition are essential and cannot be omitted. Examples illustrating the necessity of these two conditions are provided in Subsection~\ref{CTLSCC_subsec:highlight}.

\subsection{A basis-element sufficient condition for operator controllability}\label{CTLSCC_subsec:algebrastatement}

To analyze certain classes of three-level systems and determine the structure of the dynamical Lie algebra, we need to identify which elements of the basis $ \mathcal{B} $ belong to this algebra. This raises a question which, although technical, is important: which subset of elements of $ \mathcal{B} $ contained in $ \mathcal{L}(iH, iV) $ is sufficient to conclude that the dynamical Lie algebra contains the entire Lie algebra $\mathfrak{su}(N) $ so that the system is controllable? A possible approach to answering this question is provided by the following theorem \cite{CTLSCC_bib:SchirmerSolomonFu2001}.

\begin{theorem}\label{CTLSCC_th:theoremfromSSF}
    Let $ \mathcal{L} $ be the dynamical Lie algebra of an arbitrary $ N $-level quantum system.  If $ B_{k(k+1)}^{\Re},  B_{k(k+1)}^{\Im} \in \mathcal{L} $ for $ k = 1, \, \ldots, (N-1) $, then $ \mathfrak{su}(N) \subset \mathcal{L} $.  
\end{theorem}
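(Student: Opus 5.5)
The plan is to generate, by repeated commutators of the nearest-neighbour elements $B_{k(k+1)}^{\Re}$, $B_{k(k+1)}^{\Im}$, all off-diagonal basis elements $B_{ij}^{\Re}, B_{ij}^{\Im}$ for $i<j$. I will then produce the traceless diagonal part from these. Since the off-diagonal elements together with the differences $B_{i}^{\D}-B_{j}^{\D}$ span $\mathfrak{su}(N)$, this gives $\mathfrak{su}(N)\subset\mathcal{L}$. Throughout I use only the commutation rules \eqref{CTLSCC_eq:ReRecomm}--\eqref{CTLSCC_eq:DDcomm} and closedness of $\mathcal{L}$ under brackets and real linear combinations.

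\emph{Step 1 (diagonal generators).} For each $k$, rule \eqref{CTLSCC_eq:ImImcomm} (or the equivalent $[B^{\Re},B^{\Im}]$ computation) gives
\[
[B_{k(k+1)}^{\Re}, B_{k(k+1)}^{\Im}] = \pm 2\left(B_{k}^{\D}-B_{k+1}^{\D}\right),
\]
up to the sign fixed by \eqref{CTLSCC_eq:ReImcomm} with $B_{kk}^{\Im}=2B_k^{\D}$. Hence $B_k^{\D}-B_{k+1}^{\D}\in\mathcal{L}$ for $k=1,\dots,N-1$. These $N-1$ elements span the traceless diagonal subalgebra, the Cartan subalgebra of $\mathfrak{su}(N)$.

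\emph{Step 2 (induction on the distance $j-i$).} Suppose $B_{ij}^{\Re},B_{ij}^{\Im}\in\mathcal{L}$ for some $i<j$. Bracket them with $B_{j(j+1)}^{\Re},B_{j(j+1)}^{\Im}$. By \eqref{CTLSCC_eq:ReRecomm},
\[
[B_{ij}^{\Re},B_{j(j+1)}^{\Re}] = -B_{i(j+1)}^{\Re},
\]
because only the $\delta_{j_1 i_2}$ term survives when $i,j,j+1$ are distinct. Similarly, \eqref{CTLSCC_eq:ReImcomm} gives $[B_{ij}^{\Re},B_{j(j+1)}^{\Im}]=-B_{i(j+1)}^{\Im}$. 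Thus $B_{i(j+1)}^{\Re},B_{i(j+1)}^{\Im}\in\mathcal{L}$. Starting from the hypothesis, the case $j=i+1$, induction on $j-i$ yields every $B_{ij}^{\Re},B_{ij}^{\Im}$ with $i<j$.

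\emph{Step 3 (conclusion).} The family $\{B_{ij}^{\Re},B_{ij}^{\Im}\}_{i<j}\cup\{B_k^{\D}-B_{k+1}^{\D}\}_{k=1}^{N-1}$ has $N(N-1)+(N-1)=N^2-1$ linearly independent traceless skew-Hermitian matrices. It is therefore a basis of $\mathfrak{su}(N)$, and it lies in $\mathcal{L}$.

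The only point requiring care is the index bookkeeping in Step 2. I must confirm that exactly one Kronecker delta survives in each bracket, which holds precisely because $i$, $j$, $j+1$ are pairwise distinct. I must also handle signs via the antisymmetry $B_{ij}^{\Re}=-B_{ji}^{\Re}$. Signs are harmless in any case, since $\mathcal{L}$ is a real vector space. Step 1 even shows the diagonal part comes for free, so the argument does not need the assumption $i\mathbb{I}\in\mathcal{L}$.
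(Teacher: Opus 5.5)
Your proof is correct and is essentially the paper's argument. The paper obtains this theorem as a special case of Theorem~\ref{CTLSCC_th:algebrastatement}, whose proof chains commutators along a path in $\mathcal{G}_{\mathcal{L}}$ (Lemma~\ref{CTLSCC_th:lemma1}) and uses $\frac{1}{2}[B_{ij}^{\Re},B_{ij}^{\Im}]=B_j^{\D}-B_i^{\D}$ (Lemma~\ref{CTLSCC_th:lemma2}). Because every nearest-neighbour edge is already double in your setting, your chaining yields both $B^{\Re}_{ij}$ and $B^{\Im}_{ij}$ directly, so you can skip the propagation step of Lemma~\ref{CTLSCC_th:lemma2}.

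One small correction to Step~1: the bracket comes from \eqref{CTLSCC_eq:ReImcomm}, not \eqref{CTLSCC_eq:ImImcomm} (which gives $0$ here), and the sign is $[B_{k(k+1)}^{\Re},B_{k(k+1)}^{\Im}]=2(B_{k+1}^{\D}-B_k^{\D})$.
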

The interpretation of this result is straightforward: if elements $ B_{k(k+1)}^{\Re},  B_{k(k+1)}^{\Im} $ belong to the dynamical Lie algebra $ \mathcal{L}(iH, iV) $, then population transfer can be implemented between each corresponding pair of levels, which ensures controllability of the system. 

Theorem~\ref{CTLSCC_th:theoremfromSSF} is evidently more suitable for the case in which population transfer occurs between adjacent levels; however, this particular choice of elements of the basis $ \mathcal{B} $ is not necessary for establishing controllability. To generalize this result by accommodating alternative choices of these elements, it is convenient to use a graph-based approach.

Let $ \mathcal{G}_{\mathcal{L}} $ be a graph with
\begin{equation}
    V(\mathcal{G}_{\mathcal{L}}) = \left\{1, \, 2, \, \ldots, \, N\right\}, \quad E(\mathcal{G}_{\mathcal{L}}) = \left\{i \leftrightarrow j: \ B_{ij} \in \mathcal{L}\right\}.
\end{equation}
We call an edge $ i \leftrightarrow j \in E(\mathcal{G}_{\mathcal{L}}) $ a \textit{double edge} if both $ B_{ij}^{\Re} \in \mathcal{L} $ and $ B_{ij}^{\Im} \in \mathcal{L} $. 

Consider the following properties of $  \mathcal{G}_{\mathcal{L}} $.
\begin{lemma}\label{CTLSCC_th:lemma1}
    If the graph $\mathcal{G}_{\mathcal{L}} $ is connected, then it is complete.
\end{lemma}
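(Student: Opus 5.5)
The key tool is the commutation relations \eqref{CTLSCC_eq:ReRecomm}--\eqref{CTLSCC_eq:ImImcomm}. They show that whenever $i\leftrightarrow j$ and $j\leftrightarrow k$ are edges of $\mathcal{G}_{\mathcal{L}}$ with $i,j,k$ pairwise distinct, the commutator of the corresponding basis elements is, up to sign, a single basis element $B_{ik}$. Since $\mathcal{L}$ is a Lie algebra, this bracket lies in $\mathcal{L}$, so $i\leftrightarrow k$ is also an edge. The lemma then follows from a path-shortening argument: in a connected graph where every path of length two can be closed into a triangle, every pair of vertices is adjacent.

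\textbf{Step 1: closing a path of length two.} Let $i,j,k$ be distinct with $B_{ij},B_{jk}\in\mathcal{L}$. Using the symmetries $B_{ij}^{\Re}=-B_{ji}^{\Re}$ and $B_{ij}^{\Im}=B_{ji}^{\Im}$, write both elements as $B_{ij}$ and $B_{jk}$ with the shared index $j$ in the positions shown. Apply the relevant relation with $(i_1,j_1)=(i,j)$ and $(i_2,j_2)=(j,k)$. Only the term with $\delta_{j_1 i_2}=\delta_{jj}=1$ survives, because $i\neq k$, $i\neq j$ and $j\neq k$. The four type combinations give
\begin{equation*}
[B^{\Re}_{ij},B^{\Re}_{jk}]=-B^{\Re}_{ik},\quad [B^{\Re}_{ij},B^{\Im}_{jk}]=-B^{\Im}_{ik},\quad [B^{\Im}_{ij},B^{\Im}_{jk}]=B^{\Re}_{ik},
\end{equation*}
and $[B^{\Im}_{ij},B^{\Re}_{jk}]=-[B^{\Re}_{jk},B^{\Im}_{ij}]$, which by \eqref{CTLSCC_eq:ReImcomm} is $\pm B^{\Im}_{ik}$. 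In every case $\pm B_{ik}\in\mathcal{L}$, so $i\leftrightarrow k\in E(\mathcal{G}_{\mathcal{L}})$.

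\textbf{Step 2: induction on path length.} Fix $a\neq b$. By connectivity there is a path $a=u_0,u_1,\dots,u_r=b$, and we may take it simple, so its vertices are distinct. If $r\ge 2$, apply Step 1 to the distinct triple $u_0,u_1,u_2$ to get the edge $u_0\leftrightarrow u_2$. This yields the shorter simple path $u_0,u_2,\dots,u_r$. Repeating reduces the path to length one, so $a\leftrightarrow b$ is an edge. Since $a,b$ were arbitrary, $\mathcal{G}_{\mathcal{L}}$ is complete.

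\textbf{Main obstacle.} The only delicate point is the bookkeeping in Step 1. One has to reorder indices via the (anti)symmetry of $B^{\Re}$ and $B^{\Im}$, and confirm that exactly one Kronecker delta is nonzero for distinct $i,j,k$. That delta must produce a nonzero multiple of a basis element indexed by $(i,k)$, not $(i,i)$ or a diagonal element. Signs are irrelevant, because membership in $\mathcal{L}$ is preserved under scalar multiplication.
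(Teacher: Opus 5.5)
Your proof is correct and takes essentially the same approach as the paper. The paper also chooses a simple path from $i$ to $j$, and uses the same four commutators with the shared index in the middle position to pass inductively from the edge $i\leftrightarrow k_p$ to $i\leftrightarrow k_{p+1}$; your signs, e.g.\ $[B^{\Re}_{ij},B^{\Re}_{jk}]=-B^{\Re}_{ik}$ and $[B^{\Im}_{ij},B^{\Im}_{jk}]=B^{\Re}_{ik}$, agree with the paper's.
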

\begin{proof}
    Let $ i $ and $ j $ be distinct vertices of the connected graph $ \mathcal{G}_{\mathcal{L}} $. To prove that $ \mathcal{G}_{\mathcal{L}} $ is complete, we have to show that $ i \leftrightarrow j  \in E\left(\mathcal{G}_{\mathcal{L}} \right)$. By connectivity, either this is trivially true or there exists a path
    \begin{equation}
        i \leftrightarrow k_{1} \leftrightarrow k_{2} \leftrightarrow \ldots \leftrightarrow k_{s} \leftrightarrow j,
    \end{equation}
    where $ k_{p} \in V\left(\mathcal{G}_{\mathcal{L}}\right) $ and $ i \leftrightarrow k_{1}, \, k_{1} \leftrightarrow k_{2}, \, \ldots, \, k_{s} \leftrightarrow j \in E\left(\mathcal{G}_{\mathcal{L}}\right)  $.
    We consider this path consisting of unique vertices $ k_{p} \neq k_{p'} $ for $ p \neq p' $. Otherwise, the cycle between two occurrences of a repeated vertex can be removed to obtain a shorter path. 

    Let $ i \leftrightarrow k_{p} \in E\left(\mathcal{G}_{\mathcal{L}}\right) $. This means that $ B_{ik_{p}} \in \mathcal{L} $ . According to the relations \eqref{CTLSCC_eq:ReRecomm}--\eqref{CTLSCC_eq:ImImcomm},
    \begin{equation}
        [B_{i k_{p}}^{\Im}, \,  B_{k_{p} k_{p+1}}^{\Im}] = -[B_{i k_{p}}^{\Re}, \,  B_{k_{p} k_{p+1}}^{\Re}] = B_{i k_{p+1}}^{\Re} \ \text{ and } \  -[B_{i k_{p}}^{\Re}, \,  B_{k_{p} k_{p+1}}^{\Im}] = -[B_{i k_{p}}^{\Im}, \,  B_{k_{p} k_{p+1}}^{\Re}] = B_{i k_{p+1}}^{\Im}.
    \end{equation}
    Therefore, $ B_{i k_{p+1}} \in \mathcal{L} $ and $ i \leftrightarrow k_{p+1} \in E \left(\mathcal{G}_{\mathcal{L}}\right) $. Applying this argument inductively, with $ j = k_{s+1} $, we obtain $ i \leftrightarrow j \in E(\mathcal{G}_{\mathcal{L}}) $. 
\end{proof}
\begin{lemma}\label{CTLSCC_th:lemma2}
    If the graph  $\mathcal{G}_{\mathcal{L}} $ is complete and has at least one double edge, then all other edges in $ E\left(\mathcal{G}_{\mathcal{L}}\right) $ are double.
\end{lemma}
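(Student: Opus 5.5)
The idea is to propagate the ``doubleness'' of one edge to its neighbours using the commutation relations \eqref{CTLSCC_eq:ReRecomm}--\eqref{CTLSCC_eq:ImImcomm}. Completeness of $\mathcal{G}_{\mathcal{L}}$ then lets us reach every edge in at most two steps.

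Let $i \leftrightarrow j$ be a double edge, so $B_{ij}^{\Re}, B_{ij}^{\Im} \in \mathcal{L}$. First I would show that every edge sharing a vertex with it is double. Take a third vertex $k \neq i, j$. By completeness $i \leftrightarrow k \in E(\mathcal{G}_{\mathcal{L}})$, so at least one of $B_{ik}^{\Re}$, $B_{ik}^{\Im}$ lies in $\mathcal{L}$. The same commutators used in the proof of Lemma~\ref{CTLSCC_th:lemma1} (with the path $j \leftrightarrow i \leftrightarrow k$) give, up to sign:
\begin{equation*}
[B_{ji}^{\Re}, B_{ik}^{\Re}] = \pm B_{jk}^{\Re}, \quad [B_{ji}^{\Im}, B_{ik}^{\Re}] = \pm B_{jk}^{\Im}, \quad [B_{ji}^{\Re}, B_{ik}^{\Im}] = \pm B_{jk}^{\Im}, \quad [B_{ji}^{\Im}, B_{ik}^{\Im}] = \pm B_{jk}^{\Re}.
\end{equation*}
Here $B_{ji}^{\Re} = -B_{ij}^{\Re}$ and $B_{ji}^{\Im} = B_{ij}^{\Im}$. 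Whichever single element $B_{ik}$ we have, bracketing it with both $B_{ij}^{\Re}$ and $B_{ij}^{\Im}$ produces both $B_{jk}^{\Re}$ and $B_{jk}^{\Im}$ up to sign. So $j \leftrightarrow k$ is double, and by the symmetric argument (swapping the roles of $i$ and $j$) so is $i \leftrightarrow k$.

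For an arbitrary edge $k \leftrightarrow l$ with $\{k,l\} \cap \{i,j\} = \varnothing$, I apply the previous step twice. First, $i \leftrightarrow k$ is double, since it shares the vertex $i$ with the double edge $i\leftrightarrow j$. Then $k \leftrightarrow l$ is double, since it shares the vertex $k$ with the double edge $i \leftrightarrow k$. Every edge either shares a vertex with $i \leftrightarrow j$ or is of this disjoint type, so all edges are double.

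The only real obstacle is the bookkeeping of signs and index orderings in the commutators. When verifying each bracket with \eqref{CTLSCC_eq:ReRecomm}--\eqref{CTLSCC_eq:ImImcomm}, I must check that the Kronecker deltas leave exactly one nonzero term and that no $B^{\D}$-type contribution appears. This holds because $i, j, k$ are distinct: the only shared index in $[B_{ji}, B_{ik}]$ is $i$, so exactly one delta fires.
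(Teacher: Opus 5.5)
Your proof is correct, but it takes a different route from the paper's.

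\textbf{The paper's route.} It first extracts the diagonal element $\frac{1}{2}[B_{ij}^{\Re}, B_{ij}^{\Im}] = B_j^{\D} - B_i^{\D}$. It then uses the off-diagonal--diagonal relations \eqref{CTLSCC_eq:ReDcomm}, \eqref{CTLSCC_eq:ImDcomm}: bracketing whichever of $B_{jk}^{\Re}, B_{jk}^{\Im}$ lies in $\mathcal{L}$ with this one fixed element gives the other. So every edge incident to $j$ becomes double, and the step is repeated from the new double edge.

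\textbf{Your route.} You never leave the off-diagonal sector and instead reuse the triangle brackets from Lemma~\ref{CTLSCC_th:lemma1}. Pairing both $B_{ij}^{\Re}, B_{ij}^{\Im}$ with a single element on the adjacent edge $i\leftrightarrow k$ produces both elements on the third side $j\leftrightarrow k$. Your four brackets check out against \eqref{CTLSCC_eq:ReRecomm}--\eqref{CTLSCC_eq:ImImcomm}. Only $\delta_{j_1 i_2}$ fires, the signs are $-,-,-,+$ in the order you list them, and no diagonal terms appear.

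\textbf{What each buys.} The paper's version needs a single bracket per edge. Along the way it produces the Cartan elements $B_j^{\D}-B_i^{\D}$, which are exactly what is later needed to conclude $\mathfrak{su}(N)\subset\mathcal{L}$ in Theorem~\ref{CTLSCC_th:algebrastatement}. Yours costs a little more sign bookkeeping. In exchange, your propagation step only needs $i\leftrightarrow k$ to be an edge, not $j\leftrightarrow k$, because it creates $B_{jk}$ itself. Iterated along paths, it therefore gives ``connected plus one double edge $\Rightarrow$ complete with all edges double'' directly, which effectively merges Lemmas~\ref{CTLSCC_th:lemma1} and~\ref{CTLSCC_th:lemma2} into one step.
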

\begin{proof}
    Let $ i \leftrightarrow j \in E\left(\mathcal{G}_{\mathcal{L}} \right) $ be double. Then
    \begin{equation} \label{CTLSCC_eq:technoname7}
    \frac{1}{2} \left[ B_{ij}^{\Re}, \, B_{ij}^{\Im} \right] = \left(B_{j}^{\D} - B_{i}^{\D} \right) \in \mathcal{L}
    \end{equation}
    By the completeness of the graph, for a vertex $ k \neq i, \, j $, there exists an edge $ j \leftrightarrow k $ and $B_{jk} \in \mathcal{L} $. It is double because
    \begin{equation}
        [B_{jk}^{\Re}, \, (B_{j}^{\D} - B_{i}^{\D})] = B_{jk}^{\Im}, \quad -[ B_{jk}^{\Im}, \, (B_{j}^{\D} - B_{i}^{\D})] = B_{jk}^{\Re}
    \end{equation}
    according to \eqref{CTLSCC_eq:ReDcomm} and \eqref{CTLSCC_eq:ImDcomm}. Making the same steps with $ j $ instead of $ i $, $ k $ instead of $ j $, and a new $ k'$ instead of $ k $, we obtain that an arbitrary edge $ k \leftrightarrow k' $ of $\mathcal{G}_{\mathcal{L}} $ is double.
\end{proof}

If the graph $ \mathcal{G}_{\mathcal{L}} $ is complete and has all edges being double, the algebra $ \mathcal{L} $ includes all elements $ B_{ij}^{\Re} $, $ B_{ij}^{\Im} $, and $ \left(B_j^{\D} - B_i^{\D}\right) $. Thus, combining Lemmas \ref{CTLSCC_th:lemma1} and \ref{CTLSCC_th:lemma2} allows us to conclude that
\begin{theorem}\label{CTLSCC_th:algebrastatement}
    If the graph $ \mathcal{G}_{\mathcal{L}} $ is connected and has at least one double edge, then $ \mathfrak{su}(N) \subset \mathcal{L} $.
\end{theorem}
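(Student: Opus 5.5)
The proof combines Lemmas~\ref{CTLSCC_th:lemma1} and~\ref{CTLSCC_th:lemma2} with a spanning argument for $\mathfrak{su}(N)$.

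\emph{Step 1 (completeness).} Since $\mathcal{G}_{\mathcal{L}}$ is connected, Lemma~\ref{CTLSCC_th:lemma1} shows that it is complete. So for every pair $i<j$ at least one of $B_{ij}^{\Re}$, $B_{ij}^{\Im}$ lies in $\mathcal{L}$.

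\emph{Step 2 (all edges double).} By hypothesis the complete graph has a double edge. Lemma~\ref{CTLSCC_th:lemma2} then shows that every edge is double, so $B_{ij}^{\Re}, B_{ij}^{\Im} \in \mathcal{L}$ for all $i<j$.

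\emph{Step 3 (diagonal part).} Relation~\eqref{CTLSCC_eq:technoname7} gives $\tfrac{1}{2}[B_{ij}^{\Re},B_{ij}^{\Im}] = B_j^{\D}-B_i^{\D} \in \mathcal{L}$ for every pair $i<j$. In particular it holds for the consecutive pairs $(k,k+1)$, $k=1,\dots,N-1$.

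\emph{Step 4 (spanning).} The traceless diagonal skew-Hermitian matrices form an $(N-1)$-dimensional space. It is spanned by $B_{k+1}^{\D}-B_k^{\D}$, $k=1,\dots,N-1$, since these are linearly independent and traceless. Together with the $N(N-1)$ off-diagonal elements $B_{ij}^{\Re}, B_{ij}^{\Im}$, this gives $N^2-1$ linearly independent traceless skew-Hermitian matrices, all lying in $\mathcal{L}$. Hence $\mathfrak{su}(N)=\operatorname{span}$ of these elements is contained in $\mathcal{L}$, because $\mathcal{L}$ is a linear space.

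Alternatively, once Step 2 is done one may invoke Theorem~\ref{CTLSCC_th:theoremfromSSF} directly, since $B_{k(k+1)}^{\Re}, B_{k(k+1)}^{\Im} \in \mathcal{L}$.

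\emph{Main point to check.} The only step needing care is the inductive application of Lemma~\ref{CTLSCC_th:lemma2}: propagation must reach every edge, not only those adjacent to the initial double edge. This follows from completeness. Once one edge at a vertex $j$ is double, every edge $j\leftrightarrow k$ becomes double. Iterating from each new double edge reaches any edge $k\leftrightarrow k'$ in at most two steps. The rest is straightforward bookkeeping.
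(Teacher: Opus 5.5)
Your proposal is correct and follows the paper's argument: Lemma~\ref{CTLSCC_th:lemma1} gives completeness, Lemma~\ref{CTLSCC_th:lemma2} makes every edge double, and the brackets $\tfrac{1}{2}[B_{ij}^{\Re},B_{ij}^{\Im}]=B_j^{\D}-B_i^{\D}$ supply the diagonal part. Your explicit dimension count $N(N-1)+(N-1)=N^2-1$ spells out the spanning step that the paper leaves implicit.
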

\begin{corollary*}
    Specifically for $ N = 3 $, this means that
    \begin{equation}
        \exists i, j, k: \  \varepsilon_{ijk} \neq 0 \ \text{ and } \  B_{ij}^{\Re}, B_{ij}^{\Im}, B_{ik} \in \mathcal{L} \ \ \ \Rightarrow \ \ \ \mathfrak{su}(3) \subset \mathcal{L}.
    \end{equation}
\end{corollary*}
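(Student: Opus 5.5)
The statement to prove is Theorem~\ref{CTLSCC_th:algebrastatement}, and its $N=3$ corollary. The plan is to chain Lemma~\ref{CTLSCC_th:lemma1} and Lemma~\ref{CTLSCC_th:lemma2} and then check that the elements they produce span $\mathfrak{su}(N)$.

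Assume $\mathcal{G}_{\mathcal{L}}$ is connected and has a double edge. By Lemma~\ref{CTLSCC_th:lemma1}, $\mathcal{G}_{\mathcal{L}}$ is complete. Since completeness and the existence of one double edge are exactly the hypotheses of Lemma~\ref{CTLSCC_th:lemma2}, every edge $i\leftrightarrow j$ is double. Hence $B_{ij}^{\Re}, B_{ij}^{\Im}\in\mathcal{L}$ for all $i<j$.

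Next I produce the diagonal part. For each pair $i<j$, relation \eqref{CTLSCC_eq:technoname7} gives $B_j^{\D}-B_i^{\D}=\tfrac12[B_{ij}^{\Re},B_{ij}^{\Im}]\in\mathcal{L}$. The elements $B_{k+1}^{\D}-B_k^{\D}$, $k=1,\dots,N-1$, are $N-1$ linearly independent traceless diagonal skew-Hermitian matrices. They therefore span the Cartan subalgebra of diagonal matrices in $\mathfrak{su}(N)$.

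It remains to see that $\mathfrak{su}(N)$ is spanned by $\{B_{ij}^{\Re},B_{ij}^{\Im}\}_{i<j}$ together with these differences. Every traceless skew-Hermitian matrix decomposes, as in \eqref{CTLSCC_eq:vdecomposition}, into an off-diagonal part lying in the span of the $B_{ij}^{\Re}, B_{ij}^{\Im}$ and a diagonal part $\sum_k c_k B_k^{\D}$ with real $c_k$ and $\sum_k c_k=0$. Such a diagonal part is a combination of consecutive differences. A dimension count confirms this: $N(N-1)+(N-1)=N^2-1=\dim\mathfrak{su}(N)$. Since $\mathcal{L}$ is a real vector space, $\mathfrak{su}(N)\subset\mathcal{L}$.

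For the corollary with $N=3$, let $\{i,j,k\}=\{1,2,3\}$. If $B_{ij}^{\Re},B_{ij}^{\Im},B_{ik}\in\mathcal{L}$, then the edges $i\leftrightarrow j$ and $i\leftrightarrow k$ are present, so $\mathcal{G}_{\mathcal{L}}$ is connected (a path $j-i-k$), and $i\leftrightarrow j$ is double. The theorem then applies directly.

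None of these steps is genuinely hard. The only step needing care is confirming that the lemmas' inductive arguments actually cover every edge. This holds because Lemma~\ref{CTLSCC_th:lemma1} uses simple paths, and Lemma~\ref{CTLSCC_th:lemma2} propagates doubleness along adjacent edges of a complete graph, which reaches any edge $k\leftrightarrow k'$ in at most two steps from the initial double edge.
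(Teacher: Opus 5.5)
Your proposal is correct and follows the paper's route. You chain Lemma~\ref{CTLSCC_th:lemma1} (connected $\Rightarrow$ complete) with Lemma~\ref{CTLSCC_th:lemma2} (one double edge $\Rightarrow$ all edges double), take the diagonal differences from $\tfrac12[B_{ij}^{\Re},B_{ij}^{\Im}]=B_j^{\D}-B_i^{\D}$, and observe that for $N=3$ the edges $i\leftrightarrow j$ (double) and $i\leftrightarrow k$ already connect all three vertices. Your dimension count $N(N-1)+(N-1)=N^2-1$ simply makes explicit the spanning step that the paper leaves implicit.
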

Here and below, $ \varepsilon_{ijk} $ denotes the Levi-Civita symbol; thus, $ \varepsilon_{ijk} \neq 0 $ means that the indices $ i $, $ j $, and $ k $ are pairwise distinct.

Theorem \ref{CTLSCC_th:algebrastatement} includes Theorem \ref{CTLSCC_th:theoremfromSSF} as a particular case when the graph $ \mathcal{G}_{\mathcal{L}} $ contains $ (N-1) $ double edges between the $ N $ closest by index neighbor vertices.

\section{Controllability classification of three-level closed quantum systems}\label{CTLSCC_sec:threelevelsystems}

In this section, we obtain a complete operator controllability classification of closed three-level systems driven by a single coherent control. Systems with a free Hamiltonian $ H = c\mathbb{I} $, proportional to the identity matrix, or a disconnected interaction graph are immediately uncontrollable; after separating these cases, we classify the remaining systems according to their energy-level structure, number of allowed transitions, and the presence or absence of controlled energy-level shifts. We then derive necessary and sufficient controllability conditions for systems with two and three allowed transitions, examine the role of the assumptions in Theorem~\ref{CTLSCC_th:shifts} on the addition of controlled energy-level shifts, and summarize the complete classification in Table~\ref{CTLSCC_tab:result}.

\subsection{Structural classification of three-level systems}\label{CTLSCC_subsec:systemtypes}
Before proceeding to the controllability characterization, we first classify three-level systems according to three structural characteristics: their energy-level and level-spacing degeneracies, the number of allowed transitions, and the presence or absence of controlled energy-level shifts. These features are determined by the free Hamiltonian matrix $ H $, the off-diagonal part of the interaction Hamiltonian matrix $ V $, and its diagonal part, respectively.

The general forms of the free Hamiltonian matrix $ H $ and the interaction Hamiltonian matrix $ V $ for such systems are
\begin{equation}\label{CTLSCC_eq:threelevelgeneralmatrices}
    H = \begin{pmatrix}
        E_{1} & 0 & 0 \\
        0 & E_{2} & 0 \\
        0 & 0 & E_{3}
    \end{pmatrix}, \qquad V = \begin{pmatrix}
        v_{1} & v_{12} & v_{13} \\
        v_{12}^{*} & v_{2} & v_{23} \\
        v_{13}^{*} & v_{23}^{*} & v_{3}
    \end{pmatrix},
\end{equation}
where $ E_{k}, v_{k} \in \mathbb{R} $ and $ v_{ij} \in \mathbb{C} $; the free Hamiltonian $ H $ is assumed not to be proportional to the identity matrix, and the nonzero transition matrix elements define a connected structural graph $ \mathcal{G}_{V} $. Table~\ref{CTLSCC_tab:types} describes the key structural differences between three-level systems in terms of their matrices~\eqref{CTLSCC_eq:threelevelgeneralmatrices}.

Based on the energy-level structure determined by the free Hamiltonian $ H $, the systems can be divided into three classes, as illustrated in Figure~\ref{CTLSCC_fig:structures_levels}. The first class consists of systems with neither energy-level nor level-spacing degeneracies; such systems are strongly regular. The second class arises when three distinct energy levels form an equally spaced “ladder”, so that the system is regular but not strongly regular. The last class consists of systems with exactly two degenerate energy levels (the case of three degenerate levels is not considered separately because then the free Hamiltonian $ H $ is proportional to the identity matrix $ \mathbb{I} $).

The form of the off-diagonal part of the interaction Hamiltonian $ V $ determines two classes of systems: those with two and three allowed transitions, respectively. No other numbers of transitions need to be considered: three transitions are the maximum for the three available energy levels, whereas systems with one or no allowed transitions have disconnected structural graphs and for such systems one cannot induce population transfer between all pairs of levels.
\begin{table}[!ht]
    \centering
    \begingroup
    \setlength{\tabcolsep}{4pt}
    \renewcommand{\arraystretch}{1.2}
    \begin{tabular}{|
        >{\centering\arraybackslash}m{0.41\textwidth}|
        >{\centering\arraybackslash}m{0.28\textwidth}|
        >{\centering\arraybackslash}m{0.21\textwidth}|}
    \hline
    \makecell[c]{\raisebox{-0.12cm}{\textbf{By degeneracies}}}
    & \makecell[c]{\textbf{By allowed}\\
        \textbf{transitions}}
    & \makecell[c]{\textbf{By controlled}\\
        \textbf{energy-level shifts}} \\
    \hline

    \makebox[\linewidth][c]{%
        \Gape[0.12cm][0.12cm]{%
            \makecell[c]{Strongly regular [Fig. \ref{CTLSCC_fig:structures_levels}(A)]\\[0.1cm]
                $\exists i, j, k:\ \varepsilon_{ijk} \neq 0,$\\
                $|\Delta E_{ij}| \neq |\Delta E_{jk}| \neq|\Delta E_{ki}| \neq |\Delta E_{ij}|$}}}
    & \multirow[c]{2}{*}[-0.20cm]{%
        \makecell[c]{With two allowed transitions\\[0.1cm]
            $\exists i, j, k:\ \varepsilon_{ijk} \neq 0,$\\
            $v_{ik} = 0,\ v_{ij}, v_{jk} \neq 0$}}
    & \multirow[c]{2}{*}[-0.20cm]{%
        \makecell[c]{Without controlled\\
            energy-level shifts\\[0.1cm]
            $\forall k:\ v_{k} = 0$}} \\
    \cline{1-1}

    \makebox[\linewidth][c]{%
        \Gape[0.12cm][0pt]{%
            \makecell[c]{Regular but not strongly regular
                [Fig. \ref{CTLSCC_fig:structures_levels}(B)]}}}
    &
    & \\
    \cline{2-3}

    \makebox[\linewidth][c]{%
        \Gape[0pt][0.12cm]{%
            \makecell[c]{$\exists i, j, k:\ \varepsilon_{ijk} \neq 0,$\\
                $E_{i} - E_{j} = E_{j} - E_{k}$}}}
    & \multirow[c]{2}{*}[-0.45cm]{%
        \makecell[c]{With three allowed transitions\\[0.1cm]
            $\forall i \neq j:\ v_{ij} \neq 0$}}
    & \multirow[c]{2}{*}[-0.32cm]{%
        \makecell[c]{With controlled\\
            energy-level shifts\\[0.1cm]
            $\exists k:\ v_{k} \neq 0$}} \\
    \cline{1-1}

    \makebox[\linewidth][c]{%
        \Gape[0.12cm][0.12cm]{%
            \makecell[c]{Nonregular [Fig. \ref{CTLSCC_fig:structures_levels}(C)]\\[0.1cm]
                $\exists i \neq j:\ E_{i} = E_{j}$}}}
    &
    & \\
    \hline
    \end{tabular}
    \endgroup
    \caption{Structural classification of three-level systems for the controllability analysis. The systems differ in the degeneracies of their energy levels and transition frequencies determined by the free Hamiltonian $ H $, in the structure of population transfer between levels determined by the off-diagonal part of the interaction Hamiltonian $ V $, and in the controlled energy-level shifts determined by the diagonal part of the interaction Hamiltonian $ V $.}
    \label{CTLSCC_tab:types}
    \vspace{0.5cm}
\end{table}
\begin{figure}[h!]
    \centering
    \includegraphics[width=0.625\linewidth]{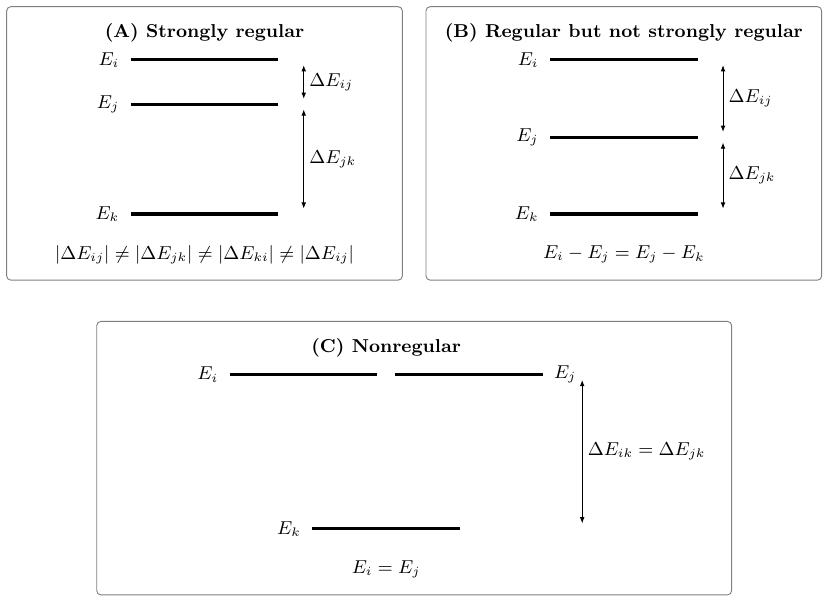}
    \caption{All nontrivial energy-level structures of three-level systems: (A) strongly regular systems, with neither energy-level nor level-spacing degeneracies; (B) regular but not strongly regular systems, with a level-spacing degeneracy but not energy-level degeneracy; (C) nonregular systems, with  two degenerate energy levels. Up to the transformations $H\to aH+c\mathbb I$ with $a\ne0$ and relabeling of the energy levels, these are all possible nontrivial level structures. The diagrams display only the energy-level structure and do not specify allowed transitions or controlled energy-level shifts.}
    \label{CTLSCC_fig:structures_levels}
\end{figure}
\begin{figure}[h!]
    \centering
    \includegraphics[width=0.795\linewidth]{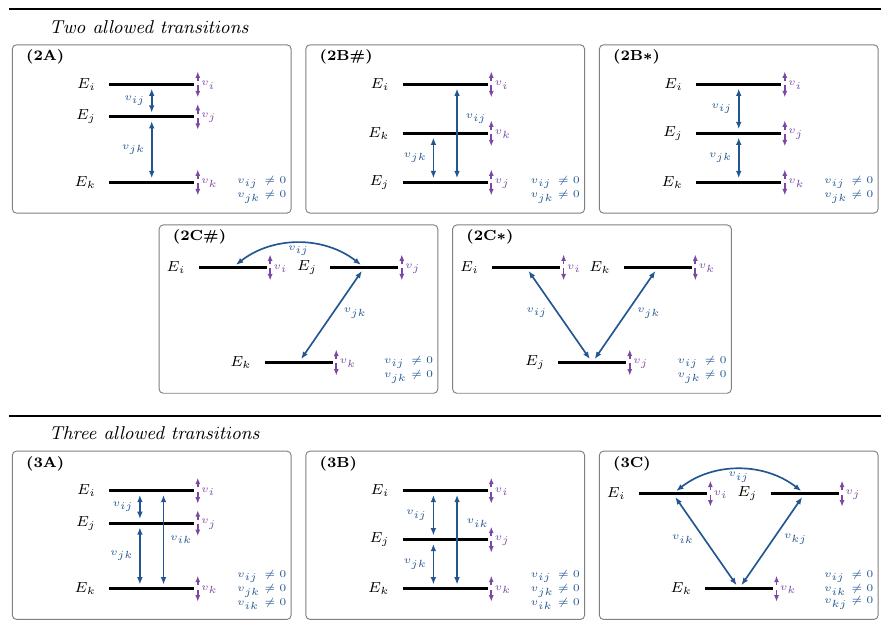}
    \caption{Possible types of three-level systems in terms of their energy-level structure and allowed direct population transfer. The labels identify the systems as follows: the numbers 2 and 3 specify the number of allowed transitions; the symbols A, B, and C specify the energy-level structure; and the symbols $\textasteriskcentered$ and $\#$ denote the presence and absence, respectively, of level-spacing-degenerate transitions.}
    \label{CTLSCC_fig:structures_combinations}
\end{figure}

With respect to the structure of the diagonal part of the interaction Hamiltonian $ V $, three-level systems can be divided into two classes: those with controlled energy-level shifts and those without. Adding controlled energy-level shifts in a number of cases described in Theorem \ref{CTLSCC_th:shifts} preserves controllability, but it may also either render an uncontrollable system controllable or destroy the controllability of an initially controllable system.

For systems with two transitions and a level-spacing degeneracy (and, particularly, an energy-level degeneracy) --- that is, systems that are not strongly regular --- different population transfer schemes are possible: either involving transitions with equal Bohr frequencies or not involving them. These cases should be distinguished since the absence or presence of such transitions is a factor that may affect controllability (e.g., see Theorem \ref{CTLSCC_th:graphtheoremstrengthened}).

Considering jointly the possibilities described above for the energy-level arrangement and the pattern of allowed transitions among them, we distinguish eight distinct cases, each of which may involve either no controlled energy-level shifts or any possible configuration of such shifts. A visual representation of these cases is provided in Figure~\ref{CTLSCC_fig:structures_combinations}. The controllability of each of the systems shown is analyzed as a separate case within the general Theorems~\ref{CTLSCC_th:two_transitions_classification},~\ref{CTLSCC_th:three_transitions_classification}, which are stated and proven below (same as within Theorem~\ref{CTLSCC_th:twonoshifts}, providing already known results).

Some of the systems with two allowed transitions have been explored. The following theorem describes the existing controllability classification for this setup with no controlled energy-level shifts \cite{CTLSCC_bib:KuznetsovPechen2023}. The terminology in \cite{CTLSCC_bib:KuznetsovPechen2023} is slightly different from the terminology of the present work (e.g., a level-spacing degeneracy is defined as level-spacing degenerate transitions in our terms). Therefore, we reformulate the statement in terms of Definitions~\ref{CTLSCC_def:structuralfeatures1},~\ref{CTLSCC_def:structuralfeatures2} and the system classes described in Table \ref{CTLSCC_tab:types}.
\begin{theorem}\label{CTLSCC_th:twonoshifts}
    Let $ (H, \, V) $ be a three-level system with the free Hamiltonian $ H $ and the interaction Hamiltonian $ V $ matrices of the form \eqref{CTLSCC_eq:threelevelgeneralmatrices} with two allowed transitions and no controlled energy-level shifts, i.e.,
    \begin{align}
            &\exists i, j, k: \ \varepsilon_{ijk} \neq 0, \   v_{ik} = 0, \ v_{ij}, v_{jk} \neq 0, \label{CTLSCC_eq:techoldcondition1} \\
            & \forall k': \ v_{k'} = 0 \label{CTLSCC_eq:techoldcondition2} .
    \end{align}
    Then
    \begin{enumerate}
        \item  If the system is strongly regular [Fig.~\ref{CTLSCC_fig:structures_combinations}(2A), no shifts], i.e.,
        \begin{equation}\label{CTLSCC_eq:ttcrossref1}
            |E_{i} - E_{j}| \neq |E_{j} - E_{k}| \neq | E_{i} - E_{k} | \neq |E_{i} - E_{j}|, 
        \end{equation}
        then it is controllable; \label{CTLSCC_enum:techstatement1}
        \item If the system is regular but not strongly regular, and has no level-spacing degenerate transitions [Fig.~\ref{CTLSCC_fig:structures_combinations}(2B\#), no shifts], i.e.,
        \begin{equation}\label{CTLSCC_eq:ttcrossref2}
            E_{i} - E_{k} = E_{k} - E_{j},
        \end{equation}
        then it is controllable; \label{CTLSCC_enum:techstatement2} 
        \item If the system is regular but not strongly regular, and has level-spacing degenerate transitions [Fig.~\ref{CTLSCC_fig:structures_combinations}(2B\textasteriskcentered), no shifts], i.e.,
        \begin{equation}\label{CTLSCC_eq:ttcrossref3}
            E_{i} - E_{j} = E_{j} - E_{k},
        \end{equation}
        then it is controllable iff $ |v_{ij}| \neq |v_{jk}| $; \label{CTLSCC_enum:techstatement3} 
        \item If the system is nonregular and has no level-spacing degenerate transitions [Fig.~\ref{CTLSCC_fig:structures_combinations}(2C\#), no shifts], i.e.,
        \begin{equation}\label{CTLSCC_eq:ttcrossref4}
            E_{i} = E_{j} \quad (\text{or } \, E_{j} = E_{k}),
        \end{equation}
        then it is controllable;
        \item If the system is nonregular and has level-spacing degenerate transitions [Fig.~\ref{CTLSCC_fig:structures_combinations}(2C\textasteriskcentered), no shifts], i.e.,
        \begin{equation}\label{CTLSCC_eq:ttcrossref5}
            E_{i} = E_{k},
        \end{equation}
        then it is \underline{not} controllable.
    \end{enumerate}
\end{theorem}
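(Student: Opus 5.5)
Without loss of generality we relabel so that the two allowed transitions are $1\leftrightarrow2$ and $2\leftrightarrow3$, i.e. $(i,j,k)=(1,2,3)$ with $v_{13}=0$ and all $v_{k'}=0$. Then $iH=\sum_k E_kB_k^{\D}$ and $iV=\tilde B_{12}+\tilde B_{23}$ with $\tilde B_{ab}:=v_{ab}^{\Im}B_{ab}^{\Re}+v_{ab}^{\Re}B_{ab}^{\Im}$. By \eqref{CTLSCC_eq:ReDcomm}, \eqref{CTLSCC_eq:ImDcomm} we have $\ad_{iH}^2\tilde B_{ab}=-\Delta E_{ab}^2\tilde B_{ab}$. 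Moreover, $\ad_{iH}\tilde B_{ab}$ is a nonzero multiple of the "conjugate" element $\hat B_{ab}:=v_{ab}^{\Re}B_{ab}^{\Re}-v_{ab}^{\Im}B_{ab}^{\Im}$ whenever $\Delta E_{ab}\neq0$. The plan is to extract as many of $\tilde B_{12},\hat B_{12},\tilde B_{23},\hat B_{23}$ as possible and then conclude with the Corollary of Theorem~\ref{CTLSCC_th:algebrastatement}, or else exhibit an invariant proper subalgebra.

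Items 1, 2 and 4 are the cases with no level-spacing degenerate transitions and the same frequency-separation mechanism. In items 1 and 2, $|\Delta E_{12}|\neq|\Delta E_{23}|$ and both are nonzero, so the system is $V$--strongly regular. Theorem~\ref{CTLSCC_th:graphtheoremstrengthened} then gives controllability immediately; alternatively, Lemma~\ref{CTLSCC_th:lemmainvertiblebasis} with a Vandermonde matrix separates $\tilde B_{12},\tilde B_{23}$. In item 4, say $E_1=E_2$, so $\ad_{iH}\tilde B_{12}=0$. Still, $\ad_{iH}^{2}(iV)=-\Delta E_{23}^2\tilde B_{23}$, so $\tilde B_{23},\hat B_{23}$ lie in $\mathcal{L}$, and then $\tilde B_{12}=iV-\tilde B_{23}\in\mathcal L$. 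The bracket $[\tilde B_{23},\hat B_{23}]$ is a nonzero multiple of $B_3^{\D}-B_2^{\D}$. Bracketing this with $\tilde B_{12}$ produces the conjugate $\hat B_{12}$, so the edge $1\leftrightarrow2$ is double. Because $v_{ab}$ can be rotated to be real by a diagonal unitary (the graph is a tree), we may take $\tilde B_{ab}=|v_{ab}|B_{ab}^{\Im}$. The Corollary of Theorem~\ref{CTLSCC_th:algebrastatement} then finishes the argument.

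Item 3 covers the equal spacing case $E_1-E_2=E_2-E_3=:\omega\neq0$. Here $\ad_{iH}$ cannot separate the two transitions, so we compute brackets directly, with real $v_{12}=a$ and $v_{23}=b$. We have $iV=aB_{12}^{\Im}+bB_{23}^{\Im}$ and $\ad_{iH}(iV)\propto aB_{12}^{\Re}+bB_{23}^{\Re}$, and these generate a copy of $\mathfrak{su}(2)$-type brackets. Their commutator is $a^2(B_2^{\D}-B_1^{\D})+b^2(B_3^{\D}-B_2^{\D})$ up to factors. This must be compared with $iH\propto(B_1^{\D}-B_3^{\D})$ modulo $i\mathbb I$. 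If $a^2= b^2$, this diagonal element is proportional to $iH$ mod $i\mathbb{I}$. The algebra then closes on the spin-1 representation of $\mathfrak{u}(2)$, of dimension $4<9$, so the system is uncontrollable. If $a^2\neq b^2$, the new diagonal element is independent of $iH$. Bracketing it with $aB_{12}^{\Re}+bB_{23}^{\Re}$ weights the two transitions with different coefficients, and Lemma~\ref{CTLSCC_th:lemmainvertiblebasis} with a $2\times2$ determinant proportional to $a^2-b^2$ isolates $B_{12}^{\Re},B_{12}^{\Im},B_{23}$. The Corollary then gives controllability. This determinant computation is the main obstacle, since one must check the coefficients carefully.

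Item 5 is the case $E_1=E_3$. We change basis within the degenerate eigenspace $\operatorname{span}(e_1,e_3)$, replacing these vectors by $e_1'\propto v_{12}^*e_1+v_{32}e_3$ and an orthogonal $e_3'$. Then $V$ couples $e_2$ only to $e_1'$, so $\mathcal{G}_V$ becomes disconnected in that eigenbasis. Equivalently, $\mathcal{L}$ annihilates the state $e_3'$ up to phase, so $\mathcal{L}$ preserves a nontrivial subspace. Hence the system is not controllable by Theorem~\ref{CTLSCC_th:graphtheorem} in the strengthened, basis-independent form.
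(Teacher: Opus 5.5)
Your proof is correct. It necessarily takes a different route from the paper, because the paper never proves this theorem: it imports it from earlier work.

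\textbf{Items 1, 2 and 4.} Only these three items are recovered independently in the paper, as the $v_k=0$ cases of Controllability Conditions~\ref{CTLSCC_th:twostrong}, \ref{CTLSCC_th:twonotstrong1} and~\ref{CTLSCC_th:twonotregular1}. Your items 1 and 2 are exactly the paper's appeal to Theorems~\ref{CTLSCC_th:graphtheorem} and~\ref{CTLSCC_th:graphtheoremstrengthened}. Your item 4 uses the same mechanism as the proof of Controllability Condition~\ref{CTLSCC_th:twonotregular1}: isolate the $2\leftrightarrow3$ block with $\ad_{iH}^2$, then use it to double the degenerate edge. There are two differences. You turn $\tilde B_{12}$ into $\hat B_{12}$ with the diagonal element $[\tilde B_{23},\hat B_{23}]=-2|v_{23}|^2(B_3^{\D}-B_2^{\D})$, whereas the paper uses a double commutator with $B_{23}^{\Re},B_{23}^{\Im}$. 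You also add a diagonal-unitary phase gauge, which is legitimate because it fixes $H$ and only conjugates $\mathcal L$.

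\textbf{Item 3.} Items 3 and 5 are where your proposal adds genuine content: the paper uses them in Controllability Conditions~\ref{CTLSCC_th:twonotstrong2} and~\ref{CTLSCC_th:twonotregular2} but never derives them. Your coefficients in item 3 are right. Set $X=aB_{12}^{\Re}+bB_{23}^{\Re}$ and $Y=iV$. The cross terms $\pm ab\,B_{13}^{\Im}$ cancel, so $[X,Y]=2D$ with $D=-a^2B_1^{\D}+(a^2-b^2)B_2^{\D}+b^2B_3^{\D}$. Then $[X,D]=a(b^2-2a^2)B_{12}^{\Im}+b(a^2-2b^2)B_{23}^{\Im}$, and together with $Y$ the determinant is $3ab(a^2-b^2)$.
\begin{itemize}
\item For $a=b$, the closure of $\operatorname{span}\{iH,X,Y,i\mathbb I\}$ is the same kind of argument the paper uses in case 3 of Controllability Condition~\ref{CTLSCC_th:threenotregular}.
\item The determinant step you flag as the main obstacle can be bypassed with the paper's own device from Controllability Conditions~\ref{CTLSCC_th:twonotstrong2} and~\ref{CTLSCC_th:twonotregular2}. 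Once $D$ is independent of $iH$ modulo $i\mathbb I$, the span of $iH$, $D$, $i\mathbb I$ is all of $\operatorname{span}\{B_1^{\D},B_2^{\D},B_3^{\D}\}$. It therefore contains $iH'$ with $H'$ strongly regular, and $\mathcal L(iH,iV)\supset\mathcal L(iH',iV)\supset\mathfrak{su}(3)$ by Theorem~\ref{CTLSCC_th:graphtheorem}.
\end{itemize}

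\textbf{Item 5.} Your argument is the invariant-subspace form of the paper's remark, after Theorem~\ref{CTLSCC_th:graphtheorem}, that connectivity depends on the choice of eigenbasis, and it is self-contained. There is one slip. The vector to split off is the projection of $Ve_2$, i.e.\ $e_1'\propto v_{12}e_1+v_{32}e_3$, not $v_{12}^*e_1+v_{32}e_3$; the two agree only after your phase gauge. What you actually need is $e_3'\perp Ve_2$. Since $V$ maps $\operatorname{span}(e_1,e_3)$ into $\operatorname{span}(e_2)$, this gives $Ve_3'=0$ and $He_3'=E_1e_3'$. Hence $\operatorname{span}(e_3')$ is $\mathcal L$-invariant, whereas $\mathfrak{su}(3)$ acts irreducibly on $\mathbb C^3$.
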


In the following Subsections~\ref{CTLSCC_subsec:twotransitions},~\ref{CTLSCC_subsec:threetransitions}, we proceed from these particular conditions to a complete characterization of controllability of all three-level systems. First, we extend the statement of Theorem~\ref{CTLSCC_th:twonoshifts} to include controlled energy-level shifts. Then, we explore the systems with three allowed transitions.

\subsection{Systems with two allowed transitions}\label{CTLSCC_subsec:twotransitions}

In this subsection, we investigate controllability of the systems with two allowed transitions. We derive controllability conditions for the five cases shown in Figure~\ref{CTLSCC_fig:structures_combinations} and in Theorem~\ref{CTLSCC_th:twonoshifts}, that allow the inclusion of controlled energy-level shifts $ v_{k} $, thereby omitting the assumption \eqref{CTLSCC_eq:techoldcondition2}. Theorem~\ref{CTLSCC_th:two_transitions_classification} extends Theorem~\ref{CTLSCC_th:twonoshifts} to arbitrary controlled energy-level shifts; when $v_k=0$ for all $k$, it reduces to the known classification stated in Theorem~\ref{CTLSCC_th:twonoshifts}.

\begin{theorem}[Complete operator controllability classification for systems with two allowed transitions]\label{CTLSCC_th:two_transitions_classification}
\hspace{0.1cm} \\[0.05cm]
\indent Let $ (H, V) $ be a three-level system with the free Hamiltonian $ H $ and the interaction Hamiltonian $ V $ matrices of the form \eqref{CTLSCC_eq:threelevelgeneralmatrices} with two allowed transitions \eqref{CTLSCC_eq:techoldcondition1}. Then the following statements hold.
\begin{itemize}
    \item \textbf{Controllability Condition 1}: 
    
    If the system is strongly regular [Fig.~\ref{CTLSCC_fig:structures_combinations}(2A)], i.e., if the inequalities \eqref{CTLSCC_eq:ttcrossref1} hold, then it is controllable.

    \item \textbf{Controllability Condition 2}: 
    
    If the system is regular but not strongly regular and has no level-spacing degenerate transitions [Fig.~\ref{CTLSCC_fig:structures_combinations}(2B\#)], i.e., if the equality \eqref{CTLSCC_eq:ttcrossref2} holds, then it is controllable.

    \item \textbf{Controllability Condition 3}: 
    
    If the system is regular but not strongly regular and has level-spacing degenerate transitions [Fig.~\ref{CTLSCC_fig:structures_combinations}(2B\textasteriskcentered)],\\ i.e., if the equality \eqref{CTLSCC_eq:ttcrossref3} holds, then it is controllable iff 
    \begin{equation}
        |v_{ij}|\neq|v_{jk}|
        \quad\text{or}\quad
        v_j-v_i\neq v_k-v_j.
    \end{equation}
     \item \textbf{Controllability Condition 4}: 
     
     If the system is nonregular and has no level-spacing degenerate transitions [Fig.~\ref{CTLSCC_fig:structures_combinations}(2C\#)], i.e., if the equality \eqref{CTLSCC_eq:ttcrossref4} holds, then it is controllable.

    \item \textbf{Controllability Condition 5}: 
    
    If the system is nonregular and has level-spacing degenerate transitions [Fig.~\ref{CTLSCC_fig:structures_combinations}(2C\textasteriskcentered)], i.e., if the equality \eqref{CTLSCC_eq:ttcrossref5} holds, then it is controllable iff
    \begin{equation}
        v_i\neq v_k.
    \end{equation}
\end{itemize}
\end{theorem}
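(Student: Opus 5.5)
We treat the five conditions one at a time, fixing the labelling $(i,j,k)$ of \eqref{CTLSCC_eq:techoldcondition1}, so that the allowed transitions are $i\leftrightarrow j$ and $j\leftrightarrow k$ and $v_{ik}=0$.

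\textbf{Conditions 1, 2 and 4.} In each of these cases the two allowed transitions have nonzero frequencies, and these frequencies are different: equal frequencies would be exactly the level-spacing degenerate configuration \eqref{CTLSCC_eq:ttcrossref3} or \eqref{CTLSCC_eq:ttcrossref5}. Hence the system is $V$--strongly regular. The structural graph is connected, so Theorem~\ref{CTLSCC_th:graphtheoremstrengthened} gives controllability for arbitrary $v_k$. Alternatively, one can drop the shifts, use Theorem~\ref{CTLSCC_th:twonoshifts}, and then apply the corollary of Theorem~\ref{CTLSCC_th:shifts}; this works because these systems are $V$--regular.

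\textbf{Conditions 3 and 5.} Here $\Delta E:=E_i-E_j=E_j-E_k$, with $\Delta E\neq0$ in case 3 and $\Delta E=0$ in case 5.

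\emph{Case 5.} Now $H$ acts as a multiple of the identity on the span of the two transitions, up to the component along $B_j^{\D}-B_i^{\D}$ (recall $E_i=E_k$). If $v_i=v_k$, we exhibit a conserved structure. Since $H$ and $V$ are both Hermitian, the vector $u\propto v_{jk}e_i-v_{ij}^{*}e_k$ (up to conjugation conventions) lies in $\ker$ of the off-diagonal part of $V$. It is also an eigenvector of $H$ (because $E_i=E_k$) and of $\operatorname{diag}(v)$ (because $v_i=v_k$). So $u$ is a common eigenvector of $H$ and $V$, the algebra is reducible, and the system is not controllable.

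If $v_i\neq v_k$, we produce basis elements. The element $[iH,iV]$ equals $(E_i-E_j)$ times a rotated version of the transition part, since both transitions share the same frequency. Bracketing with $iV$ brings in diagonal terms and a $B_{ik}$ component proportional to $v_i-v_k$. We then use Lemma~\ref{CTLSCC_th:lemmainvertiblebasis} to separate $B_{ij}^{\Re},B_{ij}^{\Im},B_{ik}$, and conclude with the corollary of Theorem~\ref{CTLSCC_th:algebrastatement}.

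\emph{Case 3.} We proceed in the same way. First change the relative phases of the basis vectors (a diagonal unitary, which commutes with $H$) so that $v_{ij},v_{jk}>0$. The algebra $\mathcal L_0$ generated by $iH$ and the off-diagonal part $X$ of $iV$ is then, when $|v_{ij}|=|v_{jk}|$, the spin-1 image of $\mathfrak{u}(2)$. This uses the ladder structure: $iH$ is a multiple of $J_z$ plus a multiple of $\mathbb I$, and $X\propto J_x$.

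\emph{Necessity in case 3.} Suppose $|v_{ij}|=|v_{jk}|$ and $v_j-v_i=v_k-v_j$. Then the diagonal of $V$ is affine in the level index, so it lies in $\operatorname{span}(J_z,\mathbb I)$. Hence $iV\in\mathfrak{u}(2)$ in the spin-1 representation, which has dimension at most $4<9$, and the system is not controllable.

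\emph{Sufficiency in case 3.} If $|v_{ij}|\neq|v_{jk}|$, we adapt the argument of Theorem~\ref{CTLSCC_th:twonoshifts}. If instead the shifts are not affine, the diagonal part contains a nonzero component along the "quadrupole" direction $D=B_i^{\D}-2B_j^{\D}+B_k^{\D}$. The first task is to isolate the transition part from the diagonal part. Using $\operatorname{ad}_{iH}^{2}$, which acts as $-\Delta E^{2}$ on the transitions and kills the diagonal, we get $X$, and therefore the diagonal part of $iV$, in $\mathcal L$. Removing the $iH$ and $\mathbb I$ components then gives $D\in\mathcal L$. Brackets $[D,\cdot]$ with $B_{ij}$ and $B_{jk}$ multiply them by opposite-sign factors ($\pm3$), which breaks the $J_x$ symmetry. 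Combining this with $\operatorname{ad}_{iH}$, Lemma~\ref{CTLSCC_th:lemmainvertiblebasis} should yield $B_{ij}^{\Re},B_{ij}^{\Im}$ and $B_{jk}$ individually, and Theorem~\ref{CTLSCC_th:algebrastatement} finishes the proof.

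The main obstacle is this last step: showing that the finitely many brackets produce an invertible coefficient matrix $S$ exactly under the stated condition. This requires a careful determinant computation in which the factor $v_j-v_i-(v_k-v_j)$, or $|v_{ij}|^2-|v_{jk}|^2$, must appear.
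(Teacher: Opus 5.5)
Conditions 1 and 2 are handled correctly, but your treatment of Condition 4 rests on a false claim. In \eqref{CTLSCC_eq:ttcrossref4} one has $E_i=E_j$ (or $E_j=E_k$) while $v_{ij}\neq0$ (resp.\ $v_{jk}\neq0$). So one of the two allowed transitions has zero Bohr frequency: the system has an energy-level degenerate transition and is not $V$--regular, let alone $V$--strongly regular. Therefore neither Theorem~\ref{CTLSCC_th:graphtheoremstrengthened} nor the corollary of Theorem~\ref{CTLSCC_th:shifts} can be invoked. Subsection~\ref{CTLSCC_subsec:highlight} shows that $V$--regularity is not a removable hypothesis in the latter. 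Theorem~\ref{CTLSCC_th:twonoshifts} settles only the shift-free case, so Condition 4 with shifts needs its own argument.

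The paper gives a direct one. Take $i=1$, $j=2$, $k=3$ and normalize $iH=B_3^{\D}$. Then $-\ad_{iH}^2(iV)$ is exactly the $2$--$3$ transition part, because the $1$--$2$ part and the diagonal are annihilated. Together with its $\ad_{iH}$-image, Lemma~\ref{CTLSCC_th:lemmainvertiblebasis} gives $B_{23}^{\Re},B_{23}^{\Im}\in\mathcal L$. Next, $[B_{23}^{\Im},[B_{23}^{\Re},iV]]$ minus a multiple of $B_{23}^{\Re}$ isolates the $1$--$2$ part, since the shifts drop out. Applying the same double bracket again produces its rotated copy, which yields $B_{12}^{\Re},B_{12}^{\Im}$, and Theorem~\ref{CTLSCC_th:algebrastatement} finishes.

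The sufficiency halves of Cases 5 and 3 are also left unproved, and in Case 5 your sketch points to the wrong mechanism. The $(i,k)$ entry of $[[iH,iV],iV]$ comes only from the transition part and is a nonzero multiple of $v_{ij}v_{jk}$, independent of the shifts. So it is present even when $v_i=v_k$, where you yourself show the system is uncontrollable. Your unified parametrization is also off: in Case 5, $E_i-E_j=E_k-E_j\neq0$, whereas $\Delta E=0$ would force $H\propto\mathbb I$.

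The missing idea is one you already use in Case 3. Both allowed transitions have the same $|\Delta E|$, so $\ad_{iH}^2$ acts on the whole transition part as a single scalar. After normalizing $iH=B_j^{\D}$, this gives $L=iV+\ad_{iH}^2(iV)=\sum_k v_kB_k^{\D}\in\mathcal L$. If $v_i\neq v_k$, then $iH'=L+a\,iH$ is strongly regular for all but finitely many $a$. Theorem~\ref{CTLSCC_th:graphtheorem} applied to $(H',V)$ then gives $\mathcal L(iH,iV)\supset\mathcal L(iH',iV)\supset\mathfrak{su}(3)$.

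The same step closes Case 3, and the ``main obstacle'' you anticipate does not exist. The condition $v_j-v_i\neq v_k-v_j$ is fully used in obtaining $D\in\mathcal L$. Once $iH$, $D$, $i\mathbb I$ lie in $\mathcal L$, they span all diagonal matrices, so $\mathcal L$ contains a strongly regular $iH'$ and Condition 1 applies. Equivalently, by your $\pm3$ observation, $\ad_{D+c\,iH}$ for suitable $c$ annihilates the $i$--$j$ block and rotates the $j$--$k$ block with a nonzero factor. No further determinant condition is needed.

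For the branch $|v_{ij}|\neq|v_{jk}|$, do not try to adapt the proof of Theorem~\ref{CTLSCC_th:twonoshifts}, which the paper only cites. Instead use its item 3 together with the corollary of Theorem~\ref{CTLSCC_th:shifts}; this is legitimate here because the system is regular.

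Your two necessity arguments are correct and take a different route from the paper. The paper replaces $iV$ by $iV+c\,iH$ to equalize the diagonal and then invokes items 3 and 5 of Theorem~\ref{CTLSCC_th:twonoshifts}. You instead bound $\dim\mathcal L\le4$ via the spin-1 embedding, and you exhibit the common eigenvector $v_{jk}e_i-v_{ij}^{*}e_k$, which makes $\mathcal L$ reducible. Your version is more self-contained and effectively reproves those items.
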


\begin{clproof} \label{CTLSCC_th:twostrong}
    This statement is a direct corollary of the sufficient controllability condition in Theorem \ref{CTLSCC_th:graphtheorem} or a combination of Theorems \ref{CTLSCC_th:shifts} and \ref{CTLSCC_th:twonoshifts}.
\end{clproof}

\begin{clproof}\label{CTLSCC_th:twonotstrong1}
    This statement is a direct corollary of Theorem \ref{CTLSCC_th:graphtheoremstrengthened} or a combination of Theorems \ref{CTLSCC_th:shifts} and \ref{CTLSCC_th:twonoshifts}.
\end{clproof}

\begin{clproof}\label{CTLSCC_th:twonotstrong2}
    Without loss of generality, we consider energy levels in the order
    \begin{equation}\label{CTLSCC_eq:reindexation}
        i = 1, \, j = 2, \, k = 3,
    \end{equation}
     so that
    \begin{align}
            &v_{12}, \, v_{23} \neq 0, \quad v_{13} = 0, \label{CTLSCC_eq:twotransitionsindexation} \\[0.2cm]
            &E_{1} = E, \quad  E_{2} = E + \Delta E, \, E_{3} = E + 2 \Delta E. %\label{STLSCC_eq:regnotstronglylevels}
    \end{align}
    Thus, we can modify the free Hamiltonian by a constant shift and multiplication as  $ (\Delta E)^{-1} (iH - (E + \Delta E) \, i\mathbb{I})$ to produce the modified free Hamiltonian which generates the same dynamical Lie algebra
\begin{equation}\label{CTLSCC_eq:regnotstronglyreparametrization}
        iH = B_{3}^{\D} - B_{1}^{\D} = i\begin{pmatrix}
            -1 & 0 & 0 \\
            0 & 0 & 0 \\
            0 & 0 & 1 
        \end{pmatrix}.
    \end{equation}
    
    As in Controllability Condition~\ref{CTLSCC_th:twonotstrong1}, the criterion presented in Theorem~\ref{CTLSCC_th:twonoshifts} and Corollary of Theorem~\ref{CTLSCC_th:shifts} together show that the system is controllable when $ |v_{12}| \neq |v_{23} | $, independently of the values of $ v_{k} $. Therefore, it remains to perform the analysis for the case $ |v_{12}| = |v_{23}| =:  r $, when the interaction Hamiltonian has the form    \begin{equation}\label{CTLSCC_eq:techproof10formula1}
        iV = \begin{pmatrix}
            iv_{1} & r e^{i (\varphi_{12} + \frac{\pi}{2})} & 0 \\
            r e^{-i \left(\varphi_{12} - \frac{\pi}{2} \right)} & iv_{2} & r e^{i (\varphi_{23} + \frac{\pi}{2})} \\
            0 & r e^{-i (\varphi_{23} - \frac{\pi}{2})}  & iv_{3}
        \end{pmatrix}.
    \end{equation}
    For matrices \eqref{CTLSCC_eq:regnotstronglyreparametrization} and \eqref{CTLSCC_eq:techproof10formula1} one has
    \begin{equation}\label{CTLSCC_eq:diagonalfinding}
        L := i V + \ad_{i H}^{2}(iV) = \sum\limits_{k = 1}^{3} v_{k} B_{k}^{\D} = i \begin{pmatrix}
            v_{1} & 0 & 0 \\
            0 & v_{2} & 0 \\
            0 & 0 & v_{3}
        \end{pmatrix} \in \mathcal{L}(iH, iV).
    \end{equation}
    If $ (v_{2}-v_{1}) \neq (v_{3} - v_{2}) $, then the set $ \left\{iH, \, L, \, i\mathbb{I} \right\} \subset \mathcal{L}(iH, \, iV) $ is linearly independent. Hence, $ \exists  c_{1}, \, c_{2}, \, c_{3} \in \mathbb{R}:$  
    \begin{equation}\label{CTLSCC_eq:efficientstructure}
        \begin{split}
            &i H' = c_{1} i H + c_{2} L + c_{3} i \mathbb{I}, \\[0.2cm]
            &|E'_{2} - E'_{1}| \neq |E'_{3} - E'_{2}| \neq |E'_{3} - E'_{1}| \neq |E'_{2} - E'_{1}|. 
        \end{split}
    \end{equation}
    Therefore, the dynamical Lie algebra $ \mathcal{L}(iH', \, iV) $ corresponds to a system that is controllable by Controllability Condition \ref{CTLSCC_th:twostrong}, and moreover $ \mathcal{L}(iH, iV) \supset \mathcal{L}(iH', iV) $. This proves the sufficiency of the condition.
    
    Necessity follows from the fact that, when $ v_{2} - v_{1} = v_{3} - v_{2}$, one can consider an equivalent interaction Hamiltonian $ V' $ with a symmetric shift of all levels by $ v_{2} $ and with the same transition structure, 
    \begin{equation}\label{CTLSCC_eq:newvstrih}
        iV' := iV + c iH = \begin{pmatrix}
            iv_{2} & r e^{i (\varphi_{12} + \frac{\pi}{2})} & 0 \\
            r e^{-i \left(\varphi_{12} - \frac{\pi}{2} \right)} & iv_{2} & r e^{i (\varphi_{23} + \frac{\pi}{2})} \\
            0 & r e^{-i (\varphi_{23} - \frac{\pi}{2})}  & iv_{2}
        \end{pmatrix},
    \end{equation}
    where
    \begin{equation}
        c = v_{1} - v_{2} = v_{2} - v_{3} = \frac{v_{1} - v_{3}}{2}.
    \end{equation}
    Indeed, for an arbitrary element $ A $ of the dynamical Lie algebra $ \mathcal{L}(iH, iV') $, one has
    \begin{equation}
        [A, \, iV'] = c[A, \, iH] + [A, \, iV],
    \end{equation}
    and hence it follows that $ \mathcal{L}(iH, iV) \supset \mathcal{L}(iH, iV') $. Rewriting the same equality for $ A \in \mathcal{L}(iH, iV) $ shows the reverse inclusion $ \mathcal{L}(iH, iV') \supset \mathcal{L}(iH, iV) $. Consequently, $ \mathcal{L}(iH, iV) = \mathcal{L}(iH, iV')$.
    
    However, a system with symmetric controlled energy-level shifts is equivalent, in its controllability, to a system without such shifts. That is, the system with interaction Hamiltonian $ V' $ is equivalent to the system with the interaction Hamiltonian $  V'' := V' - v_{2} \mathbb{I} $. In this case, we have 
    \begin{equation}
        iV'' = iV' - v_{2} i\mathbb{I} = \begin{pmatrix}
            0 & r e^{i (\varphi_{12} + \frac{\pi}{2})} & 0 \\
             r e^{-i (\varphi_{12} - \frac{\pi}{2})} & 0 & r e^{i (\varphi_{23} + \frac{\pi}{2})} \\
             0 & r e^{-i (\varphi_{23} - \frac{\pi}{2})}  & 0
        \end{pmatrix}. 
    \end{equation}
    But such a system $ (H, V'') $ is not controllable by the case~\ref{CTLSCC_enum:techstatement3} in Theorem~\ref{CTLSCC_th:twonoshifts}.
\end{clproof}

\begin{clproof}\label{CTLSCC_th:twonotregular1}
    We again use the indexing \eqref{CTLSCC_eq:reindexation}, \eqref{CTLSCC_eq:twotransitionsindexation} of levels and corresponding transitions. Since
    \begin{equation}
        E_{1} = E_{2} = E, \quad E_{3} = E', \quad E \neq E',
    \end{equation}
    without loss of generality, we can shift and rescale $ i H $ to $ (E'- E)^{-1} (iH - E \, i \mathbb{I}) $, so that it takes the form
    \begin{equation*}
        iH = B_{3}^{\D} = i \begin{pmatrix}
            0 & 0 & 0 \\
            0 & 0 & 0 \\
            0 & 0 & 1
        \end{pmatrix}.
    \end{equation*}
    
    Then we have
    \begin{align}
        & L_{1} := - \ad_{iH}^{2}(iV) = v_{23}^{\Im} B_{23}^{\Re} + v_{23}^{\Re} B_{23}^{\Im} = 
        i \begin{pmatrix}
            0 & 0 & 0 \\
            0 & 0 & v_{23} \\
            0 & v_{23}^{*}  & 0 
        \end{pmatrix}, \\[0.2cm]
        & L_{2} := [i H, \, L_{1}] = - v_{23}^{\Re} B_{23}^{\Re} + v_{23}^{\Im} B_{23}^{\Im} = \begin{pmatrix}
            0 & 0 & 0 \\
            0 & 0 & v_{23} \\
            0 & -v_{23}^{*}  & 0 
        \end{pmatrix}.
    \end{align}
    Thus, for $ L = \begin{pmatrix}
        L_{1} & L_{2}
    \end{pmatrix}^{T} $ and $ B = \begin{pmatrix}
        B_{23}^{\Re} & B_{23}^{\Im}
    \end{pmatrix}^{T} $, in Lemma~\ref{CTLSCC_th:lemmainvertiblebasis}, we have
    \begin{equation}\label{CTLSCC_eq:technoname10}
        S = \begin{pmatrix}
            v_{23}^{\Im} & v_{23}^{\Re}  \\
            -v_{23}^{\Re} & v_{23}^{\Im}  
        \end{pmatrix}, \quad \det S = (v_{23}^{\Re})^{2} + (v_{23}^{\Im})^{2} = |v_{23}|^{2} \neq 0. 
    \end{equation}
    It shows that $ B_{23}^{\Re}, B_{23}^{\Im} \in \mathcal{L}(iH, \, iV) $.

    Next, the dynamical Lie algebra also contains elements
    \begin{equation}
    \begin{split}
        & L_{3} := [ B_{23}^{\Im}, \, [B_{23}^{\Re}, \, iV]] - 4 v_{23}^{\Re} B_{23}^{\Re} = - v_{12}^{\Re} B_{12}^{\Re} + v_{12}^{\Im} B_{12}^{\Im} = \begin{pmatrix}
            0 & v_{12} & 0 \\
            -v^{*}_{12} & 0 & 0 \\
            0 & 0 & 0
        \end{pmatrix}, \\[0.2cm]
        & L_{4} := -[B_{23}^{\Im}, \, [B_{23}^{\Re}, \, L_{3}]] = v_{12}^{\Im} B_{12}^{\Re} + v_{12}^{\Re} B_{12}^{\Im} = 
            \begin{pmatrix}
                0 & i v_{12} & 0 \\
                i v_{12}^{*} & 0 & 0  \\
                0 & 0 & 0 
            \end{pmatrix}.
    \end{split}
    \end{equation}
    By constructing the matrix $ S $ in the same manner as in \eqref{CTLSCC_eq:technoname10}, we find that $ B_{12}^{\Re}, B_{12}^{\Im} \in \mathcal{L}(iH, iV) $. Therefore, the system is controllable by Theorem \ref{CTLSCC_th:algebrastatement}.
\end{clproof}

\begin{clproof}\label{CTLSCC_th:twonotregular2}
    Again, we use the indexing of the levels and transitions given in \eqref{CTLSCC_eq:reindexation} and \eqref{CTLSCC_eq:twotransitionsindexation}. It gives
    \begin{equation}
        E_{1} = E_{3} = E, \ \ E_{2} = E' \ \ \ \ (E \neq E').
    \end{equation}
    We can use the modified form $ (E' - E)^{-1} (iH - E \, i \mathbb{I}) $ of $ i H $ without affecting the controllability of the system. Therefore, to prove the theorem, we can take 
    \begin{equation}\label{CTLSCC_eq:notregreparametrization}
        iH = B_{2}^{\D} = i \begin{pmatrix}
            0 & 0 & 0 \\
            0 & 1 & 0 \\
            0 & 0 & 0
        \end{pmatrix}.
    \end{equation}

Consider $v_1\ne v_3$. The diagonal element $L$ defined in \eqref{CTLSCC_eq:diagonalfinding}  belongs to $\mathcal L(iH,iV)$. Choose $a\in\mathbb R$ such that $v_1$, $v_2+a$, and $v_3$ are pairwise distinct and have pairwise distinct absolute differences. Such a choice exists because only finitely many values of $a$ are excluded. Then $iH'=L+aiH$ belongs to $\mathcal L(iH,iV)$, while $H'$ is strongly regular. Since the interaction graph is connected, the system $(H',V)$ is controllable by Theorem~\ref{CTLSCC_th:graphtheorem}. Hence the original system $(H,V)$ is controllable, and the sufficiency is established.

Now consider $v_1=v_3$. As in \eqref{CTLSCC_eq:newvstrih}, set $ iV':=iV+c\,iH$ with
    \begin{equation}
        c = v_{3} - v_{2} = v_{1} - v_{2}.
    \end{equation}
Then $iV'$ has equal diagonal elements, while $\mathcal L(iH,iV')=\mathcal L(iH,iV)$. After subtracting this common diagonal component, the system has no controlled energy-level shifts and is noncontrollable by item~5 of Theorem~\ref{CTLSCC_th:twonoshifts}. Hence the system $(H,V)$ is not controllable, and the necessity is established. 
\end{clproof}

\subsection{Systems with three allowed transitions}\label{CTLSCC_subsec:threetransitions}
In this subsection, we perform the controllability analysis of systems with three allowed transitions. To the best of our knowledge, no complete classification of such systems without controlled energy-level shifts has been reported in the literature; therefore, we perform the controllability analysis for the systems both without and with controlled energy-level shifts. Unlike systems with two transitions, systems with three allowed transitions with degeneracies are not divided into two cases depending on the presence or absence of level-spacing degenerate transitions. Therefore, we have three statements instead of five in the previous subsection --- one for each energy-level structure, shown in Figure~\ref{CTLSCC_fig:structures_combinations}.

\begin{theorem}[Complete operator controllability classification for systems with three allowed transitions]\label{CTLSCC_th:three_transitions_classification}
    \hspace{0.1cm} \\[0.05cm]
    \indent Let $ (H, V) $ be a three-level system with the free Hamiltonian $ H $ and the interaction Hamiltonian $ V $ matrices of the form \eqref{CTLSCC_eq:threelevelgeneralmatrices} with three allowed transitions, i.e., 
    \begin{equation}
        \forall i' \neq j': \ v_{i'j'} \neq 0.
    \end{equation}
     Then the following statements hold.
\begin{itemize}
    \item \textbf{Controllability Condition 6}: 

    If the system is strongly regular  [Fig.~\ref{CTLSCC_fig:structures_combinations}(3A)], i.e., if \, $ \exists i, j, k: \, \varepsilon_{ijk} \neq 0 $ for which the inequalities \eqref{CTLSCC_eq:ttcrossref1} hold, then it is controllable.

    \item \textbf{Controllability Condition 7}: 

    If the system is regular but not strongly regular [Fig.~\ref{CTLSCC_fig:structures_combinations}(3B)], i.e., if \, $ \exists i, j, k: \, \varepsilon_{ijk} \neq 0 $ for which the equality~\eqref{CTLSCC_eq:ttcrossref3} holds, then it is controllable.

    \item \textbf{Controllability Condition 8}: 
    
    If the system is nonregular [Fig.~\ref{CTLSCC_fig:structures_combinations}(3C)], i.e., if $ \exists i, j, k: \  \varepsilon_{ijk} \neq 0 $ for which 
    \begin{equation}
          E_{i} = E_{j},
    \end{equation}
    then it is controllable iff one of the following conditions holds:
     \begin{enumerate}
         \item \begin{equation}
             \frac{v_{ij}}{v_{ik} v_{kj}} \notin \mathbb{R}, \ \text{i.e.} \ \arg(v_{ij}) \neq \arg(v_{ik}) + \arg(v_{kj}) + \pi m,\quad m\in\mathbb Z;
         \end{equation}
         \item
         \begin{equation}
             \frac{v_{ij}}{v_{ik} v_{kj}}  \left(|v_{kj}|^{2} - |v_{ik}|^{2} \right) \neq v_{j} - v_{i}.
         \end{equation}
     \end{enumerate}
\end{itemize}
\end{theorem}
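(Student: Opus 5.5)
For Controllability Condition 6 we would apply the sufficient part of Theorem~\ref{CTLSCC_th:graphtheorem} directly: with three allowed transitions the structural graph $\mathcal{G}_V$ is complete, hence connected, and the system is strongly regular. Controlled shifts do not affect this argument, since shifts only add loops to $\mathcal{G}_V$.

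\textbf{Condition 7.} We relabel so that $E_1=E$, $E_2=E+\Delta E$, $E_3=E+2\Delta E$, and rescale so that $iH=B_3^{\D}-B_1^{\D}$. The system is $V$--regular. The transitions split into two frequency classes: $\mathcal{E}_1=\{(1,2),(2,3)\}$ with frequency $1$, and $\mathcal{E}_2=\{(1,3)\}$ with frequency $2$.

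The Vandermonde argument from the proof of Theorem~\ref{CTLSCC_th:shifts} (elements $\ad_{iH}^{4m}(iV)$, $m=1,2$) puts $\tilde B_1$ and $\tilde B_2$ in $\mathcal{L}(iH,iV)$. Applying Lemma~\ref{CTLSCC_th:lemmainvertiblebasis} to $\tilde B_2$ and $[iH,\tilde B_2]$ gives a determinant proportional to $|v_{13}|^2\neq0$, so $B_{13}^{\Re},B_{13}^{\Im}\in\mathcal{L}$.

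It then suffices, by the Corollary of Theorem~\ref{CTLSCC_th:algebrastatement}, to obtain one element $B_{12}$ or $B_{23}$. The plan has two branches.
\begin{itemize}
\item Compute the diagonal part of $[\tilde B_1,[iH,\tilde B_1]]$. Its off-diagonal $(1,3)$ part can be removed using $B_{13}^{\Re},B_{13}^{\Im}$. The diagonal part is proportional to $\operatorname{diag}(-|v_{12}|^2,\,|v_{12}|^2-|v_{23}|^2,\,|v_{23}|^2)$. If $|v_{12}|\neq|v_{23}|$, this element is not proportional to $iH$ modulo $i\mathbb{I}$. A generic linear combination with $iH$ then gives a strongly regular $H'$ with $iH'\in\mathcal{L}$, and Theorem~\ref{CTLSCC_th:graphtheorem} applied to $(H',V)$ finishes this branch.
\item If $|v_{12}|=|v_{23}|$, we would instead take $\tilde B_1$, $[iH,\tilde B_1]$, $[B_{13}^{\Re},\tilde B_1]$ and $[B_{13}^{\Im},\tilde B_1]$. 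These are four elements in the span of $B_{12}^{\Re},B_{12}^{\Im},B_{23}^{\Re},B_{23}^{\Im}$. We would show via Lemma~\ref{CTLSCC_th:lemmainvertiblebasis} that their $4\times4$ coefficient matrix is invertible, with determinant a polynomial in $|v_{12}|$, $|v_{23}|$ and the phases.
\end{itemize}
This equal-modulus subcase is where we expect the main computational obstacle. If the determinant turns out to vanish for some phases, we would fall back on nested brackets with $B_{13}$.

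\textbf{Condition 8.} Take $E_1=E_2$ and $E_3\neq E_1$. We would use the freedom of unitary rotations $W$ within the degenerate eigenspace $\operatorname{span}\{e_1,e_2\}$. Such a rotation leaves $H$ invariant and hence preserves controllability. Write $w=(v_{13},v_{23})^T$ and let $V_{12}$ be the upper-left $2\times2$ block of $V$.

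\emph{Necessity.} Suppose $w$ is an eigenvector of $V_{12}$. Choose the rotated basis $e_1'\parallel w$, $e_2'\perp w$. Then $v'_{2'3}=0$ and $v'_{1'2'}=0$. The vertex $2'$ becomes isolated, so $\mathcal{G}_V$ is disconnected and the system is uncontrollable by Theorem~\ref{CTLSCC_th:graphtheorem}.

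\emph{Sufficiency.} Suppose $w$ is not an eigenvector of $V_{12}$. In the same rotated basis we still have $v'_{2'3}=0$, but now $v'_{1'2'}\neq0$. This is a two-transition system $2'\leftrightarrow1'\leftrightarrow3$ with $E_{1'}=E_{2'}$, namely case 2C\#. It is controllable by Controllability Condition~\ref{CTLSCC_th:twonotregular1}, whose proof is valid with arbitrary shifts.

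It remains to translate the eigenvector condition into the stated criteria. The eigenvector relations are
\begin{equation*}
v_1v_{13}+v_{12}v_{23}=\lambda v_{13},\qquad v_{12}^*v_{13}+v_2v_{23}=\lambda v_{23}.
\end{equation*}
Eliminating the real eigenvalue $\lambda$ gives
\begin{equation*}
v_{12}v_{23}v_{13}^{-1}-v_{12}^*v_{13}v_{23}^{-1}=v_2-v_1 .
\end{equation*}
Since the right-hand side is real, the imaginary part of this relation should yield exactly the condition $v_{ij}/(v_{ik}v_{kj})\in\mathbb{R}$. The real part should then reduce to the equality $\frac{v_{ij}}{v_{ik}v_{kj}}(|v_{kj}|^2-|v_{ik}|^2)=v_j-v_i$. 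Checking this algebra, keeping track of the index conventions $v_{ji}=v_{ij}^*$, is routine but must be done carefully.
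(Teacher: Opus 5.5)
Your proposal is correct. Condition~6 matches the paper, and Condition~7 is essentially the paper's argument, but for Condition~8 you take a genuinely different route.

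\textbf{Condition~8.} The paper stays in the given eigenbasis (with $E_1=E_3$) and builds $L_1,\dots,L_4$. It then splits into cases according to whether $q\in\mathbb R$, whether $|v_{12}|=|v_{23}|$ and $v_1=v_3$, and whether the stated equality holds. In each sufficient case it extracts $B_{13}^{\Re},B_{13}^{\Im},B_3^{\D}$ or $B_{12},B_{23}$ through Lemma~\ref{CTLSCC_th:lemmainvertiblebasis} and explicit determinants. For necessity it checks, via a commutator table, that $\operatorname{span}\{iH,L_1,L_2,L_4,i\mathbb I\}$ is closed, so $\dim\mathcal L\le 5$.

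You instead use the $\mathbf U(2)$ freedom in the degenerate eigenspace. These rotations commute with $H$ and only conjugate the dynamical Lie algebra. After rotating so that the third level couples only to $e_{1'}\parallel w$, controllability hinges on whether $w$ is an eigenvector of the compressed block $V_{12}$.
\begin{itemize}
\item Necessity becomes an isolated vertex, equivalently a block-diagonal $\mathfrak u(1)\oplus\mathfrak u(2)$ algebra. This also explains the paper's bound $\dim\mathcal L\le5$.
\item Sufficiency reduces to case 2C\#, whose proof indeed allows arbitrary shifts.
\item Your translation is right. With $q=v_{12}/(v_{13}v_{23}^*)$, the eliminated relation reads $q|v_{23}|^2-q^*|v_{13}|^2=v_2-v_1$. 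Its imaginary part is $\Im(q)\,(|v_{13}|^2+|v_{23}|^2)=0$, which forces $q\in\mathbb R$, and its real part is then the negated second condition. The converse (define $\lambda$ from the first eigen-equation) is immediate since $v_{13},v_{23}\neq0$.
\end{itemize}
Your route is shorter, avoids the case analysis, and gives the criterion a basis-free meaning: there is no decoupled ``dark'' state in the degenerate subspace. The paper's route is fully constructive in the original basis and exhibits explicit generators.

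\textbf{Condition~7.} The equal-modulus branch you flagged is not an obstacle. The map $X\mapsto(X_{12},X_{23})$ identifies $\operatorname{span}\{B_{12}^{\Re},B_{12}^{\Im},B_{23}^{\Re},B_{23}^{\Im}\}$ with $\mathbb C^2$. Under it, $\tilde B_1$, $[iH,\tilde B_1]$, $[B_{13}^{\Re},\tilde B_1]$ and $[B_{13}^{\Im},\tilde B_1]$ go to $ie$, $e$, $if$, $f$, where $e=(v_{12},v_{23})$ and $f=(-v_{23}^*,v_{12}^*)$. Since $\det(e\,|\,f)=|v_{12}|^2+|v_{23}|^2$, the real $4\times4$ determinant is $\pm(|v_{12}|^2+|v_{23}|^2)^2\neq0$ for all moduli and phases. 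So your first branch can be dropped. (In that branch, $[\tilde B_1,[iH,\tilde B_1]]$ is already purely diagonal, so there is no $(1,3)$ part to remove.)
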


\begin{clproof}\label{CTLSCC_th:threestrong}
    This is a direct corollary of the sufficient controllability condition in Theorem \ref{CTLSCC_th:graphtheorem}.
\end{clproof}

\begin{clproof}\label{CTLSCC_th:threenotstrong}
    First, if the statement of the theorem is true for systems without controlled energy-level shifts, then it is true for any configuration of the additional controlled energy-level shifts due to Theorem \ref{CTLSCC_th:shifts}, since the considered system is regular and hence $V$-regular. For this reason, it is enough to consider the case $ \forall k: \ v_k = 0 $.

    As in Controllability Condition~\ref{CTLSCC_th:twonotstrong2}, we can, without loss of generality, take the free Hamiltonian $ H $ to have the form given in the equation~\eqref{CTLSCC_eq:regnotstronglyreparametrization}. Since $ v_{ij} \neq 0 $, the structure of the interaction Hamiltonian $ V $ allows us to perform the same reparameterization as before. Since we have excluded controlled energy-level shifts,
    \begin{equation}\label{CTLSCC_eq:technoname2}
        iV = \sum\limits_{i < j} \left(v_{ij}^{\Im} B_{ij}^{\Re} + v_{ij}^{\Re} B_{ij}^{\Im}\right) = i \begin{pmatrix}
            0 & v_{12} & v_{13} \\
            v_{12}^{*} & 0 & v_{23} \\
            v_{13}^{*} & v_{23}^{*} & 0
        \end{pmatrix},
    \end{equation}
    where $ v_{12}, v_{13}, v_{23} \neq 0 $.

    First, the dynamical Lie algebra contains 
    \begin{equation}\label{CTLSCC_eq:technoname1.1}
        L_{1} := \frac{1}{3}\left(\ad_{iH}^{2}(i V) + 4 i V\right) = v_{12}^{\Im} B_{12}^{\Re} + v_{12}^{\Re} B_{12}^{\Im} + v_{23}^{\Im} B_{23}^{\Re} + v_{23}^{\Re} B_{23}^{\Im} = i \begin{pmatrix}
           0 & v_{12} & 0 \\
           v_{12}^{*} & 0 & v_{23} \\
           0 & v_{23}^{*} & 0 \\
        \end{pmatrix}.
    \end{equation}
    Therefore, by the regular but not strongly regular condition in Theorem \ref{CTLSCC_th:twonoshifts}, the system is controllable at least whenever $ |v_{12}| \neq |v_{23}| $ as
    \begin{equation}\label{CTLSCC_eq:technoname1.2}
        \mathcal{L}(iH, iV) \supset \mathcal{L}(iH, L_{1}). 
    \end{equation}
    However, the following shows that the third allowed transition $ v_{13} \neq 0 $ removes this restriction.

    Indeed, since $ L_{1} \in \mathcal{L}(iH, iV) $, we can also obtain 
    \begin{align}
        L_{2} &:= i V - L_{1}
        = v_{13}^{\Im} B_{13}^{\Re} + v_{13}^{\Re} B_{13}^{\Im}
        = i \begin{pmatrix}
            0 & 0 & v_{13} \\
            0 & 0 & 0 \\
            v_{13}^{*} & 0 & 0
        \end{pmatrix}, \\[1em]
        L_{3} &:= \frac{1}{2}[i H, \, L_{2}]
            = -v_{13}^{\Re} B_{13}^{\Re} + v_{13}^{\Im} B_{13}^{\Im}
            = \begin{pmatrix}
            0 & 0 & v_{13} \\
            0 & 0 & 0 \\
            - v_{13}^{*} & 0 & 0
        \end{pmatrix}.
    \end{align}
    Constructing $ S $ for $ L = \begin{pmatrix}
        L_{2} & L_{3}
    \end{pmatrix}^{T} $ and $ B = \begin{pmatrix}
        B_{13}^{\Re} & B_{13}^{\Im}
    \end{pmatrix}^{T}$, we have
    \begin{equation}
        S = \begin{pmatrix}
            v_{13}^{\Im} & v_{13}^{\Re} \\
            -v_{13}^{\Re} & v_{13}^{\Im}
        \end{pmatrix}, \quad \det S = (v_{13}^{\Re})^{2}+ (v_{13}^{\Im})^{2} = |v_{13}|^{2} \neq 0.
    \end{equation}
    This means that $ B_{13}^{\Re}, B_{13}^{\Im} \in \mathcal{L}(iH, iV) $.

    Next, the commutator
    \begin{equation}
        L_{4} := [i H, \, L_{1}]= -v_{12}^{\Re} B_{12}^{\Re} + v_{12}^{\Im} B_{12}^{\Im} - v_{23}^{\Re} B_{23}^{\Re} + v_{23}^{\Im} B_{23}^{\Im} = \begin{pmatrix}
            0 & v_{12} & 0 \\
            -v_{12}^{*}& 0 & v_{23} \\
            0 & -v_{23}^{*} & 0\\
        \end{pmatrix}
    \end{equation}
    belongs to the dynamical Lie algebra. But since it also contains the basis element $ B_{13}^{\Im} $ (the basis element $ B_{13}^{\Re} $ could also be used in an analogous calculation, therefore it is crucial that $ B_{13} \in \mathcal{L}(iH, \,iV)$), it contains the elements
    \begin{align}
        L_{5} &:= -[L_{1}, \, B_{13}^{\Im}]
        = v_{23}^{\Re} B_{12}^{\Re} + v_{23}^{\Im} B_{12}^{\Im}
         - v_{12}^{\Re} B_{23}^{\Re} - v_{12}^{\Im} B_{23}^{\Im}
        = \begin{pmatrix}
        0 & -v_{23}^{*} & 0 \\
            v_{23} & 0 & v_{12}^{*} \\
            0 & -v_{12} & 0 
        \end{pmatrix}, \\[1em]
    L_{6} &:= [i H, \, L_{5}]
        = -v_{23}^{\Im} B_{12}^{\Re} + v_{23}^{\Re} B_{12}^{\Im}
        + v_{12}^{\Im} B_{23}^{\Re} - v_{12}^{\Re} B_{23}^{\Im}
        = i \begin{pmatrix}
            0 & v_{23}^{*} & 0 \\
            v_{23} & 0 & -v_{12}^{*} \\
            0 & -v_{12} & 0 
    \end{pmatrix}.
\end{align}
    Repeating the construction of the matrix $ S $ for $ L = \begin{pmatrix}
        L_{1} & L_{4} & L_{5} & L_{6} 
    \end{pmatrix}^{T} $ and $ B = \begin{pmatrix}
        B_{12}^{\Re} & B_{12}^{\Im} & B_{23}^{\Re} & B_{23}^{\Im}
    \end{pmatrix}^{T} $, we obtain
    \begin{equation}
        S = \begin{pmatrix}
           v_{12}^{\Im} & v_{12}^{\Re} & v_{23}^{\Im} & v_{23}^{\Re} \\
           -v_{12}^{\Re} & v_{12}^{\Im} & -v_{23}^{\Re} & v_{23}^{\Im} \\
           v_{23}^{\Re} & v_{23}^{\Im} & -v_{12}^{\Re} & -v_{12}^{\Im} \\
            -v_{23}^{\Im} & v_{23}^{\Re} & v_{12}^{\Im} & -v_{12}^{\Re} \\
        \end{pmatrix},
    \end{equation}
    \begin{equation}
        \det S = \left((v_{12}^{\Re})^{2} + (v_{12}^{\Im})^{2} + (v_{23}^{\Re})^{2} + (v_{23}^{\Im})^{2}\right)^{2} = \left(|v_{12}|^{2} + |v_{23}|^{2}\right)^{2} \neq 0.
    \end{equation}
    Therefore, $ B_{12}^{\Re}, B_{12}^{\Im}, \,B_{23}^{\Re}, \, B_{23}^{\Im} \in \mathcal{L}(iH, \, iV) $.

    We have shown that $ \forall i < j: \ B_{ij}^{\Re}, B_{ij}^{\Im} \in \mathcal{L}(iH, \, iV) $. Thus, the system is controllable by Theorem~\ref{CTLSCC_th:algebrastatement}.
\end{clproof}

\begin{clproof}\label{CTLSCC_th:threenotregular}
    Here, we use the indexing 
    \begin{equation}
        i = 1, \ j = 3, \ k = 2
    \end{equation}
    and, consequently, the free Hamiltonian $ H $ of the form \eqref{CTLSCC_eq:notregreparametrization} as in Controllability Condition \ref{CTLSCC_th:twonotregular2}. Then we have 
    \begin{align}
        L_{1} &:= [i H, \, i V]
        = -v_{12}^{\Re} B_{12}^{\Re} + v_{12}^{\Im} B_{12}^{\Im}
        + v_{23}^{\Re} B_{23}^{\Re} - v_{23}^{\Im} B_{23}^{\Im}
        = \begin{pmatrix}
            0 & v_{12} & 0 \\
            -v_{12}^{*}& 0 & -v_{23} \\
            0 & v_{23}^{*}& 0
        \end{pmatrix}, \label{CTLSCC_eq:L1} \\[1em]
        L_{2} &:= -\ad_{i H}^{2}(iV)
        = v_{12}^{\Im} B_{12}^{\Re} + v_{12}^{\Re} B_{12}^{\Im}
        + v_{23}^{\Im} B_{23}^{\Re} + v_{23}^{\Re} B_{23}^{\Im}
        = i \begin{pmatrix}
        0 & v_{12} & 0 \\
        v_{12}^{*} & 0 & v_{23} \\
        0 & v_{23}^{*} & 0
        \end{pmatrix},  \label{CTLSCC_eq:L2}\\[1em]
        L_{3} &:= iV - L_{2} - v_{2}\, i H
        = v_{13}^{\Im} B_{13}^{\Re} + v_{13}^{\Re} B_{13}^{\Im}
        + v_{1} B_{1}^{\D} + v_{3} B_{3}^{\D}
        = i \begin{pmatrix}
            v_{1} & 0 & v_{13} \\
            0 & 0 & 0 \\
            v_{13}^{*} & 0 & v_{3}
        \end{pmatrix}, \label{CTLSCC_eq:L3}
    \end{align}
    and 
    \begin{equation}
        \begin{split}
            L_{4} &:= \frac{1}{2}[L_{1}, L_{2}] + \left(|v_{12}|^{2} + |v_{23}|^{2}\right) \, i H = \Im(v_{12} v_{23}) B_{13}^{\Re} + \Re(v_{12} v_{23}) B_{13}^{\Im} + |v_{12}|^{2} B_{1}^{\D} + |v_{23}|^{2} B_{3}^{\D} = \\
            & = i \begin{pmatrix} |v_{12}|^{2} & 0 & v_{12} v_{23} \\
            0 & 0 & 0 \\
            v_{12}^{*} v_{23}^{*} & 0 & |v_{23}|^{2}
            \end{pmatrix}.
        \end{split}
    \end{equation}
    These are the key elements of the dynamical Lie algebra we are going to analyze. Their commutators with $ iH $ produce no new elements: $ L_{1} $ and $ L_{2} $ are mapped to each other up to sign, whereas $ L_{3} $ and $ L_{4} $ commute with $ iH $.
    
    We consider several possible cases separately. The following cases \ref{CTLSCC_enum:1} and \ref{CTLSCC_enum:2} are related to sufficiency of the criterion (i.e., proving the system is controllable when the condition is satisfied), while the case \ref{CTLSCC_enum:3} describes necessity (i.e., proving the system is not controllable when the condition is not satisfied).
    \begin{enumerate}
        \item $ \displaystyle  \frac{v_{13}}{v_{12} v_{23}} \notin \mathbb{R} \ \ (\arg(v_{13}) \neq \arg(v_{12}) + \arg(v_{23}) + \pi m) $:\label{CTLSCC_enum:1} 
        \vspace{0.1cm}
        
        We start with the case of inconsistent phases of the transitions that break the symmetry of the structure of the interaction Hamiltonian $ V $. It includes two different situations.
        \begin{enumerate}[label=\arabic{enumi}.\arabic*]
            \item $ |v_{12}| \neq |v_{23}| $ or $ v_{1} \neq v_{3} $:\label{CTLSCC_enum:1.1}
                \vspace{0.1cm}
                
            Consider also the following two elements of the dynamical Lie algebra $ \mathcal{L}(iH, \, iV) $
            \begin{align}
                L_{5} &:= (v_{1} |v_{23}|^{2} - v_{3} |v_{12}|^{2}) (iH - i \mathbb{I}) - (v_{3} - v_{1}) L_{4} + \left(|v_{23}|^{2} - |v_{12}|^{2}\right) L_{3} = \\[0.2cm]
                &= w_{13}^{\Im} B_{13}^{\Re} + w_{13}^{\Re} B_{13}^{\Im} = i \begin{pmatrix}
                    0 & 0 & w_{13} \\
                    0 & 0 & 0 \\
                    w_{13}^{*} & 0 & 0
                    \end{pmatrix}, \nonumber \\[0.2cm]
                L_{6} &:= -[L_{3}, \, L_{4}] + w \, (i H -  i \mathbb{I}) = \\
                &= -w_{13}^{\Re} B_{13}^{\Re} + w_{13}^{\Im} B_{13}^{\Im} - 2 w B_{3}^{\D} = \begin{pmatrix} 
                    0 & 0 & w_{13} \\
                    0 & 0 & 0 \\
                    -w_{13}^{*} & 0 & -2 i w
                \end{pmatrix}, \nonumber
            \end{align}
            where 
            \begin{equation}
                \begin{split}
                    &w_{13} = \left(|v_{23}|^{2} - |v_{12}|^{2}\right) v_{13} -(v_{3} - v_{1}) v_{12} v_{23}, \quad w_{13}^{\Re} = \Re(w_{13}), \, w_{13}^{\Im} = \Im(w_{13}), \\[0.2cm]
                    &w = -2 \, \Im(v_{12} v_{23} v_{13}^{*}). 
                \end{split}
            \end{equation}
            
            Conditions in the case \ref{CTLSCC_enum:1} and in the subcase \ref{CTLSCC_enum:1.1} together guarantee that $ w_{13} \neq 0 $. The second condition implies that at least one of the terms is nonzero, whereas the first ensures that they do not cancel each other out in the case of both being nonzero, since they possess distinct phases. Adding a multiple of $ i \mathbb{I} $, as has been mentioned before, does not affect the dynamical Lie algebra structure.
            
            Then,
            \begin{equation}
                \begin{split}
                    L_{7} &:= \frac{1}{2} [L_{5}, L_{6}] - |w_{13}|^{2}(i H - i \mathbb{I}) = - w \, w_{13}^{\Re} B_{13}^{\Re} + w \, w_{13}^{\Im} B_{13}^{\Im} + 2 |w_{13}|^{2} B_{3}^{\D} = \\[0.2cm]  
                    &= \begin{pmatrix}
                        0 & 0 & w \, w_{13} \\
                        0 & 0 & 0 \\
                        - w \, w_{13}^{*} & 0 & 2 i |w_{13}|^{2}
                    \end{pmatrix}.
                \end{split}
            \end{equation}
            Taking $ S $ for $ L = \begin{pmatrix}
                 L_{5} & L_{6} & L_{7}
             \end{pmatrix}^{T} $ and $ B = \begin{pmatrix}
                 B_{13}^{\Re} & B_{13}^{\Im} & B_{3}^{\D}
             \end{pmatrix}^{T} $, we get 
             \begin{equation}
                    \begin{split}
                        S &= \begin{pmatrix}
                    w_{13}^{\Im} & w_{13}^{\Re} & 0 \\
                    -w_{13}^{\Re} & w_{13}^{\Im} & -2 w \\
                    -w w_{13}^{\Re} & w w_{13}^{\Im} & 2 |w_{13}|^{2} \\
                 \end{pmatrix}, \\[0.2cm]
                 \det S &= 2 |w_{13}|^{4} + 2 w^{2}  |w_{13}|^{2} = \\[0.2cm]
                 &= 2 |w_{13}|^{2} \left(|w_{13}|^{2} + w^{2} \right) \neq 0.
                    \end{split}
             \end{equation}
             Therefore, we obtain $ B_{13}^{\Re}, \, B_{13}^{\Im}, \, B_{3}^{\D} \in \mathcal{L}(i H, \, iV)$. Hence
             \begin{align}
                     &L_{8} := [B_{3}^{\D}, \, L_{1}] = v_{23}^{\Im} B_{23}^{\Re} + v_{23}^{\Re} B_{23}^{\Im} = i \begin{pmatrix}
                         0 & 0 & 0 \\
                         0 & 0 & v_{23} \\
                         0 & v_{23}^{*} & 0  
                     \end{pmatrix}, \label{CTLSCC_eq:technoname3} \\[0.2cm]
                     &L_{9} := \ad^{2}_{B_{3}^{\D}}(L_{1}) = -v_{23}^{\Re} B_{23}^{\Re} + v_{23}^{\Im} B_{23}^{\Im} = \begin{pmatrix}
                         0 & 0 & 0 \\
                         0 & 0 & v_{23} \\
                         0 & -v_{23}^{*} & 0 
                     \end{pmatrix}, 
             \end{align}
             and for $ L = \begin{pmatrix}
                 L_{8} & L_{9}
             \end{pmatrix}^{T} $ and $ B = \begin{pmatrix}
                 B_{23}^{\Re} & B_{23}^{\Im}
             \end{pmatrix}^{T} $
             \begin{equation}  \label{CTLSCC_eq:technoname4} 
                 S = \begin{pmatrix}
                     v_{23}^{\Im} & v_{23}^{\Re} \\
                     -v_{23}^{\Re} & v_{23}^{\Im}
                 \end{pmatrix}, \quad \det S = (v_{23}^{\Re})^{2} + (v_{23}^{\Im})^{2} = |v_{23}|^{2} \neq 0. 
             \end{equation}
             Thus, also $ B_{23}^{\Re}, \, B_{23}^{\Im} \in  \mathcal{L}(i H, \, iV) $, and the system is controllable by Theorem~\ref{CTLSCC_th:algebrastatement}.
            \vspace{0.1cm}
            \item $ |v_{12}| = |v_{23}| $ and $ v_{1} = v_{3} $: \label{CTLSCC_enum:1.2}
            \vspace{0.1cm}
            
            First, if $ v_{1} = v_{3} = v $, the system may be treated as having no controlled energy-level shifts. This follows because $ iV $ can be brought to the form given in~\eqref{CTLSCC_eq:technoname2} by subtracting the term
            \begin{equation}
                 v \, i \mathbb{I} + (v_{2} - v) \, i H  = i \operatorname{diag}(v, \, v_{2}, \, v).
            \end{equation}
            Denote the absolute values $ r := |v_{12}| = |v_{23}| $, $  r' := |v_{13}| $, $ \varphi_{ij} := \arg(v_{ij}) $ and consider $ v_{1} = v_{2} = v_{3} = 0 $. Then
            \begin{align}
                \tilde{L}_{1} &:= r^{-1} L_{1} = -\cos(\varphi_{12}) B_{12}^{\Re} + \sin(\varphi_{12}) B_{12}^{\Im} + \cos(\varphi_{23}) B_{23}^{\Re} - \sin(\varphi_{23}) B_{23}^{\Im} = \\[0.2cm]
                &= \begin{pmatrix}
                    0 & e^{i \varphi_{12}} & 0 \\
                    e^{-i (\varphi_{12} + \pi)} & 0 & e^{i \left(\varphi_{23} + \pi\right)} \\
                    0 & e^{-i \varphi_{23}} & 0 
                \end{pmatrix}, \nonumber \\[0.2cm]
                 \tilde{L}_{2} &:= r^{-1} L_{2} = \sin(\varphi_{12}) B_{12}^{\Re} + \cos(\varphi_{12}) B_{12}^{\Im} + \sin(\varphi_{23}) B_{23}^{\Re} + \cos(\varphi_{23}) B_{23}^{\Im} = \\[0.2cm]
                &= \begin{pmatrix}
                    0 & e^{i \left(\varphi_{12} + \frac{\pi}{2} \right)} & 0\\
                    e^{-i \left(\varphi_{12} - \frac{\pi}{2} \right)}& 0 & e^{i \left(\varphi_{23} + \frac{\pi}{2} \right)}  \\
                    0 & e^{-i \left(\varphi_{23} - \frac{\pi}{2} \right)} & 0
                \end{pmatrix}, \nonumber \\
                 \tilde{L}_{3} &:= (r')^{-1} L_{3} = \sin(\varphi_{13}) B_{13}^{\Re} + \cos(\varphi_{13}) B_{13}^{\Im}
                = \begin{pmatrix}
                    0 & 0 & e^{i \left(\varphi_{13} + \frac{\pi}{2} \right)}  \\
                    0 & 0 & 0 \\
                    e^{-i \left(\varphi_{13} - \frac{\pi}{2} \right)}  & 0 & 0 
                \end{pmatrix},
            \end{align}
            where $ L_{1}$, $L_{2}$, and $L_{3} $ are defined by the equations \eqref{CTLSCC_eq:L1}, \eqref{CTLSCC_eq:L2}, and \eqref{CTLSCC_eq:L3}, respectively.

            Next, 
            \begin{align}
                L_{10} &:= -[\tilde{L}_{1}, \, \tilde{L}_{3}] = \nonumber \\[0.2cm]
                &= \sin(\varphi_{13} - \varphi_{23}) B_{12}^{\Re} + \cos(\varphi_{13} - \varphi_{23}) B_{12}^{\Im} + \sin(\varphi_{13} - \varphi_{12}) B_{23}^{\Re} + \cos(\varphi_{13} - \varphi_{12}) B_{23}^{\Im} = \\[0.2cm]
                &= \begin{pmatrix}
                    0 & e^{i\left(\varphi_{13} - \varphi_{23} + \frac{\pi}{2} \right)} & 0 \\
                    e^{-i\left(\varphi_{13} - \varphi_{23} - \frac{\pi}{2} \right)} & 0 & e^{i\left(\varphi_{13} - \varphi_{12} + \frac{\pi}{2} \right)}  \\
                    0 & e^{-i\left(\varphi_{13} - \varphi_{12} - \frac{\pi}{2} \right)}  & 0
                \end{pmatrix}, \nonumber \\[0.2cm]
                L_{11} &:= -[\tilde{L}_{2}, \, \tilde{L}_{3}] = \nonumber \\[0.2cm]
                &= \cos(\varphi_{13} - \varphi_{23}) B_{12}^{\Re} - \sin(\varphi_{13} - \varphi_{23}) B_{12}^{\Im} - \cos(\varphi_{13} - \varphi_{12}) B_{23}^{\Re} + \sin(\varphi_{13} - \varphi_{12}) B_{23}^{\Im} = \\[0.2cm]
                &= \begin{pmatrix}
                    0 & e^{i\left(\varphi_{13} - \varphi_{23} + \pi \right)} & 0 \\
                    e^{-i\left(\varphi_{13} - \varphi_{23} \right)} & 0 & e^{i\left(\varphi_{13} - \varphi_{12} \right)}  \\
                    0 & e^{-i\left(\varphi_{13} - \varphi_{12} + \pi\right)}  & 0
                \end{pmatrix}. \nonumber 
            \end{align}
            For $ L = \begin{pmatrix}
                \tilde{L}_{1} & \tilde{L}_{2} & L_{10} & L_{11} 
            \end{pmatrix}^{T} $ and $ B = \begin{pmatrix}
                B_{12}^{\Re} & B_{12}^{\Im} & B_{23}^{\Re} & B_{23}^{\Im}
            \end{pmatrix}^{T} $ we have
            \begin{equation}
                \begin{split}
                    S &= \begin{pmatrix}
                        -\cos(\varphi_{12}) & \sin(\varphi_{12}) & \cos(\varphi_{23}) & -\sin(\varphi_{23}) \\
                        \sin(\varphi_{12}) & \cos(\varphi_{12}) & \sin(\varphi_{23}) & \cos(\varphi_{23}) \\
                        \sin(\varphi_{13} - \varphi_{23}) & \cos(\varphi_{13} - \varphi_{23}) & \sin(\varphi_{13} - \varphi_{12}) & \cos(\varphi_{13} - \varphi_{12}) \\
                        \cos(\varphi_{13} - \varphi_{23}) & -\sin(\varphi_{13} - \varphi_{23}) & -\cos(\varphi_{13} - \varphi_{12}) & \sin(\varphi_{13} - \varphi_{12})
                    \end{pmatrix}, \quad \\[0.2cm]
                    \det S &= -4 \sin^{2} \left(\varphi_{12} + \varphi_{23} - \varphi_{13} \right) \neq 0
                \end{split}
            \end{equation}
            under the assumption of the case \ref{CTLSCC_enum:1}. Thus, $ B_{12}^{\Re}, B_{12}^{\Im}, B_{23}^{\Re}, B_{23}^{\Im} \in \mathcal{L}(iH, iV), $ and the system is controllable.
        \end{enumerate}

        \vspace{0.1cm}
        \item $ \displaystyle  \frac{v_{13}}{v_{12} v_{23}} \in \mathbb{R} $ and $ \displaystyle \frac{v_{13}}{v_{12} v_{23}}  \left(|v_{23}|^{2} - |v_{12}|^{2} \right) \neq v_{3} - v_{1} $: \label{CTLSCC_enum:2}
        \vspace{0.1cm}
        
        In this case, we have
        \begin{equation}
            \begin{split}
                & \frac{v_{13}}{v_{12} v_{23}} L_{4} - L_{3} + \left(\frac{v_{13}}{v_{12} v_{23}} |v_{12}|^{2} - v_{1} \right)(i H - i \mathbb{I}) = \\[0.2cm]
                &= \left[\left(\frac{v_{13}}{v_{12} v_{23}} |v_{23}|^{2} - v_{3} \right) - \left(\frac{v_{13}}{v_{12} v_{23}} |v_{12}|^{2} - v_{1}\right)\right] B_{3}^{\D}.
            \end{split}
        \end{equation}
        This means that $ B_{3}^{\D} \in \mathcal{L}(iH, iV) $ under the assumption of the case \ref{CTLSCC_enum:2}, and 
        \begin{equation}
            B_{1}^{\D} = i \mathbb{I} - iH - B_{3}^{\D} \in \mathcal{L}(iH, iV).
        \end{equation}
         The remaining controllability argument repeats the argument in the equations \eqref{CTLSCC_eq:technoname3}--\eqref{CTLSCC_eq:technoname4} for $ B_{1}^{\D} $ ($  B_{12}^{\Re}, \, B_{12}^{\Im} \in  \mathcal{L}(i H, \, iV)  $) and $ B_{3}^{\D} $ ($ B_{23}^{\Re}, \, B_{23}^{\Im} \in  \mathcal{L}(i H, \, iV) $).  

        \item $ \displaystyle  \frac{v_{13}}{v_{12} v_{23}} \in \mathbb{R} $ and $ \displaystyle \frac{v_{13}}{v_{12} v_{23}}  \left(|v_{23}|^{2} - |v_{12}|^{2} \right) = v_{3} - v_{1} $: \label{CTLSCC_enum:3}
        \vspace{0.1cm}
        
        This case implies that
        \begin{equation}
            L_{3} =  \frac{v_{13}}{v_{12} v_{23}} L_{4} + c(i H - i \mathbb{I}),
        \end{equation}
        where 
        \begin{equation}
            c =  \frac{v_{13}}{v_{12} v_{23}}  |v_{12}|^2-v_1 =  \frac{v_{13}}{v_{12} v_{23}} |v_{23}|^2 - v_{3}.
        \end{equation}
        Therefore, the dynamical Lie algebra can be considered as the one generated by the elements $ i H, \ L_{1}, \, L_{2}, \, L_{4} $ (note that $ iV $ is excluded as a linear combination of $ L_{1}, \, L_{2} $, and $ L_{4} $  according to \eqref{CTLSCC_eq:L3}). But, as shown in Table~\ref{CTLSCC_tab:commutators}, this set is closed under the commutation operation.
            
    \begin{table}[h!]
        \centering
        \begin{tabular}{|c|c|c|c|c|c|}
            \hline
            \multicolumn{2}{|c|}{\multirow{2}{*}{$ \mathbf{[A, \, B]} $}} 
            & \multicolumn{4}{c|}{$\mathbf{B}$} \\ \cline{3-6}
            \multicolumn{2}{|c|}{} 
            & $ iH $ & $ L_{1} $ & $ L_{2} $ & \, $ L_{4} $ \\ \hline
            \multirow{4}{*}{$ \mathbf{A} $} & $ i H $ & 0 & $ -L_{2} $ & $ L_{1} $ & 0 \\ \cline{2-6}
            & $ L_{1} $ & $ L_{2} $ & 0 & $ 2 L_{4} - 2\left(|v_{12}|^{2} + |v_{23}|^{2} \right) i H $ & $ -\left(|v_{12}|^{2} + |v_{23}|^{2} \right) L_{2} $ \\ \cline{2-6}
            & $ L_{2} $ & $ -L_{1} $ & $ -2 L_{4} + 2\left(|v_{12}|^{2} + |v_{23}|^{2} \right) i H $ & 0 & $ \left(|v_{12}|^{2} + |v_{23}|^{2} \right) L_{1} $ \\ \cline{2-6}
            & $ L_{4} $ & 0 & $ \left(|v_{12}|^{2} + |v_{23}|^{2} \right) L_{2} $ & $ -\left(|v_{12}|^{2} + |v_{23}|^{2} \right) L_{1} $ & 0 \\ \hline
        \end{tabular}
        \caption{Commutators of elements from the set $ \left\{ iH, \, L_{1}, \, L_{2}, \, L_{4}\right\} $. Together with $i\mathbb I$, the linear span of these elements is closed under commutation. Hence their commutators do not generate new linearly independent elements so that $\dim\mathcal L(iH,iV)\le5$, which is insufficient for operator controllability.} \label{CTLSCC_tab:commutators}
    \end{table}

    This means that $ \dim(\mathcal{L}(iH, iV)) \leqslant 5 $, which is less than $ 9 $ (or $ 8 $ for the traceless case). Thus, the system is not controllable by the LARC. 
        
    \end{enumerate}
\end{clproof}
If it is known that the system has no controlled energy-level shifts, or, more generally, if it is known that it has no mutually distinct controlled energy-level shifts of degenerate levels, then the condition simplifies as follows.
\begin{corollary*}
    Let the system $ (H, V) $ be nonregular with three allowed transitions. Assume also that
    \begin{equation}
        v_{i} = v_{j},
    \end{equation}
    which means that the system does not have relatively nontrivial controlled energy-level shifts of the degenerate levels. Then, the system is controllable iff 
    \begin{equation}
        \arg(v_{ij}) \neq \arg(v_{ik}) + \arg(v_{kj}) + \pi m \ \text{ or } \ |v_{ik}| \neq |v_{kj}| \quad (m\in\mathbb Z).
    \end{equation}
\end{corollary*}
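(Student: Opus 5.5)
This corollary follows directly from Controllability Condition~8 of Theorem~\ref{CTLSCC_th:three_transitions_classification}: we substitute $v_i = v_j$ and simplify the two alternatives. First, condition~1 of Controllability Condition~8 is unchanged. It is exactly the phase inequality $\arg(v_{ij}) \neq \arg(v_{ik}) + \arg(v_{kj}) + \pi m$, which is one of the two alternatives in the corollary. So the system is controllable whenever this phase condition holds, and nothing more needs to be checked in that case.

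Next we treat the complementary case, where $q := v_{ij}/(v_{ik} v_{kj}) \in \mathbb{R}$. Here controllability is equivalent to condition~2 of Controllability Condition~8, namely $q\,(|v_{kj}|^{2} - |v_{ik}|^{2}) \neq v_j - v_i$. Under the assumption $v_i = v_j$ the right-hand side is zero, so the condition becomes $q\,(|v_{kj}|^{2} - |v_{ik}|^{2}) \neq 0$. Because all three transitions are allowed, $v_{ij} \neq 0$, and hence $q \neq 0$. The condition therefore reduces to $|v_{ik}| \neq |v_{kj}|$.

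Combining the two cases, the disjunction ``condition~1 or condition~2'' becomes ``phase inequality or $|v_{ik}| \neq |v_{kj}|$''. One point deserves care. When the phase inequality fails we have just shown that condition~2 is equivalent to $|v_{ik}| \neq |v_{kj}|$. When the phase inequality holds, the system is already controllable, so including $|v_{ik}| \neq |v_{kj}|$ in the disjunction changes nothing. Hence the two disjunctions are logically equivalent, and this gives the stated iff.

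There is no real obstacle in this argument. The only point that needs attention is the nonvanishing of $q$, which uses $v_{ij} \neq 0$ and the finiteness of $q$ (since $v_{ik}, v_{kj} \neq 0$). We would also remark that the case without controlled energy-level shifts is covered, since $v_i = v_j = 0$ is a special instance of the assumption.
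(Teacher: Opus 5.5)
Your proposal is correct and matches the paper, which obtains this corollary simply by substituting $v_i = v_j$ into Controllability Condition~8 and using $v_{ij}\neq 0$ (so the ratio $v_{ij}/(v_{ik}v_{kj})$ is nonzero) to reduce condition~2 to $|v_{ik}|\neq|v_{kj}|$. Your case split on whether that ratio is real is harmless but unnecessary: with $v_i=v_j$, a nonzero complex number times the real number $|v_{kj}|^2-|v_{ik}|^2$ is nonzero exactly when that real number is nonzero, so condition~2 is equivalent to $|v_{ik}|\neq|v_{kj}|$ in every case.
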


\subsection{Role of the assumptions in Theorem~\ref{CTLSCC_th:shifts} on adding controlled energy-level shifts}\label{CTLSCC_subsec:highlight}

Here we present two counterexamples of three-level systems showing that neither of the two assumptions of Theorem~\ref{CTLSCC_th:shifts} can be omitted in general. Recall that this theorem declares preservation of controllability when adding controlled energy-level shifts to a controllable system~if
\begin{enumerate}
    \item The new dynamical Lie algebra $ \mathcal{L}(iH, \, iV') $ contains the linear combination $ \displaystyle \sum\limits_{k \in K}  v_{k} B_{k}^{\D} $ of basis elements $ B_{k}^{\D} $ corresponding to the original controlled energy-level shifts;
    \item The system is $ V $--regular.
\end{enumerate}
One counterexample demonstrates the role of the Lie-algebra inclusion condition for the linear combination representing the original controlled energy-level shifts, while the other demonstrates the role of $ V $--regularity.

First, consider a family of regular but not strongly regular systems ($ E_{3} = E_{2} + \Delta{E} = E_{1}+2 \Delta E $) with the following interaction Hamiltonians $ V $ and $ V' $   
\begin{equation}
    V = \begin{pmatrix}
        0 & e^{i\varphi_{12}} & 0 \\
        e^{-i\varphi_{12}} & 0 & e^{i\varphi_{23}} \\
        0 & e^{-i\varphi_{23}} & v
    \end{pmatrix} \quad \to \quad V' = \begin{pmatrix}
        -v & e^{i\varphi_{12}} & 0 \\
        e^{-i\varphi_{12}} & 0 & e^{i\varphi_{23}} \\
        0 & e^{-i\varphi_{23}} & v
    \end{pmatrix},
\end{equation}
where $ \varphi_{ij} \in \mathbb{R}, \, v \in \mathbb{R} \setminus \left\{0\right\}$, and $ \to $ indicates that one interaction Hamiltonian can be constructed from the other by adding the corresponding controlled energy-level shift. Controllability Condition \ref{CTLSCC_th:twonotstrong2} shows that the system with interaction Hamiltonian $ V $ is controllable: although
\begin{equation}
     |v_{12}| = |v_{23}| = 1,
\end{equation}
we have
\begin{equation}
     (v_{2} - v_{1}) = 0 \neq v = (v_{3} - v_{2}).
\end{equation}
Since the system is regular, the only remaining assumption of Theorem~\ref{CTLSCC_th:shifts} is that the basis element $ B_{3}^{\D}$ associated with the original controlled energy-level shift belongs to $ \mathcal{L}(iH, iV') $. However, the same three-level condition shows that this system is not controllable because $ (v'_{2} - v'_{1}) = (v'_{3} - v'_{2}) = v $. Thus, this requirement is the only one that is not satisfied, and it destroys controllability. 

Next, consider a family of nonregular systems ($ E_{1} = E_{3} \neq E_{2} $) with
\begin{equation}
    \begin{split}
        & V = \begin{pmatrix}
        0 & r_{12} e^{i \varphi_{12}} & r_{12} r_{23} e^{i (\varphi_{12} + \varphi_{23})} \\
        r_{12} e^{-i \varphi_{12}} & 0 & r_{23} e^{i \varphi_{23}} \\
        r_{12} r_{23} e^{-i (\varphi_{12} + \varphi_{23})} & r_{23} e^{-i \varphi_{23}} & 0
    \end{pmatrix} \quad \to \\[0.2cm]
    \to \quad &V' = \begin{pmatrix}
        0 & r_{12} e^{i \varphi_{12}} & r_{12} r_{23} e^{i (\varphi_{12} + \varphi_{23})} \\
        r_{12} e^{-i \varphi_{12}} & 0 & r_{23} e^{i \varphi_{23}}\\
        r_{12} r_{23} e^{-i (\varphi_{12} + \varphi_{23})} & r_{23} e^{-i \varphi_{23}} & r_{23}^{2} - r_{12}^{2}
    \end{pmatrix},
    \end{split}
\end{equation}
where $ \varphi_{ij} \in \mathbb{R}$, $ r_{12}, r_{23} > 0 $, and $ r_{12} \neq r_{23} $. The initial system is controllable by Controllability Condition \ref{CTLSCC_th:threenotregular} as
\begin{equation}\label{CTLSCC_eq:technoname9}
    \frac{v_{13}}{v_{12} v_{23}} \cdot \left(|v_{23}|^2 - |v_{12}|^{2}\right) = r_{23}^{2} - r_{12}^{2} \neq 0 = v_{3} - v_{1},
\end{equation}
and it also does not have any controlled energy-level shifts. Therefore, $ V $--regularity is the only condition to be met for the system with interaction Hamiltonian $ V' $ to be controllable. However, it is not controllable since the left-hand side of \eqref{CTLSCC_eq:technoname9} remains the same for $ v'_{ij} $ while
\begin{equation}
    v'_{3} - v'_{1} = r_{23}^{2} - r_{12}^{2}.
\end{equation}
This shows that $ V $--regularity is also crucial for the considered sufficient condition.

\subsection{Complete classification of operator controllability for closed three-level quantum systems}\label{CTLSCC_subsec:finalclassification}
In this subsection, we combine the controllability criteria obtained for various three-level systems into a general complete classification, which is summarized for non-trivial three-level systems in Table \ref{CTLSCC_tab:result} and Figures~\ref{CTLSCC_fig:results_green},~\ref{CTLSCC_fig:results_yellow}.

\begin{figure}[h!]
\centering
    \includegraphics[width=\linewidth]{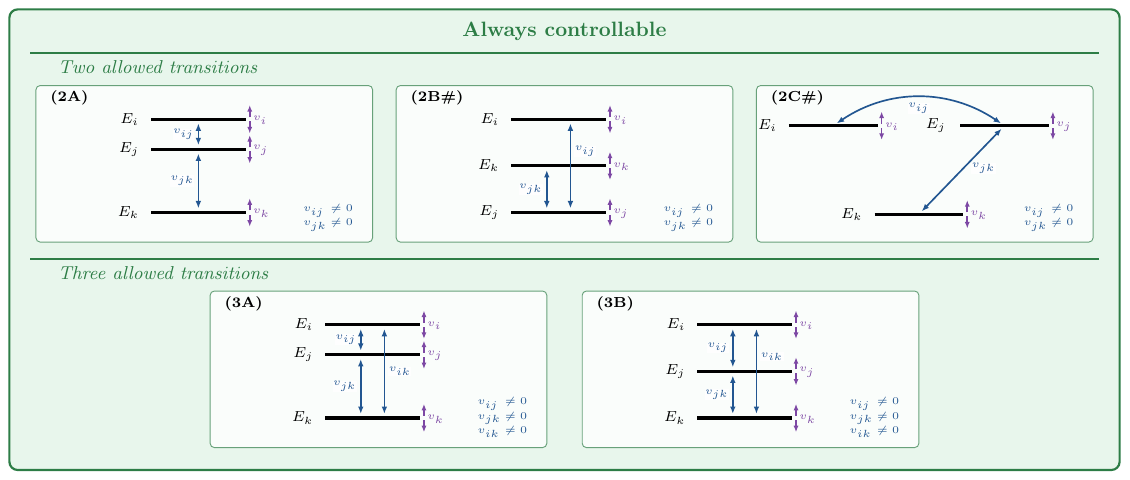}
    \caption{Classes of three-level systems that are operator controllable without additional restrictions on the interaction Hamiltonian matrix elements. The horizontal black lines represent energy levels, blue arrows indicate allowed population-transfer pathways, and short purple arrows indicate possible controlled energy-level shifts.}
    \vspace{0.2cm}
    \label{CTLSCC_fig:results_green}
\end{figure}

\begin{figure}[h!]
\centering
    \includegraphics[width=\linewidth]{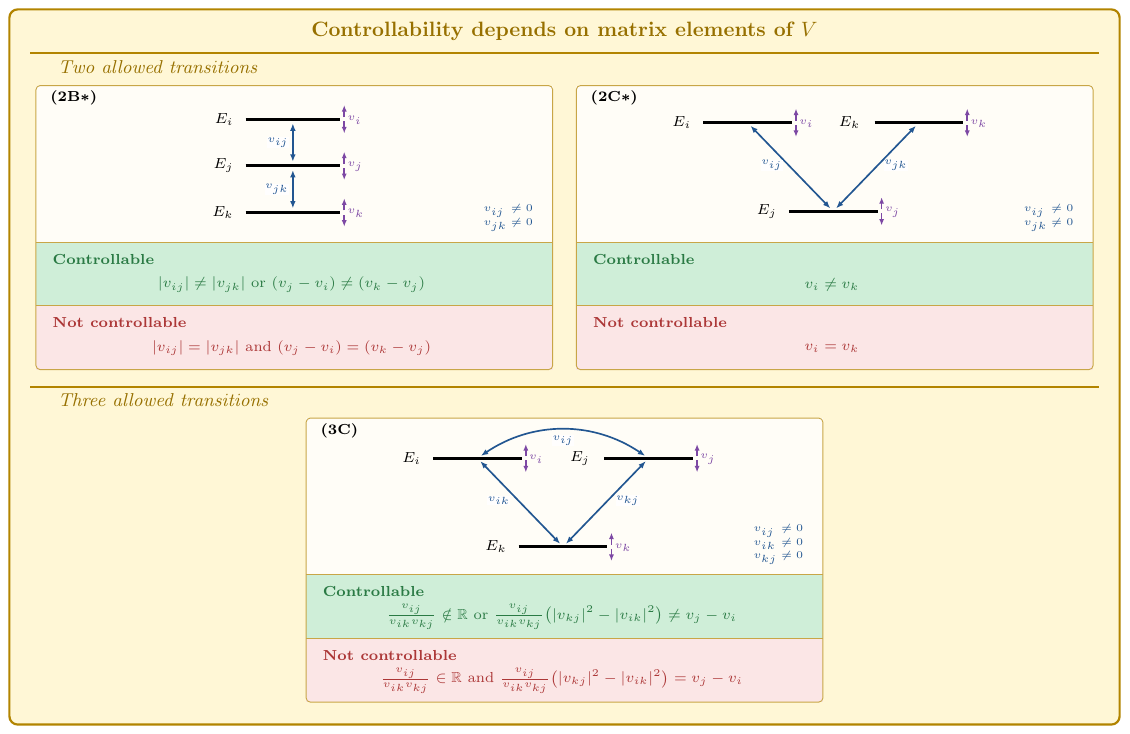}
    \caption{Classes of three-level systems whose operator controllability depends on the interaction Hamiltonian matrix elements. Each displayed system is operator controllable if and only if the condition shown in green box below its diagram is satisfied; otherwise, it is not controllable.}
    \label{CTLSCC_fig:results_yellow}
\end{figure}

\begin{table}[h!]
    \small
    \centering
    \begin{tabular}{|c|c|c|}
    \hline
    & & \\
    \multirow{2}{*}{\textbf{Three-level systems}} & \textbf{Two allowed transitions (2)} & \textbf{Three allowed transitions (3)} \\
    & $v_{ik} = 0, \,  v_{ij}, v_{jk} \neq 0 $ &  $ v_{ij}, \, v_{jk}, \, v_{ik} \neq 0 $ \\
    & & \\
    \hline
    & & \\
    \textbf{Strongly regular (A)} & \multirow{2}{*}{Controllable} & \multirow{2}{*}{Controllable} \\ 
    $ |\Delta E_{ij}| \neq |\Delta E_{jk}| \neq |\Delta E_{ik}| \neq |\Delta E_{ij}| $ & & \\
    & & \\
    \hline
    & & \\
    & Controllable for $ \Delta E_{ij} = \Delta E_{jk} $ \textbf{(\textasteriskcentered)}  & \\
    \textbf{Regular but not strongly reg. (B)} & iff $ |v_{ij}| \neq |v_{jk}| $ or  & \\
    $ \Delta E_{ij} = \Delta E_{jk} $ or $ \Delta E_{ik} = \Delta E_{kj} $, & $ (v_{j} - v_{i}) \neq (v_{k} - v_{j}) $ &  Controllable  \\
    and $ E_{i} \neq E_{j} \neq E_{k} \neq E_{i} $  &  &  \\
    &  Controllable for $ \Delta E_{ik} = \Delta E_{kj}$ \textbf{(\#)} & \\  
    & & \\
    \hline
    &  & \\
    \textbf{Nonregular (C)} & Controllable for $ E_{i} = E_{k} $ \textbf{(\textasteriskcentered)} & Controllable iff \\
    $ E_{i} = E_{j} $ or $ E_{i} = E_{k} $ for two trans. &  iff $ v_{i} \neq v_{k} $;  & $ \arg(v_{ij}) \neq \arg(v_{ik}) + \arg(v_{kj}) + \pi m $ or \\
    $ E_{i} = E_{j} $ for three trans. &  Controllable for $ E_{i} = E_{j} $ \textbf{(\#)} & $ (v_{ij}/v_{ik} v_{kj}) \, (|v_{kj}|^{2} - |v_{ik}|^{2}) \neq (v_{j} - v_{i}) $ \\
    & & \\
    \hline
    \end{tabular}
    \caption{Complete operator controllability classification of non-trivial three-level closed quantum systems summarizing Controllability Conditions \ref{CTLSCC_th:twostrong}--\ref{CTLSCC_th:threenotregular}. Systems with $H=c\mathbb I$ or with a disconnected structural graph $\mathcal G_V$ are not controllable and are not included in the table. For every remaining case, the displayed statement gives necessary and sufficient conditions for operator controllability. Indices $ i, j$, and $ k $ are pairwise distinct and label the system energy levels; $ m\in\mathbb Z $ in the criterion for nonregular systems with three allowed transitions. We use $v_{ji}=v_{ij}^{*}$ as a notational convention for the corresponding matrix elements. The symbols $*$ and $\#$ denote, respectively, the presence and absence of level-spacing-degenerate transitions among the two allowed transitions.
    \vspace{0.35cm}
    } \label{CTLSCC_tab:result}
\end{table}

Systems with $H=c\mathbb I$ or with a disconnected structural graph $\mathcal G_V$ are not controllable and are not included in the table. All strongly regular systems, as well as regular but not strongly regular systems with three allowed transitions, are completely controllable without any restrictions on the interaction Hamiltonian. Controllability of regular but not strongly regular systems with only two allowed transitions depends on their structure: they are controllable precisely if these transitions are not level-spacing degenerate or if they are degenerate and either $|v_{ij}|\ne|v_{jk}|$  or $(v_j-v_i)\ne(v_k-v_j)$. Nonregular systems with two allowed transitions without level-spacing-degenerate transitions are controllable, whereas nonregular systems with such transitions are controllable precisely when $v_i\ne v_k$, where $E_i=E_k$. Finally, nonregular systems with three allowed transitions are controllable precisely when either $\arg(v_{ij})\ne\arg(v_{ik})+\arg(v_{kj})+\pi m$ for $m\in\mathbb Z$, or
\[
\frac{v_{ij}}{v_{ik}v_{kj}}\left(|v_{kj}|^2-|v_{ik}|^2\right)\ne v_j-v_i.
\]Together with the immediate noncontrollability of systems with $H=c\mathbb I$ or a disconnected structural graph $\mathcal G_V$, these statements constitute the complete operator-controllability classification of closed three-level systems driven by a single coherent control. This classification covers all combinations of energy-level degeneracies, level-spacing degeneracies, two or three allowed transitions, as well as the presence or absence of controlled energy-level shifts, and gives necessary and sufficient conditions for operator controllability in each case.

\section{Conclusions}\label{CTLSCC_sec:conclusion}
This article investigates operator controllability of finite-dimensional closed quantum systems driven by a single coherent control. Previous work has established general Lie-algebraic controllability criteria and has studied various specific classes of three-level systems. Building on these results, we derive a complete classification of operator controllability for three-level systems in terms of the matrix elements of their free and interaction Hamiltonians and the structural features encoded by these elements.

We present Theorem~\ref{CTLSCC_th:shifts}, which provides a sufficient condition for preserving controllability when controlled energy-level shifts are added to an initially controllable system. We also establish Theorem~\ref{CTLSCC_th:algebrastatement}, which shows the inclusion of $ \mathfrak{su}(N) $ in the dynamical Lie algebra using a graph constructed with special skew-Hermitian basis elements belonging to that algebra.

Using these theorems together with established general criteria, Theorem~\ref{CTLSCC_th:two_transitions_classification} with Controllability Conditions~\ref{CTLSCC_th:twostrong}--\ref{CTLSCC_th:twonotregular2} extends previous results for systems with two allowed energy-level transitions to include controlled energy-level shifts. Theorem~\ref{CTLSCC_th:three_transitions_classification} with Controllability Conditions~\ref{CTLSCC_th:threestrong}, \ref{CTLSCC_th:threenotstrong}, and~\ref{CTLSCC_th:threenotregular} provides the corresponding complete analysis for systems with three allowed transitions. The resulting classification is summarized in Table~\ref{CTLSCC_tab:result} and Figures \ref{CTLSCC_fig:results_green}, \ref{CTLSCC_fig:results_yellow}.

\section*{Acknowledgments}
This work was supported by the Russian Science Foundation under grant no. 22-11-00330-P, \url{https://rscf.ru/project/22-11-00330/}.

\section*{Data availability statement}
All data that support the findings of this study are included within the article.

\end{document}